\newcommand{\macro}[2]{ \providecommand{#1}{{\ensuremath{#2}}\xspace}}
\macro{\algo}{\mathcal{E}}
\macro{\A}{\mathcal{A}}
\macro{\cP}{\mathcal{P}}
\macro{\area}{\mathcal{A}}
\macro{\bigdot}{\bullet}
\macro{\boul}{B_{\triangles}}
\macro{\by}{\mathcal{Y}}
\macro{\CCC}{\mathcal{CIR}}
\macro{\Cn}{\mathcal{C}_{\nC}}
\macro{\cate}{\textsc{SameDir}}
\macro{\cir}{\texttt{Cluster}}
\macro{\comp}{\texttt{t}}
\macro{\dup}{dup}
\newcommand{\dest}[3]{\textsc{dest}_{#1}(#2,\lambda(#3))}
\newcommand{\destc}{\textsc{dest}}
\newcommand{\destjc}[3]{\textsc{dest}_{#1}(#2,#3)}
\macro{\deltaM}{\delta^{\mapG}}
\macro{\UpdateStructures}{\texttt{updateStructures}}
\macro{\ProcTransitiveClosure}{\texttt{computeTransitionalClosure}}
\macro{\UpdateMap}{\texttt{updateMap}}
\macro{\IdentifyComponents}{\texttt{identifyNewComponents}}
\macro{\equi}{\textsc{V-edges}}
\macro{\eqcl}{\mathcal{C}}
\macro{\equiB}{\,\--_B\,}
\macro{\Gu}{\hat{G}}
\macro{\Ku}{\hat{K}}
\macro{\qcc}{\Phi}
\macro{\NT}{\mathcal{NT}}
\macro{\newcomp}{\mathcal{C}}
\macro{\Balls}{\mathcal{B}}
\macro{\Neig}{N^{\triangle}}
\macro{\FNT}{\mathcal{FC}}
\macro{\F}{\mathcal{F}}
\macro{\gfam}{\mathcal F}
\macro{\G}{\mathcal{G}}
\macro{\getBino}{\texttt{getBino}}
\macro{\Gport}{\mathcal{G}^{\delta}}
\macro{\Gu}{\hat{G}}
\macro{\hmax}{h_{max}}
\macro{\Cmax}{V(c)^{\hmax}}
\macro{\Cpmax}{V(c')^{\hmax}}
\macro{\cyclecomp}{\mathcal{C}}
\macro{\INF}{\mathcal{IC}}
\macro{\IntCond}{\textsc{int}}
\macro{\labels}{labels}
\macro{\lab}{\lambda}
\macro{\listC}{\textsc{ListComp}}
\newcommand{\lnivsup}[2]{
  \Big(#1,\big(#2\big)\Big)
}
 \macro{\mapCCC}{\cir}
\macro{\mapG}{\textsc{Map}}
\macro{\seen}{\textit{seen}}
\macro{\expl}{\textit{expl}}
\macro{\namemaps}{\labels}
\macro{\n}{\texttt{n}}
\macro{\w}{\texttt{w}}
\macro{\m}{\texttt{m}}
\macro{\nC}{\texttt{n}_{\texttt{C}}}
\macro{\seen}{{Seen}}
\macro{\nV}{\texttt{n}_{\texttt{V}}}
\macro{\p}{\texttt{p}}
\macro{\q}{\texttt{q}}
\macro{\N}{\mathbb{N}}
\macro{\cluster}{\texttt{C}} 
\macro{\idcomp}{\texttt{n}_\texttt{C}} 
\macro{\intersect}{\texttt{intersect}} 
\macro{\neigh}{\texttt{neigh}}
\macro{\borderM}{\partial \mapG}
\macro{\border}{\texttt{borders}}
\macro{\cloequiv}{\equiv^*}
\macro{\Requiv}{\mathcal{R}_{\equiv}}
\macro{\ncloequiv}{{\not\equiv}^*}
\newcommand{\prevert}{\texttt{PRE-VERT}}
\macro{\quotient}{\sfrac{\prevert}{\cloequiv}}
\newcommand{\newvert}[1]{\overline{[(#1)]}}
\macro{\vis}{\texttt{Vis}}
\macro{\Next}{next}
\macro{\nivsup}{\texttt{Hor}}
\macro{\nspcc}{\mathcal{I}_{ST}}
\macro{\nstV}{\mathcal{N}_{st}}
\macro{\nbfs}{\mathcal{N}_{bfs}}
\macro{\nCT}{\#CompTrav}
\macro{\no}{\textsc{no}}
\macro{\ord}{\lessdot}
\newcommand{\parag}[1]{\vspace{-0.18cm}\paragraph{\textbf{#1}.}}
\macro{\parc}{\textsc{Path}}
\macro{\r}{\texttt{r}}
\macro{\path}{\textsc{Path}}
\macro{\s}{\texttt{s}}
\macro{\pn}{\mathcal{P}}
\newcommand{\port}[2]{\delta_{#1}(#2)}
\macro{\pred}{pred}
\macro{\rayon}{radius}
\macro{\rec}{\mathcal R}
\macro{\ringTF}{\mathcal{R}^{\geq 4}}
\macro{\SC}{\mathcal{SC}}
\macro{\Stree}{\mathcal{ST}}
\macro{\spac}{\mathbb{S}}
\macro{\stac}{\textsc{Stack}}
\macro{\ske}{\mathcal{G}^1}
\macro{\Spe}{\mathcal{S}}
\macro{\TF}{triangle-free}
\newcommand{\tild}[1]{\widetilde{#1}}
\macro{\tu}{\tild{u}}
\macro{\tv}{\tild{v}}
\macro{\tts}{\tild{s}}
\macro{\tx}{\tild{x}}
\macro{\tw}{\tild{w}}
\macro{\Topo}{\mathbb{T}}
\macro{\tree}{\mathcal{T}}
\macro{\treeC}{\mathcal{T}_{\mathcal{C}}}
\macro{\TCond}{\textsc{tri}}
\macro{\view}{\mathcal{T}}
\macro{\haut}{\shortuparrow}
\macro{\Xu}{\hat{\X}}
\macro{\X}{\mathcal{K}}
\macro{\XG}{\X(G)}
\macro{\yes}{\textsc{yes}}
\theoremstyle{plain}
\newcounter{cthm}
\newtheorem{thm}[cthm]{Theorem}
\newtheorem{lem}{Lemma}[section]
\newtheorem{rem}[lem]{Remark}
\newtheorem{prop}[lem]{Proposition}
\newtheorem{cor}[lem]{Corollary}
\newtheorem{mydef}[lem]{Definition}
\institute{LIF, Université Aix-Marseille and CNRS, FRANCE}
\title{Using Binoculars for Fast Exploration and Map Construction 
in Chordal Graphs and Extensions 
}
\author{J\'er\'emie Chalopin, Emmanuel Godard and Antoine Naudin
%\thanks{Work partially supported by Macaron project (ANR JCJC 13-JS02-0002-01)}
}
\begin{document}

\maketitle
\pagestyle{plain}
\setcounter{page}{1}
\begin{abstract}

We investigate the exploration and mapping of anonymous graphs by a mobile agent.  It is
long known that, without global information about the graph,
it is not possible to make the agent halt after the exploration
except if the graph is a tree.  
We therefore endow the agent
with \emph{binoculars}, a sensing device that can show the local
structure of the environment at a constant distance of the agent's
current location and investigate networks that can be efficiently
explored in this setting.

In the case of trees, the exploration without binoculars is fast
(i.e. using a DFS traversal of the graph, there is a number of moves
linear in the number of nodes).
We consider here the family of Weetman graphs that is a generalization of the 
standard family of chordal graphs and present a new
deterministic algorithm that realizes Exploration of any Weetman graph, without
knowledge of size or diameter and for any port
numbering.
The number of moves is linear in the number of nodes, despite the fact
that Weetman graphs are not sparse, some having a number of edges that is
quadratic in the number of nodes.

At the end of the Exploration, the agent has also computed a map of
the anonymous graph.

{\bf Keywords:} Mobile Agent, Graph Exploration, Map Construction, Anonymous Graphs, 
 Linear Time, Chordal Graphs, Weetman graphs

\medskip
{\bf Eligible for the Best Student Paper Award}
\end{abstract}

\section{Introduction}

Mobile agents are computational units that can progress autonomously
from place to place within an environment, interacting with the
environment at each node that it is located on. These can be hardware
robots moving in a physical world or software robots.
 Such software robots
(sometimes called bots, or agents) are already prevalent in the
Internet, and are used for performing a variety of tasks such as
collecting information or negotiating a business deal.
More generally, when the data is physically
dispersed, it can be sometimes beneficial to move the computation to
the data, instead of moving all the data to the entity performing the
computation. The paradigm of mobile agent computing / distributed
robotics is based on this idea.  
As underlined in \cite{Das_beatcs}, the use of mobile agents 
has been advocated for numerous reasons
such as robustness against network disruptions, improving
the latency and reducing network load, providing more
autonomy and reducing the design complexity, 
and so on (see e.g. \cite{7mobile}).

For many distributed problems with mobile agents, exploring, that is
visiting every location of the whole environment, is an important
prerequisite. In its thorough exposition about Exploration by mobile
agents \cite{Das_beatcs}, S. Das presents numerous variations of the
problem. 
In particular, it can be noted that,
given some global information about the environment
(like its size or a bound on the diameter), it is always possible to
explore, even in environments where there is no
local information that enables to know, arriving on a node, whether it
has already been visited (e.g. anonymous networks).  If no global
information is given to the agent, then the only way to perform a
network traversal is to use an \emph{unlimited} traversal
(e.g. with a classical BFS or Universal Exploration Sequences
\cite{AKLLR79,uxs,R08} with increasing parameters).
This infinite process is sometimes called
\emph{Perpetual Exploration} when the agent visits
infinitely many times every node. 
Perpetual Exploration has application mainly to
security and safety when the mobile agents are a way to regularly check that the
environment is safe.  
But it is important to note that in the case where no global information is
available, it is impossible to always detect when the
Exploration has been completed. This is problematic when one would
like to use the Exploration algorithm composed with another
distributed algorithm.
In this note, we focus on fast Exploration with termination. It is known
that in general anonymous networks, the only topology that enables to
stop after the exploration is the tree-topology.  From standard
covering and lifting techniques, it is possible to see that exploring
with termination a (small) cycle would lead to halt before a complete
exploration in huge cycles.
Moreover, using a simple DFS traversal, Exploration on trees has cover
time that is linear in the number of nodes.

We have shown in \cite{CGN15} that it is possible to 
explore, with full stop, non-tree topologies without global
information using some local information.
The information that is provided can be informally described as giving
\emph{binoculars} to the agent. This constant range sensor enables the
agent to ``see'' the graph (with port
numbers) that is induced by the
adjacent nodes of its current location. 
See Section~\ref{sect:def} for a formal definition.

Using binoculars is a quite natural enhancement
for mobile robots.
In some sense,
we are trading some a priori global information (that might be difficult to
maintain efficiently) for some local information that
the agent can \emph{autonomously} and \emph{dynamically} acquire.

In  \cite{CGN15}, a complete characterization of
which networks can be explored with binoculars is given and the
exploration time is proven to be not practicable for the whole family
of networks that can be explored with binoculars.
Here we focus on families that can be explored in a fast way, 
typically in a time linear in the number of nodes.
  
% Would it be possible to efficiently explore, with full stop, non-tree topologies
% without global information? We show here that it is possible to
% explore a larger set of topologies while only providing the agent with
% some local information.

Chordal graphs are tree-like graphs where ``leafs'' are so-called
simplicial vertices, {\it i.e.} vertices whose neighbourhood is a clique.
Using binoculars, it is possible to locally detect such simplicial
vertices. But how to leverage such detection to get an exploration
algorithm in anonymous graphs is not straightforward since  
it is not possible to mark nodes.
% or recognized that a node has already been discovered.

\parag{Our Results}

We present an algorithm that efficiently explores all chordal graphs in a
linear number of moves by a mobile agent using binoculars.
The main contribution is
that the exploration is fast even if the agent does not know the size
or the diameter (or bounds).
The algorithm actually leverages properties of chordal graphs that are verified
in a larger class of graphs: the Weetman graphs. 
This family has been introduced by Weetman \cite{weetman_locally_94}
and can be defined with metric local conditions (see later). 
It contains the family of Johnson graphs.

Using binoculars, we therefore show it is possible to explore and map
with halt dense graphs (having a number of edges quadratic in the number of
nodes) in $O(n)$ moves, for any port numbering. 

 %
 % In this note, we improve previous known results by exploring a class
 % of graphs, called Weetman, with a number of moves linear in the
 % size of the network.
 %  Weetman family contains more than Trees. It contains for exemple
 %   Triangulated Planar graphs, chordal graphs, Jordan
 %  graphs, $\dots$ .
 %  %As caracterised in \cite{CGN15Bino}, Weetman graphs are simply connected. 
 %  The exploration is based on the local properties defining
 %  a Weetman graph.  We will see that endowed with binoculars permits,
 %  we are now able to compute a map ''without error'' of the graph
 %  explored. 
 %  We also show that the proposed Algorithm explores more than Weetman
 %  graphs. That is it explores some graphs having a finite universal
 %  covering (not trivial).
 %  For this class of graphs, the complexity is not linear (enhancing
 %  results in \cite{CGN15Bino}) in the of the explored graph, it is  quadratic 

\vspace{-0.2cm}
\parag{Related works}
To the best of our knowledge, efficient Exploration 
using binoculars has never been considered for mobile agent on graphs.
When the agent can only see the
label and the degree of its current location, it is well-known that
any Exploration algorithm can only halt on trees and a standard DFS
algorithm enables to explore any tree in $O(n)$ moves. Gasieniec et
al.~ \cite{explotree} presented an algorithm that can explore any tree
with a memory of size $O(\log n)$. For general anonymous graphs,
Exploration with halt has mostly been investigated assuming at least
some global bounds, in the goal of optimizing the move complexity. It
can be done in $O(\Delta^n)$ moves using a DFS traversal while knowing
the size $n$ when the maximum degree is $\Delta$.  This can be reduced
to $O(n^3 \Delta^2 \log n)$ using Universal Exploration
Sequences \cite{AKLLR79,uxs} that are sequences of port numbers that
an agent can sequentially follow and be assured to visit any vertex of
any graph of size at most $n$ and maximum degree at most $\Delta$.
Reingold~\cite{R08} showed that universal exploration sequences can be
constructed in logarithmic space.

Trade-offs between time and memory for exploration of anonymous tree
networks has been presented in \cite{explotree}.  
Note that in this case, the knowledge of the size is required to halt
the exploration.
For example, there is a very simple exploration algorithm for cycles
(``go through the port you are not coming from'') that needs the
knowledge of the size to be able to halt.
Here we are looking for algorithm that does not use (explicitly or
implicitly) such knowledge. 

Trading global knowledge for structural local information by designing
specific port numbering, or specific node labels that enable easy or
fast exploration of anonymous graphs have been proposed in
\cite{CFIKP05,Gasieniec_Radzik_2008,Ilcinkas_2008}.
Note that using binoculars is a local information that can be locally
maintained contrary to the schemes proposed by these papers where the
local labels are dependent of the full graph structure.

See also \cite{Das_beatcs} for a detailed discussion about Exploration using
other mobile agent models (with pebbles for examples).

%\vspace{-0.1cm} 
\section{Exploration with Binoculars}

\subsection{The Model}

\parag{Mobile Agents}
We use a standard model of mobile agents, that we now formally describe.
\pagebreak[3]
A mobile agent is a computational unit evolving in an undirected 
simple graph $G=(V,E)$ from vertex to vertex along the edges. 
A vertex can have some label attached to it. 
There is no global guarantee on the labels, in particular
vertices have no identity (anonymous/homonymous setting),
 i.e., local labels are not guaranteed to be unique.  
The vertices are endowed with a port numbering function available to
the agent in order to let it navigate within the graph. 
%\comment{R2: $\delta:N(v)\to \N$ instead of $v\to \N$}
Let $v$ be a vertex, we denote by
$\delta_v:V\to \N$,  the injective port numbering function giving a
locally unique identifier to the different adjacent nodes of $v$.
We denote by $\port{v}{w}$ the port number of $v$ leading to the vertex $w$,
i.e., corresponding to the edge $vw\in E(G)$.
We denote by $(G,\delta)$ the graph $G$ endowed with a port numbering $\delta=\{\delta_v\}_{v\in  V(G)}$.

When exploring a network, we would like to achieve it for any port
numbering.
So we consider the set of every graph endowed with a valid port
numbering function, called $\Gport$. 
By abuse of notation, since the port numbering is usually fixed, we
denote by $G$ a graph $(G,\delta)\in\Gport$.

The behaviour of an agent is cyclic: it obtains local information
(local label and port numbers), computes some values, and moves to its
next location according to its previous computation.  We also assume
that the agent can backtrack, that is the agent knows via which port
number it accessed its current location.  
%We do not assume that agents
%have access to a ''whiteboard'': local labels are read-only.  
We do not assume that the starting point of the agent (that is
called the \emph{homebase}) is marked. All nodes are a priori indistinguishable
except from the degree and the label.
We assume
that the mobile agent is a Turing machine (with unbounded local
memory).
Moreover we assume that an agent accesses its memory and computes instructions 
instantaneously.
An execution $\rho$ of an algorithm $\A$ for a mobile agent is
composed by a (possibly infinite) sequence 
of moves by the agent.
The length $|\rho|$ of an execution $\rho$ is the total number of moves.

%\comment{R2: Binoculars labelling not defined}
% \parag{Binoculars}
% Our agent can use ``binoculars'' of range 1, 
% that is, it can ``see'' the graph (with port
% numbers) that is induced by the
% adjacent nodes of its current location. 
% See Section~\ref{sect:def} for a formal definition.

% In order to reuse standard techniques and algorithms,
% we will actually only assume that
% the nodes of the graph we are exploring are labelled by this induced
% balls, that is, computationally, the mobile agent has only access to
% the label of its current location.

% So the difference with the standard model is only in the structure of
% the labels. That is the labels are
% specialized and encode some local information. It is straightforward
% to see that this encoding in labels is equivalent to the model with
% binoculars (the ``binoculars'' 
% primitive give only access to more information, it does not enable
% more moves).

\subsection{The Exploration Problem}

We consider the classical exploration Problem for a mobile agent.
An algorithm $\A$ is an exploration algorithm if for
any graph $G=(V,E)$ with binocular labelling, for any port numbering $\delta_G$,
starting from any arbitrary vertex $v_0\in V$, 
the agent visits every vertex at least once and terminates.
%\comment{R2: Discussion about ''and terminates''}
% \begin{compactitem}
% \item either 
% \item either the agent never halts.
% \footnote{a seemingly stronger definition could require that the agent performs
% perpetual exploration in this case. It is easy to see that this is
% actually equivalent for computability considerations 
% since it is always possible to
% composed in parallel (see below) a perpetual BFS to any never halting
% algorithm.}
% \mycomment{Correction jérémy a discuter car redites sur la composition}
% \end{compactitem}
% The intuition for the second requirement is to model the absence of
% global knowledge while maintaining safety of composition. 
% Since we have no access to global information, we might not be able to
% visit every node on some networks, but, in this case, 
% we do not allow the algorithm to appear as correct by terminating. 
% For an Exploration algorithm,  it is 
% not allowed to halt without a complete traversal of the graph,
% for any graph; 
% and an Exploration algorithm can therefore be safely composed with
% another algorithm without additional global information.

%\comment{Remplacer Exploration par exploration et faire un rappel en conclusion}

We say that a graph $G$ is \emph{explorable} if there exists an Exploration
algorithm that halts on $G$ starting from any point.
%\footnote{if $\forall v\in V(G)$, $\A$ executes on $G$ stops if it
%  starts in $v$}
%\mycomment{Vérifier footnote}
An algorithm \A explores a family of graphs \gfam if it is an Exploration algorithm such that
for all $G\in\gfam$, \A halts and for all $G\notin \gfam$, either \A halts and explores $G$, either \A never halts; we say that it is a universal
%\footnote{expliquer pourquoi}
exploration algorithm for \gfam.
We require the Exploration algorithm to not use any metric information
about the graph (like the size).%\comment{R2: suggestion: dont use information about the graph}

\section{Definitions and Notations} \label{sect:def}
\makeatletter{}
\subsection{Graphs} 
We always assume simple and connected graphs.  The following
definitions are standard \cite{Ros00}. Let $G$ be a graph, we denote
$V(G)$ (resp. $E(G)$) the set of vertices (resp. edges). If two
vertices $u,v\in V(G)$ are adjacent in $G$, the edge between $u$ and
$v$ is denoted by $uv$.

\parag{Loops, Paths and Cycles} 
A loop in a graph $G$ is a  sequence of
vertices $(v_0,...,v_k)\subseteq V(G)$ such that either $v_iv_{i+1}\in E(G)$, either $v_i=v_{i+1}$,
 for every $0\leq i < k$.
The length of a loop  is equal to the number of vertices composing it. 
A path $p$ in a graph $G$ is a loop $(v_0,...,v_k)$ such that $v_iv_{i+1}\in E(G)$ for every
$0\leq i < k$.
We say that the length of a path $p$, denoted by $|p|$, is the number
of edges composing it. 
We denote by $p^{-1}$ the inverted sequence of $p$. 
A path is simple if for any $i\neq j$, $v_i\neq v_j$.
A cycle is a path such that $v_0=v_k$, $k\in\N$. 
A cycle is simple if the
path $(v_0,\dots,v_{k-1})$ is simple or it is the empty path.
On a graph endowed with a port numbering, a path $p=(v_0,...,v_k)$ is 
labelled by $\lambda(p)=(\delta_{v_0}(v_{1}),\delta_{v_1}(v_2),...,\delta_{v_{k-1}}(v_k))$.
% A \emph{loop} is a sequence of vertices $(v_0,...,v_k)$ such that
% $v_0=v_k$ and $v_i=v_{i+1}$ or $v_iv_{i+1}\in E(G)$, $\forall 0\leq
% i< k$.

The distance between two vertices $v$ and $v'$ in a graph $G$ is  
denoted by $d_G(v,v')$.
It is the length of the shortest path between $v$
and $v'$ in $G$.
%\comment{R1: remove h(v) and replace by d($v_0$,v)\\ DONE}
The set $pred_{v_0}(v)=\{u\mid uv\in E(G)\land d(v_0,u)=d(v_0,v)-1\}$ is called
the set of predecessors of $v$.
% The set $succ_{v_0}(v)=\{w\mid (vw)\in E(G)\land h_{v_0}(v)=h_{v_0}(w)-1\}$ is called
% the set of successors of $v$.
We denote it by $pred(v)$ if the context permits it.

We define $N_G(v,k)$ to be the subset of vertices of $G$  
 at distance at most $k$ from the vertex $v$ in $G$.
We define $B_G(v,k)$ to be the subgraph of $G$ 
induced by  $N_G(v,k)$.

Let $p$ be a path in a graph $G$ leading from a vertex $v$ to $w$.
We define $\destc_G:V(G)\times \N^\N$ such that $\dest{G}{v}{p}=w$, that is, $\dest{G}{v}{p}$ 
is the vertex in $G$ reached by the path labelled by $\lambda(p)$ starting from  $v_0$.

\parag{Layering partition and Clusters}

A \emph{layering} of a graph $G=(V,E)$ having a distinguished vertex $v_0$ is a partition of $V$ into sphere $S^i=\{v\mid d(v_0,v)=i\}$, $\forall i=1,2,\dots$
A \emph{layering partition} of $G$ is a partition of each $S^i$ into \emph{clusters} $C^i_1,\dots,C^i_p$ such that for every two vertices $u,v\in S^i$, $u$ and $v$ belong to $C^i_j$ if and only if there is a path $p$ from $u$ to $v$ passing inside $S^i$.
%\comment{avant: passing outside $B(v_0,i-1)$, apres inside $S^i$}.
We denote by $d(v_0,C)$ the distance $h$ between vertices in $C$ and $v_0$.

We define below $\mapCCC(G)$, the graph of clusters of a graph $G$.
\begin{mydef}$ \mapCCC(G)=(V,E)$ such that 
  \begin{compactitem}
  \item $V= \{$cluster $C$ of $G\}$
  \item $E=\{CC'\mid \exists v\in V(C), \exists v'\in V(C')$ and $vv'\in E(G)\}$
  \end{compactitem}
\end{mydef}

The set of \emph{predecessors} of a cluster $C$ in $G$, denoted by $pred(C)$,
is composed by every cluster $C'$ in $G$ such that $C'C\in E(\mapCCC(G))$ and $d(v_0,C)=d(v_0,C')+1$.
Respectively, the set of \emph{successors} of $C$, denoted by $succ(C)$ is composed by every component $C'$ such that $CC'\in E(\mapCCC(G))$ and $d(v_0,C')=d(v_0,C)+1$.

%\parag{Binoculars labelling}
\parag{Binoculars}
Our agent can use ``binoculars'' of range 1, 
that is, located on a vertex $u$, it can ``see'' the induced ball $B(u,1)$ (with port
numbers) of radius 1 centered on $u$.
%\parag{Binoculars labelling}
To formalize what the agent sees with its
binoculars, 
we always assume that every graph $G$ is endowed with an 
additional vertex labelling $\nu$, called \emph{binoculars labelling} 
such that for any vertex $v\in V(G)$,  $\nu(v)$ is a graph isomorphic to $B(v,1)$ 
endowed with a port numbering $\tau$ induced from $G$.
Moreover, the agent is endowed with a primitive called $\getBino()$ permitting it 
to access to the binoculars labelling $\nu(u)$ of the vertex $u$ currently visited, 
that is, located on $u$, $\getBino()$ returns $B(u,1)$.
Note that the agent knows which is the explored vertex $u$ in $B(u,1)$.

%%%%%%%%%%%%%%%%%%%%%%%%%%%%%%%%%%%%%%%%%%%%%%%%%%%%%%%%%%%%%%%%%%%%%%%%%%%%%%%%%%%%%%%%%%%%
%%%%%%%%%%%%%%%%%%%%%%%%%% BEGIN OF DEFINITIONS %%%%%%%%%%%%%%%%%%%%%%%%%%%%%%%%%%%%%%%%%%%%%%
%%%%%%%%%%%%%%%%%%%%%%%%%%%%%%%%%%%%%%%%%%%%%%%%%%%%%%%%%%%%%%%%%%%%%%%%%%%%%%%%%%%%%%%%%%%%

\parag{Coverings}
We now present the formal definition of graph homomorphisms that capture the relation between graphs that locally look the same in our model.
A map $\varphi: V(G) \rightarrow V(H)$ from a graph $G$ to a graph $H$
is a \emph{homomorphism} from $G$ to $H$ if for every edge $uv \in
E(G)$, $\varphi(u)\varphi(v) \in E(H)$.
  A homomorphism $\varphi$ from $G$ to $H$ is a
  \emph{covering} if for every $v \in V(G)$, $\varphi_{\mid N_G(v)}$ is
  a bijection between $N_G(v)$ and $N_H(\varphi(v))$.

This standard definition is extended to labelled graphs
$(G,\delta,label)$ and $(G',$ $\delta',label')$ by adding the conditions
that $label'(\varphi(u))=label(u)$ for every $u\in V(G)$ and that
$\delta_u(v) = \delta'_{\varphi(u)}(\varphi(v))$ for every edge $uv \in
E(G)$. We have the following equivalent definition when $G$ and $G'$
are endowed with a port numbering.

\begin{prop}
\label{covering}
  Let $(G,\delta,label)$ and $(G',\delta',label')$ 
  be two labelled graphs, an homomorphism $\varphi:G\longrightarrow G'$
  is a \emph{covering} if and only if
  \begin{compactitem}
    \item for all $u\in V(G)$, $label(u)=label'(\varphi(u))$,
    \item for all $u\in V(G)$, $u$ and $\varphi(u)$ have same degree.
    \item for any $u\in V(G)$, for any $v\in
      N_G(u)$, $\delta_u(v) = \delta'_{\varphi(u)}(\varphi(v))$.
  \end{compactitem}
\end{prop}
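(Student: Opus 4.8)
The plan is to prove the two directions of the biconditional after a preliminary reduction. Observe that the label condition and the port condition appear verbatim both in the extended definition of a labelled covering and in the claimed characterization, so these parts are free. What remains to establish is that, for a homomorphism $\varphi$ that already satisfies the port condition $\delta_u(v)=\delta'_{\varphi(u)}(\varphi(v))$, the abstract requirement that $\varphi_{\mid N_G(u)}$ be a bijection onto $N_{G'}(\varphi(u))$ for every $u$ is equivalent to the concrete requirement that $u$ and $\varphi(u)$ have the same degree for every $u$. I would therefore fix an arbitrary $u\in V(G)$ and argue locally, since both conditions are stated vertex by vertex.

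The forward implication is immediate: a bijection between $N_G(u)$ and $N_{G'}(\varphi(u))$ forces these finite sets to have equal cardinality, which is exactly the equality of degrees $|N_G(u)|=|N_{G'}(\varphi(u))|$. For the converse I would assume the degree equality and recover the bijection in two steps. Since $\varphi$ is a homomorphism, every edge $uv$ maps to an edge $\varphi(u)\varphi(v)$, so $\varphi$ already sends $N_G(u)$ into $N_{G'}(\varphi(u))$. First I would prove injectivity on $N_G(u)$: if $v,w\in N_G(u)$ satisfy $\varphi(v)=\varphi(w)$, the port condition gives $\delta_u(v)=\delta'_{\varphi(u)}(\varphi(v))=\delta'_{\varphi(u)}(\varphi(w))=\delta_u(w)$, and since the local port numbering $\delta_u$ is injective by definition, this yields $v=w$.

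Surjectivity is then obtained by a counting argument rather than from the port condition directly: $\varphi(N_G(u))$ is a subset of $N_{G'}(\varphi(u))$ whose cardinality equals $|N_G(u)|$ by injectivity, and $|N_G(u)|=|N_{G'}(\varphi(u))|$ by the degree hypothesis, so an inclusion of finite sets of equal size must be an equality. Hence $\varphi_{\mid N_G(u)}$ is a bijection and, as $u$ was arbitrary, $\varphi$ is a covering.

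The one subtle point — and the step I expect to be the crux rather than a genuine obstacle — is precisely that surjectivity is \emph{not} a consequence of the port condition alone: injectivity of $\delta_u$ guarantees that distinct neighbours of $u$ receive distinct images, but without the degree equality one could not rule out a neighbour of $\varphi(u)$ being missed. This is exactly why the characterization trades the non-operational bijectivity requirement for two locally checkable conditions on degrees and port numbers, which is what makes it usable by an agent reading its binoculars.
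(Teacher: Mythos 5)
Your proposal is correct, and there is nothing in the paper to compare it against: the paper states Proposition~\ref{covering} without proof, treating it as a routine consequence of the definitions. Your argument is exactly the standard verification one would supply — injectivity of $\varphi_{\mid N_G(u)}$ from the port condition together with the injectivity of $\delta_u$, and surjectivity from the degree equality by counting inside finite neighbourhoods (the implicit local-finiteness needed for that counting step holds for all graphs considered in the paper, including the universal covers, which are locally finite) — and your closing remark correctly identifies why the degree hypothesis cannot be dropped.
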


\begin{prop}[Universal Cover]
  For any graph $G$, there exists a possibly infinite graph
  (unique up to isomorphism) denoted $\Gu$ and a covering
  $\mu:\Gu\to G$ such that, for any graph $G'$, for any
  covering $\varphi: G'\to K$, there exists a covering
  $\gamma:\Gu\to G'$ and $\varphi\circ\gamma = \mu$.
\end{prop}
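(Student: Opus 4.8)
The plan is to construct $\Gu$ explicitly as the graph of (non-backtracking) walks from a fixed basepoint, and then verify the two required properties: that the projection is a covering, and that it enjoys the stated universal (lifting) property. This mirrors the classical construction of the universal cover in topology and combinatorics, adapted here to port-numbered anonymous graphs using the labelling data fixed in Proposition~\ref{covering}.

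First I would fix a basepoint $v_0\in V(G)$ and define $V(\Gu)$ to be the set of reduced walks (walks that never immediately backtrack through the same incident edge) starting at $v_0$, where two walks are identified if one is obtained from the other by inserting or deleting a trivial back-and-forth segment. Equivalently, $V(\Gu)$ is the set of non-backtracking paths out of $v_0$. I would put an edge between two such classes when one extends the other by a single step along some edge of $G$, and I would lift the labelling by declaring the label of a walk to be $label$ of its terminal vertex in $G$, and the port numbers on each lifted edge to be the port numbers of the corresponding edge of $G$. The projection $\mu:\Gu\to G$ sends each walk class to its terminal vertex. By construction $\mu$ preserves edges, labels, degrees, and port numbers, so Proposition~\ref{covering} immediately certifies that $\mu$ is a covering: each lifted vertex has the same degree as its image because every incident edge of the terminal vertex lifts to exactly one incident edge (the unique one-step extension along that port), giving the required local bijection.

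Next I would establish the universal property. Given any covering $\varphi:G'\to G$ (I read the statement's $K$ as $G$, i.e. $G'$ covers the base $G$), I need a covering $\gamma:\Gu\to G'$ with $\varphi\circ\gamma=\mu$. The idea is path-lifting: fix a preimage $v_0'\in\varphi^{-1}(v_0)$; since $\varphi$ is a covering, every walk $w$ from $v_0$ in $G$ lifts uniquely to a walk from $v_0'$ in $G'$ (at each step the local bijection on neighbourhoods selects the unique edge with the matching port number), and I define $\gamma(w)$ to be the terminus of this lifted walk. One checks that this is well-defined on equivalence classes because trivial backtracks lift to trivial backtracks, that it respects edges, labels, and port numbers, and that it is itself a covering by appealing again to Proposition~\ref{covering}. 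The identity $\varphi\circ\gamma=\mu$ holds because $\varphi$ maps the lifted walk back onto the original walk $w$, whose terminus is $\mu(w)$.

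The main obstacle I anticipate is the well-definedness and uniqueness of path-lifting in the anonymous, port-numbered setting, together with verifying that $\gamma$ is genuinely a covering rather than merely a homomorphism. The degree condition is the delicate point: I must argue that the unique lift of a reduced walk, followed by all its one-step extensions, realizes a bijection between the neighbourhood of a vertex of $\Gu$ and its $\gamma$-image in $G'$, which rests on the covering property of $\varphi$ propagating along the lift. I would also need to address uniqueness of $\Gu$ up to isomorphism, which follows from the universal property by the standard categorical argument: two objects each satisfying the universal property admit mutually inverse comparison coverings. The remaining verifications (edge-, label-, and port-preservation) are routine once the combinatorial walk model and the lifting correspondence are set up carefully.
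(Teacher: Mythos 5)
Your proposal is correct and is precisely the classical construction: the paper states this proposition as standard background without giving a proof of its own, and your tree of reduced (non-backtracking) walks with the port-preserving terminal-vertex projection, unique path lifting along the covering $\varphi$, and the local-bijection check via Proposition~\ref{covering} is exactly the canonical argument the paper implicitly relies on, including your correct reading of the statement's typo $K$ as $G$. The one point to tighten is uniqueness: since the stated universal property does not assert uniqueness of the mediating covering, the bare categorical argument does not by itself yield mutually inverse comparison maps, but this is repaired immediately by observing that your $\Gu$ is a tree and that any covering of a tree by a connected graph is an isomorphism, so each comparison covering is an isomorphism outright.
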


It also possible to have a notion of simplicial covering.
A \emph{graph covering} $\varphi: G \to G'$ is a simplicial covering
$\varphi: V(G)\to V(G')$ such that for any vertex $v\in V(G)$, $\nu(v)=\nu(\varphi(v))$
This notion capture the indistinguishable graphs for an agent endowed with Binoculars.
We get a the following definition for the simplicial universal cove,

\begin{prop}[Simplicial Universal Cover]
  For any graph $G$, there exists a possibly infinite graph
  (unique up to isomorphism) denoted $\Gu$ and a simplicial covering
  $\mu:\Gu\to G$ such that, for any graph $G'$, for any simplicial
  covering $\varphi: G'\to K$, there exists a simplicial covering
  $\gamma:\Gu\to G'$ and $\varphi\circ\gamma = \mu$.
\end{prop}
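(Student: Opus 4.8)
The plan is to mirror the construction of the ordinary universal cover (the preceding proposition), replacing plain walks by walks taken \emph{modulo triangles}, so that the binoculars labelling is preserved. Fix a basepoint $v_0\in V(G)$ and let $W$ be the set of all walks in $G$ starting at $v_0$. On $W$ put the equivalence $\simeq$ generated by two elementary moves: backtracking cancellation, $(\dots,u,v,u,\dots)\simeq(\dots,u,\dots)$, and the triangle move, $(\dots,u,v,w,\dots)\simeq(\dots,u,w,\dots)$ whenever $\{u,v,w\}$ induces a triangle of $G$. Equivalently, $\simeq$ is homotopy rel endpoints in the $2$-dimensional complex $X_G$ obtained from $G$ by gluing a $2$-cell onto every triangle. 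I would then set $V(\Gu)=W/\!\simeq$, declare $[p]$ and $[q]$ adjacent when $q\simeq p\cdot e$ for a single edge $e$, and define $\mu([p])$ to be the endpoint of $p$; this last map is well defined since both moves preserve endpoints, and the ports are inherited from $G$.

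I would first check that $\mu$ is a \emph{simplicial} covering. That it is a homomorphism and a covering, i.e.\ a bijection between $N_{\Gu}([p])$ and $N_G(\mu([p]))$, follows exactly as in the non-simplicial case, since appending or deleting a single edge is a purely local operation. The new point is the identity $\nu([p])=\nu(\mu([p]))$: one must show that two neighbours $[p\cdot e]$ and $[p\cdot e']$ of $[p]$ are adjacent in $\Gu$ if and only if the corresponding neighbours of $\mu([p])$ are adjacent in $G$. The non-trivial direction uses the triangle move: if the endpoints of $e$ and $e'$ form a triangle with $\mu([p])$ in $G$, then $p\cdot e'$ is homotopic to $p\cdot e\cdot f$ for the third edge $f$, so $[p\cdot e]$ and $[p\cdot e']$ are indeed adjacent in $\Gu$. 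Hence $B([p],1)$ maps isomorphically (with ports) onto $B(\mu([p]),1)$, which is precisely $\nu([p])=\nu(\mu([p]))$.

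The universal property is then obtained by unique path lifting, and this is where the simplicial hypothesis is essential. Given any simplicial covering $\varphi:G'\to G$ and a vertex $v_0'\in\varphi^{-1}(v_0)$, every walk of $G$ from $v_0$ lifts uniquely to a walk of $G'$ from $v_0'$, exactly as for ordinary coverings. Setting $\gamma([p])$ to be the endpoint of the lift of $p$ defines a map $\gamma:\Gu\to G'$; the crux is its well-definedness, that is, that $p\simeq q$ forces the lifts of $p$ and $q$ to share the same endpoint. A backtracking move clearly lifts to a backtracking move, and a triangle move lifts to a triangle move precisely because $\varphi$ preserves $\nu$: a triangle of $G$ at $\varphi(x)$ lifts to a triangle of $G'$ at $x$. (For a merely ordinary covering this may fail --- the triangle can unfold --- which is exactly why the plain universal cover is not simplicial.) One then checks $\varphi\circ\gamma=\mu$ and that $\gamma$ is itself a simplicial covering, both by tracking endpoints of lifts.

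Finally, uniqueness up to isomorphism follows from the pointed version of this lifting: once basepoints are fixed the lift is unique, so any two graphs satisfying the stated property admit mutually inverse basepoint-preserving simplicial coverings, whose composites must be the identity. I expect the main obstacle to be the well-definedness of $\gamma$, i.e.\ establishing that triangle homotopies lift under $\varphi$; every remaining step is a standard covering-space argument transported to the triangle-filled complex $X_G$.
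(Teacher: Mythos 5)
Your proposal is correct and matches the intended argument: the paper itself states this proposition without proof (it is the standard universal-cover construction for the triangle-filled complex, exactly the setting of the paper's elementary homotopies and of the result imported from \cite{Lyn77}), and your quotient of walks by backtracking and triangle moves is precisely that construction, with $\mu$ sending a class to its endpoint. The one step you gloss --- that a triangle of $G$ at $\varphi(x)$ lifts to a triangle of $G'$ at $x$ --- does hold as you claim, since port preservation forces the isomorphism $\nu(x)\cong\nu(\varphi(x))$ to agree with $\varphi$ on $N(x)\cup\{x\}$, so $\varphi$ restricts to an isomorphism of unit balls and your crux (well-definedness of $\gamma$) goes through.
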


From standard distributed computability results
\cite{YKsolvable,BVanonymous,CGM12}, it is known that the structure of
the covering maps explains what can be computed or not.
So in order to investigate the structure induced by coverings of
graphs with binoculars labelling, we will investigate the structure of
simplicial coverings. 
 We call simply ``coverings'' the simplicial
coverings. 

Note that the simplicial universal cover (as a graph with
binoculars labelling) can differ from its universal cover (as a graph
without labels). Consider for example, the triangle network.
%\comment{ajouter bino labelling sur schema}
\parag{Homotopy}

We say that two loops  $c =
(v_0,v_1,\ldots,v_{i-1},v_i,v_{i+1},\ldots,,v_k)$ and $c'=
(v_0,v_1,\ldots,v_{i-1},v_{i+1},\ldots,,v_k) $ in a graph $G$ are
related by an \emph{elementary homotopy} if one of the following
conditions holds (definitions from \cite{BH}):
\begin{compactenum}[(i)]
\item[(Contracting)] $v_i = v_{i+1}$,
\item[(Backtracking)] $v_{i-1} = v_{i+1}$, 
\item[(Pushing across a 2-cell)] $v_{i-1}v_{i+1}$ is an edge of $G$. 
\end{compactenum}

Note that being related by an elementary homotopy is a reflexive
relation (we can either increase or decrease the length of the loop).
We say that two loops $c$ and $c'$ are homotopic equivalent if there
is a sequence of loops $c_1,\ldots,c_k$ such that $c_1=c$, $c_k=c'$,
and for every $1\leq i<k$, $c_i$ is related to $c_{i+1}$ by an
elementary homotopy.  A loop is $k-$contractible (for $k\in\N$) if it
can be reduced to a vertex by a sequence of $k$ elementary homotopies.
A loop is contractible if there exists $k\in\N$ such that it is
$k-$contractible.

\parag{Simple Connectivity}
A \emph{simply connected} graph is a graph where
%% whose paths with same
%% endpoints are all homotopy equivalent, i.e.,
every loop can be reduced to a vertex by a finite sequence of
elementary homotopies. 

This definition is 
the graph version of the simplicial covering defined in \cite{CGN15} for simplicial complexes.
Simply connectivity have a lot
of interesting combinatorial and topological properties.
In our proofs, we rely on
the following fundamental result below.  
Even if this results applied for simplicial complexes, 
we can prove
that it holds for graphs and simplicial graph covering as defined above.
\begin{prop}[\cite{Lyn77}]
 Let $(G,\nu)$ be a connected graph with binoculars labelling,
 then  $G$ is isomorphic to $\Gu$, the universal simplicial covering
 of $G$ if and only if $G$ is simply connected.
\end{prop}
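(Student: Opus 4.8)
The plan is to work with the explicit path-space realization of the universal simplicial cover and to transport the statement back and forth through the projection $\mu:\Gu\to G$. Fix the basepoint $v_0$ and recall the standard construction: the vertices of $\Gu$ are the homotopy classes $[p]$ of paths $p$ starting at $v_0$, the projection sends $[p]$ to its endpoint $\destc_G(v_0,\lambda(p))$, and two classes are adjacent exactly when one is obtained from the other by appending a single edge. First I would check that this graph, equipped with the binoculars labelling pulled back along $\mu$, is a simplicial covering satisfying the universal property; by the uniqueness clause of the Simplicial Universal Cover proposition it is then isomorphic to $\Gu$, so it is harmless to reason about this concrete model.

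The technical core, used in both directions, is a lifting lemma: every path of $G$ lifts uniquely along $\mu$ once its starting lift is fixed, and every elementary homotopy lifts to an elementary homotopy with matching endpoints. Unique path lifting is immediate from the fact that $\mu$ restricts to a bijection on each neighbourhood. The delicate case of homotopy lifting is the ``pushing across a $2$-cell'' move, where a vertex $v_i$ is deleted because $v_{i-1}v_{i+1}$ is an edge of a triangle through $v_i$; to lift it I must exhibit the corresponding edge $\hat v_{i-1}\hat v_{i+1}$ upstairs. This is exactly where the simplicial hypothesis is essential: since $\mu$ preserves the binoculars labelling, $B(\hat v_i,1)$ is isomorphic to $B(v_i,1)$ via the local bijection induced by $\mu$, so the triangle downstairs has a genuine triangle upstairs and the push lifts. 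I expect this step to be the main obstacle, as it is the single point where an ordinary, non-simplicial covering would fail.

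For the ``if'' direction I would assume $G$ simply connected and define $\psi:G\to\Gu$ by sending $v$ to the class of any path from $v_0$ to $v$. Simple connectivity makes this well defined, because any two such paths differ by a loop at $v_0$, which is contractible and hence leaves the homotopy class unchanged. One checks directly that $\psi$ is a label- and port-preserving homomorphism with $\mu\circ\psi=\mathrm{id}_G$; injectivity of $\psi$ follows from this identity, and surjectivity follows again from simple connectivity, since every class $[p]\in V(\Gu)$ equals $\psi$ of the endpoint of $p$. As $\mu$ is itself a homomorphism, the inverse of the bijection $\psi$ maps edges to edges as well, so $\psi$ is an isomorphism of graphs with binoculars labelling and port numbering.

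For the ``only if'' direction I would show that $\Gu$ is \emph{always} simply connected, so that $G\cong\Gu$ forces $G$ to be simply connected. By conjugating with a connecting path it suffices to contract a loop $\hat c$ of $\Gu$ based at the class $[\,]$ of the empty path; its projection $c=\mu(\hat c)$ is then a loop of $G$ at $v_0$, and in the path-space model the unique lift of $c$ starting at $[\,]$ ends at the class $[c]$. Since $\hat c$ closes up we get $[c]=[\,]$, i.e.\ $c$ is contractible in $G$; lifting a contracting sequence of elementary homotopies for $c$ through $\mu$ yields a contraction of $\hat c$. Hence every loop of $\Gu$ is contractible, which completes the argument.
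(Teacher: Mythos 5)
Your proof is correct, but it takes a different route from the paper: the paper gives no self-contained argument at all for this proposition --- it cites Lyndon's theorem for simplicial complexes \cite{Lyn77} and merely asserts that the result transfers to graphs with binoculars labelling and simplicial graph coverings. You instead rebuild the statement from scratch via the canonical path-space model (homotopy classes of paths from a basepoint, with adjacency given by appending an edge), proving unique path lifting and elementary-homotopy lifting along $\mu$, then running the two directions through that machinery. What your approach buys is precisely what the paper leaves implicit: it isolates the single place where the \emph{simplicial} hypothesis is indispensable, namely lifting the ``pushing across a 2-cell'' move, which works because a covering preserving the binoculars labelling $\nu$ restricts to isomorphisms $B(\hat v,1)\simeq B(v,1)$ and hence lifts triangles --- an ordinary graph covering (e.g.\ a long cycle covering a triangle) would fail exactly there, which is why the universal cover and the simplicial universal cover can differ, as the paper itself remarks. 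Your argument that $\Gu$ is always simply connected (project a loop, observe its lift from the empty class ends at $[c]=[\,]$, lift the contraction back up) is the standard and clean way to get the ``only if'' direction. Two routine points are glossed but unproblematic: in the ``if'' direction, passing from contractibility of the loop $p\cdot q^{-1}$ to a homotopy between $p$ and $q$ rel endpoints deserves a line (it follows since each elementary homotopy preserves the first and last vertices of the sequence, so inserting and cancelling $q^{-1}\cdot q$ and contracting the loop portion keeps the endpoints fixed); and the verification that adjacency in the path-space model is independent of representatives, together with the universal property giving the identification with $\Gu$, should be spelled out, though both are standard. With those details filled in, your write-up would in fact make the paper more self-contained than its citation-only treatment.
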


In fact, in order to check the simple connectivity of a graph $G$,
it is enough to check that all its simple cycles are contractible. 
The proof is straightforward and presented in Appendix.
%\comment{ref non defini}
We get the following Proposition,
\begin{prop}\label{prop:cycle_to_loop}
A graph $G$ is simply connected if and only if every simple cycle is
contractible.
\end{prop}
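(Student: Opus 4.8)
The plan is to prove the two directions separately, with essentially all the work in the ``if'' direction. The ``only if'' direction is immediate: if $G$ is simply connected then, by definition, every loop reduces to a vertex by finitely many elementary homotopies, and a simple cycle is in particular a loop, hence contractible. For the converse I would assume that every simple cycle of $G$ is contractible and prove, by strong induction on the length $k$, that an arbitrary loop $c=(v_0,\ldots,v_k)$ with $v_0=v_k$ is contractible. First I normalise $c$: as long as some $v_i=v_{i+1}$, I delete $v_i$ by a (Contracting) elementary homotopy, which strictly shortens $c$ while staying in its homotopy class. After finitely many such steps all consecutive vertices are adjacent, so $c$ has become a genuine cycle; if its length decreased I conclude by the induction hypothesis, otherwise no stationary step was present and I proceed with $c$ itself, now a cycle in which consecutive vertices are distinct.

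Now I distinguish two cases according to whether the underlying path $(v_0,\ldots,v_{k-1})$ is simple. If it is, then $c$ is a simple cycle and is contractible directly by hypothesis. Otherwise there are indices $i<j$ with $v_i=v_j$ and $(i,j)\neq(0,k)$, i.e.\ a repeated vertex other than the forced coincidence at the basepoint. I split $c$ at this vertex into an \emph{inner} loop $\alpha=(v_i,v_{i+1},\ldots,v_j)$ based at $v_i$ and an \emph{outer} loop $\beta=(v_0,\ldots,v_i,v_{j+1},\ldots,v_k)$ based at $v_0$, the latter obtained from $c$ by excising the block between positions $i$ and $j$. Since $(i,j)\neq(0,k)$, both $\alpha$ and $\beta$ have length at least $1$ and at most $k-1$, hence strictly less than $k$, so by the induction hypothesis each of them is contractible.

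The crux is then to recombine the two contractions into a single contraction of $c$. Here I would use that each elementary homotopy rewrites only a window of three consecutive positions of the sequence, and that a homotopy of $\alpha$ fixing its endpoints $v_i,v_j$ modifies only indices strictly between $i$ and $j$, so that its windows stay inside the block $[i,j]$ and never reach $v_{i-1}$ or $v_{j+1}$. Consequently the sequence of homotopies that contracts the standalone loop $\alpha$ to the vertex $v_i$ can be replayed verbatim \emph{in place} inside $c$, collapsing the block $(v_i,\ldots,v_j)$ to $v_i$ and turning $c$ into exactly $\beta$; contracting $\beta$ to $v_0$ then completes the contraction of $c$. The main obstacle I anticipate is precisely this bookkeeping step: one has to check carefully that based homotopies of $\alpha$ leave the surrounding context of $c$ untouched, and that the strict decrease of length at every split makes the induction well-founded, bottoming out on simple cycles where the hypothesis applies.
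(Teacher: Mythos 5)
Your proof is correct. There is, however, nothing in the paper to compare it against line by line: although the text claims the proof is ``straightforward and presented in Appendix,'' the appendix contains only Proposition~\ref{notcontractible} (contractibility is invariant under homotopy) and the proof of Theorem~\ref{SC-tree-clusters} --- the proof of Proposition~\ref{prop:cycle_to_loop} itself was omitted, so your argument supplies exactly the ``straightforward'' proof the authors waved away. Your strong induction is the natural route: normalise away stationary steps, split a non-simple closed walk at a repeated vertex $v_i=v_j$ with $(i,j)\neq(0,k)$ into an inner loop $\alpha$ and an outer loop $\beta$, both of length strictly less than $k$, and recombine. You also correctly identify and discharge the one point that genuinely needs care: since each elementary homotopy rewrites only a window of three consecutive entries and, by the paper's definition, never deletes the first or last entry of the sequence, any contraction of the standalone loop $\alpha$ fixes its endpoints, hence terminates at the basepoint $v_i$ and can be replayed verbatim in place inside $c$ (including any temporarily length-increasing moves, whose windows likewise stay strictly inside the block), turning $c$ into $\beta$. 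The induction bottoms out at simple cycles, where the hypothesis applies --- note that this includes the degenerate length-$2$ cycles $(v_0,v_1,v_0)$, which the paper's definition does count as simple and which are contractible by a single backtracking move in any case.
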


In Figure~\ref{img:SC}, we present two examples of simplicial covering
maps, $\varphi$ is from the universal cover, and $\varphi'$ shows the general property of coverings that is that the number of vertices of the bigger graph is a multiple (here the double) of the number of vertices of the smaller graph.

\begin{figure}[t]
  \caption{Simplicial Covering}
  \label{img:SC}
  \centering
  \includegraphics[height=2cm]{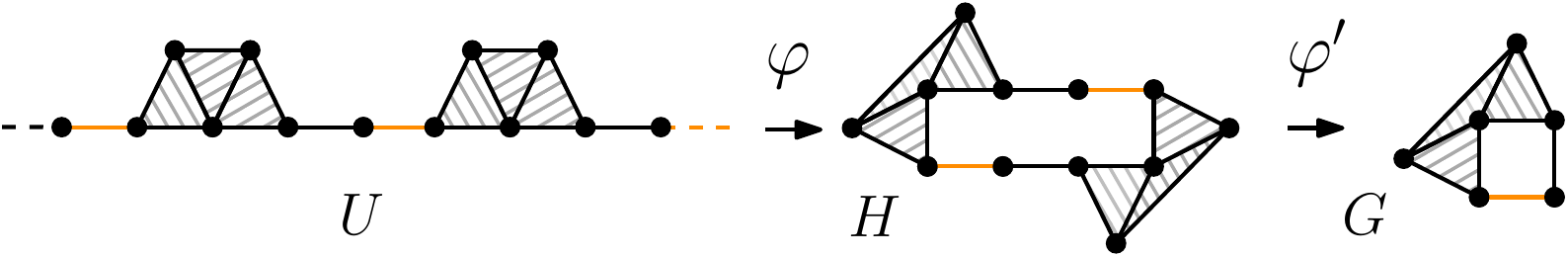}
\end{figure}
%%%%%%%%%%%%%%%%%%%%%%%%%%%%%%%%%%%%%%%%%%%%%%%%%%%%%%%%%%%%%%%%%%%%%%%%%%%%%%%%%%%%%%%%%%%%
%%%%%%%%%%%%%%%%%%%%%%%%%% END OF DEFINITIONS %%%%%%%%%%%%%%%%%%%%%%%%%%%%%%%%%%%%%%%%%%%%%%
%%%%%%%%%%%%%%%%%%%%%%%%%%%%%%%%%%%%%%%%%%%%%%%%%%%%%%%%%%%%%%%%%%%%%%%%%%%%%%%%%%%%%%%%%%%%

\parag{Weetman Graphs} We present now the family of graphs
investigated in this paper.

% \comment{R1: why WG are interesting?}
%\comment{??Doit on préciser simply connected dans Weetman??}
\begin{mydef}[Weetman Graphs \cite{weetman_locally_94}]
  A graph $G$ is Weetman if it satisfies for every vertex $v_0$ the following properties:
  \begin{compactenum}[(i)]
  \item[($\TCond$)] \textbf{Triangle Condition.} \label{TCond} 
    for every adjacent nodes $v,v'\in V(G)$ at distance $k$ from $v_0$, 
    there is a vertex $u$ at distance $k-1$ from $v_0$ such that $uv,uv'\in E(G)$.
  \item[($\IntCond$)] \textbf{Interval Condition.}  \label{IntCond}
    For every $v\in V(G)$, the subgraph induced by $\pred_{v_0}(v)$, the predecessors of $v$, is connected.
  \end{compactenum}
\end{mydef}
The family of Weetman graphs contains chordal graphs but also
non-chordal graphs like Johnson graphs \cite{TitsBuildings}.
Johnson graphs are graphs
whose vertices are subset of $k$ elements of a set with $n$ elements,
and whose edges link subsets that can be obtained from one another by
removing and adding one element (see Fig. \ref{fig:jonhson} for
instance).

They belong to the family of Simply connected graphs. %\comment{ajouter prop.}.

\section{Properties of Weetman Graphs}
\label{sec:weetmanprop}

\begin{figure}[t]
  \caption{Johnson Graph $J(5,2)$} 
  \label{fig:jonhson}
  \centering
  \includegraphics[height=3cm]{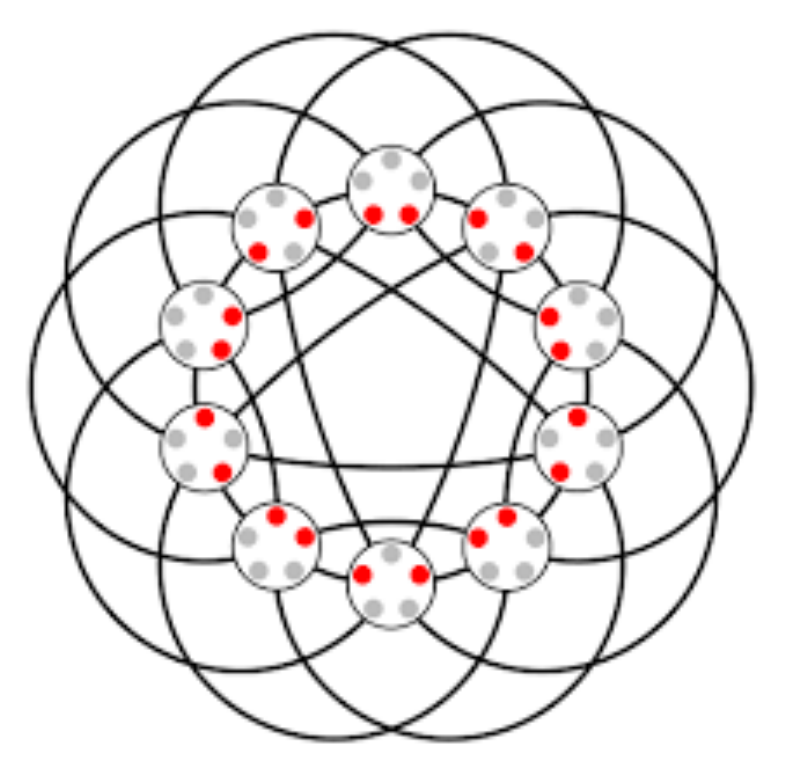}
\end{figure}

\begin{prop}\label{prop:weetman-SC}
  \begin{figure}
   \caption{Illustration of different cases of the proof of Lemma \ref{prop:weetman-SC}}
    \label{fig:proofSC}
    \centering
    \includegraphics[height=3cm]{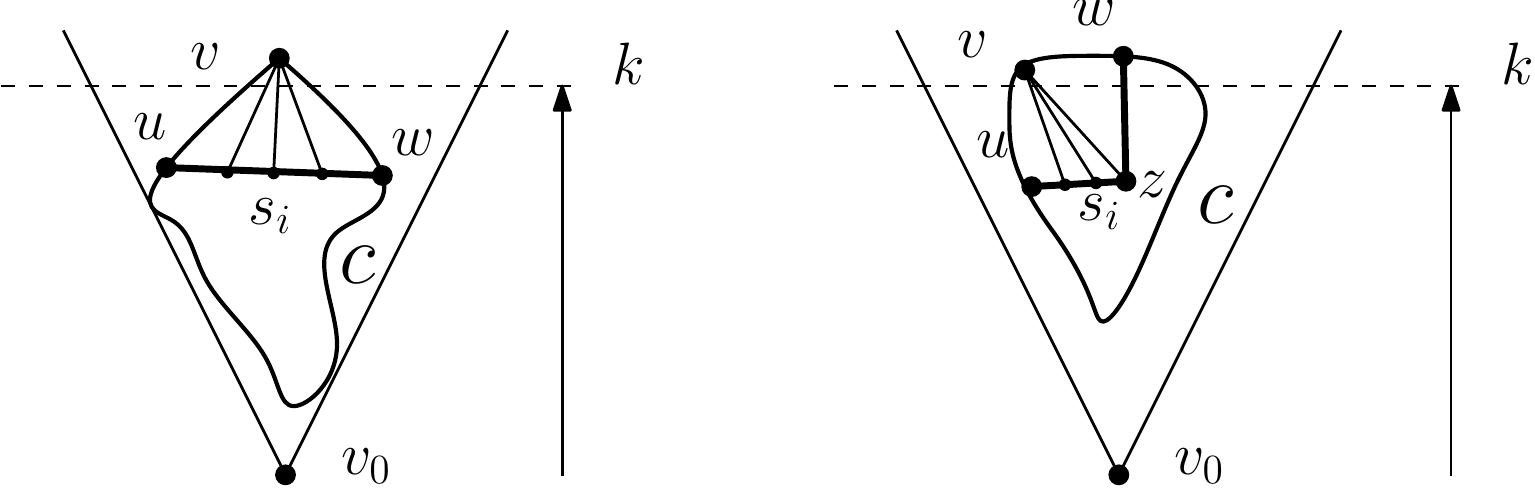}
  \end{figure}

 Weetman graphs are simply connected
\end{prop}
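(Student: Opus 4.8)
The plan is to use Proposition~\ref{prop:cycle_to_loop}: it suffices to show that every simple cycle in a Weetman graph $G$ is contractible. Fix an arbitrary basepoint $v_0$ and let $c=(w_0,w_1,\dots,w_{k-1},w_0)$ be a simple cycle. I would argue by induction on the quantity
\[
h(c)=\max_{0\le i<k} d(v_0,w_i),
\]
the maximal distance from $v_0$ attained along $c$. If $h(c)=0$ the cycle is trivial, and if $h(c)$ is small the cycle lives in a ball where contractibility can be checked directly; these form the base cases. For the inductive step the goal is to produce, from $c$, a homotopic loop whose maximal distance is strictly smaller, so that the induction hypothesis finishes the job.

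The core of the argument is a local ``pushing down'' move. Consider a maximal run $w_{i-1},w_i,\dots,w_j,w_{j+1}$ of consecutive vertices of $c$ all lying at the top distance $h=h(c)$, with its two neighbours $w_{i-1}$ and $w_{j+1}$ one step closer to $v_0$ (here I use that neighbours differ in distance by at most one, and that a maximal run is flanked by strictly-lower vertices). The first tool is the Triangle Condition $(\TCond)$: for each consecutive pair $w_\ell, w_{\ell+1}$ on the top run there is a common predecessor $u_\ell$ at distance $h-1$ with $u_\ell w_\ell, u_\ell w_{\ell+1}\in E(G)$. The second tool is the Interval Condition $(\IntCond)$: the predecessors $\pred_{v_0}(w_\ell)$ of each vertex on the run induce a connected subgraph, which lets me connect the chosen predecessors $u_{\ell-1}$ and $u_\ell$ of a single vertex $w_\ell$ by a path lying entirely at distance $h-1$. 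Splicing these pieces, I replace the top portion $w_{i-1},w_i,\dots,w_{j+1}$ of $c$ by a detour that first drops from $w_{i-1}$ to a predecessor, then runs through predecessors at level $h-1$, then climbs back up to $w_{j+1}$; each individual replacement of a length-two subpath $w_\ell, w_{\ell+1}$ through the triangle with apex $u_\ell$ is exactly a ``pushing across a 2-cell'' elementary homotopy, and each backtracking insertion/removal of a vertex is a backtracking homotopy. The result is a loop homotopic to $c$ that no longer visits level $h$ on that run.

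Applying this move to every maximal top run and then discarding any resulting backtracks yields a loop $c'$ homotopic to $c$ with $h(c')<h(c)$ (note $c'$ need no longer be simple, but homotopy does not care, and Proposition~\ref{prop:cycle_to_loop} only asks us to start from a simple cycle). The induction on $h$ then contracts $c'$, hence $c$, to a point. Since every simple cycle is contractible, Proposition~\ref{prop:cycle_to_loop} gives that $G$ is simply connected.

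The main obstacle I anticipate is the bookkeeping at the endpoints of a top run and, more delicately, the interaction between the path supplied by $(\IntCond)$ at level $h-1$ and the rest of the cycle: the connecting path of predecessors may revisit vertices or cross other parts of $c$, so one must argue carefully that after splicing and cancelling backtracks the maximal distance genuinely drops and is not merely shifted. Handling a run of length one, and the case where the two flanking vertices $w_{i-1},w_{j+1}$ coincide or are adjacent, are the fiddly sub-cases where $(\IntCond)$ is essential to guarantee the level-$(h-1)$ detour exists and is connected.
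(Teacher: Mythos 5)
Your plan is essentially the paper's own argument: the paper likewise reduces simple connectivity to contractibility of cycles, uses the Triangle Condition to supply an apex predecessor for each top-level edge and the Interval Condition to connect predecessors of a top vertex inside $S^{k-1}$, and pushes the top level down by elementary homotopies --- it merely packages the descent as a doubly-minimal counterexample (minimizing first $\max_{v\in c} d(v_0,v)$ and then $|c\cap S^k|$, eliminating one top vertex per step) where you induct on the maximal level and clear whole top runs at once. The only sub-case your run decomposition leaves implicit (and which the paper's case analysis also glosses over) is a loop lying entirely in $S^h$ with no flanking lower vertex; a single Triangle-Condition insertion creates one, after which your argument proceeds unchanged.
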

\begin{proof} 
Suppose by contradiction that there is one not contractible cycle $c$ in a Weetman graph $G$.
Among all not contractible cycle $c'$ in $G$, choose the cycle $c$ minimising $k=d(v_0,c)=max \{d(v_0,v')\mid v'\in c\}$ and minimising $|c\cap S^k|$.

We prove that there is another cycles minimising either $d(v_0,c)$, either $|c\cap S^k|$ which is not contractible in $G$, contradicting our hypothesis.

 There are two cases, if $|c\cap S^k|=1$, then there are $u,v,w\in V(c)$ such that $u,w\in S^{k-1}$, $v\in S^k$, $uv, vw\in E(G)$.
Moreover, Since $c$ minimises $d(v_0,c)$, there is no edge $uw\subset E(G)$.

Since $G$ is Weetman, from Condition $\IntCond$ applied on the triple of vertices $u,v,w$, 
there is a path $p=(s_1...s_n)\subset S^{k-1}$ such that
$s_1=u,s_n=w$ and for every $1\leq i\leq n$, $s_iv\in E(G)$.

Since the cycle $c'=c\setminus \{uvw\}\cup \{p\}$ is related to $c$ by a sequence of elementary homotopies, $c'$ is also not contractible and since $d(v_0,c')=k-1$.
We get a contradiction on the choice of $c$. 

If $|c\cap S^k|>1$, let $u\in S^{k-1}$, $v,w\in S^k$ such that $u$ and $w$ are the previous and consecutive vertex of $v$ in $c$, i.e. $uv,vw\in E(c)$.  

% there is a path $p=s_0s_1\dots s_{\ell}s_{\ell+1}\subset c$ such that 
% $s_0=u\in S^{k-1}$, $s_{\ell+1}=w\in S^{k-1}$ and $s_i \in S^{k}, \forall 1\leq i \leq \ell$.

From Condition $\TCond$ (on the edge $vw$), there is a vertex $z\in S^{k-1}$ 
such that $zvw\in E(G)$ is a triangle.

From Condition $\IntCond$ (on the triple of vertices $uvz$),
there is a path $q=\{s_0\dots s_\ell\}\subset S^{k-1}$ such that $s_0=u$, $s_m=z$ and $s_iv\in E(G), \forall 0\leq i\leq \ell$.

Consequently, there is a cycle $c'=c\setminus \{uvw\} \cup \{qw\}$ passing thought vertices $uqw$ instead of $uvw$. 
Moreover, since $c'$ is related to $c$ by a sequence of elementary homotopies, $c'$ is also not contractible.

Consequently, since $|c'\cap S^k| < |c\cap S^k|$ and $c'$ is not contractible, we get a contradiction on the choice of $c$.

\end{proof}

The following Lemma prove that clusters in a simply connected graph have a particular topology that permit us later in this document to explore Weetman Graphs in a quasi optimal way.
Since the proof uses combinatorial tools which are not needed in the remaining part, 
the proof is left to the appendix.

\begin{thm}\label{SC-tree-clusters}
  Clusters of a simply connected graph form a tree 
\end{thm}

Consequently, from Theorem \ref{SC-tree-clusters}, we get trivially the next corollary.
\begin{cor}\label{Weetman-tree-clusters}
  The clusters of a Weetman graph form a tree.
\end{cor}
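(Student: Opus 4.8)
The plan is to obtain the corollary by simply chaining the two results that immediately precede it, with no new argument required. First I would invoke Proposition~\ref{prop:weetman-SC}, which asserts that every Weetman graph is simply connected; this is the only place where the defining metric properties of the family (the Triangle Condition $\TCond$ and the Interval Condition $\IntCond$) are actually used, and it is already established above. The point is that once a graph is known to be simply connected, membership in the Weetman family plays no further role in the statement about clusters.

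Second, I would apply Theorem~\ref{SC-tree-clusters} verbatim. That theorem states that the clusters of any simply connected graph form a tree, for an arbitrary choice of base vertex $v_0$ underlying the layering partition and the cluster graph $\mapCCC(G)$. Since a Weetman graph $G$ is, by the previous step, a simply connected graph, it falls directly under the hypothesis of the theorem, and we conclude that the clusters of $G$ form a tree. Concretely: $G$ Weetman $\Rightarrow$ $G$ simply connected (Proposition~\ref{prop:weetman-SC}) $\Rightarrow$ clusters of $G$ form a tree (Theorem~\ref{SC-tree-clusters}).

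I do not expect any genuine obstacle at the level of this corollary, and I would deliberately resist the temptation to re-prove the tree structure of clusters: all of that combinatorial content is quarantined inside Theorem~\ref{SC-tree-clusters}, whose proof the authors have rightly deferred to the appendix. The only thing to verify is that the hypotheses align, namely that the notion of ``simply connected'' supplied by Proposition~\ref{prop:weetman-SC} is exactly the one consumed by Theorem~\ref{SC-tree-clusters} (both via Proposition~\ref{prop:cycle_to_loop}, every simple cycle contractible), which it is. The derivation is therefore immediate, as the word ``trivially'' in the surrounding text signals.
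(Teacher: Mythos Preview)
Your proposal is correct and matches the paper's approach exactly: the paper states that the corollary follows ``trivially'' from Theorem~\ref{SC-tree-clusters}, using Proposition~\ref{prop:weetman-SC} to place Weetman graphs among the simply connected ones. There is nothing to add.
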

Note that this tree can be reduced to a path in some cases, like Johnson graphs or triangulated sphere.
See Fig.~\ref{fig:circularcomp}. 

\begin{figure}[t]
  \caption{Clusters in a Weetman graph}
  \label{fig:circularcomp}

  \centering
  \subfloat[General shape]{\includegraphics[width=6.5cm]{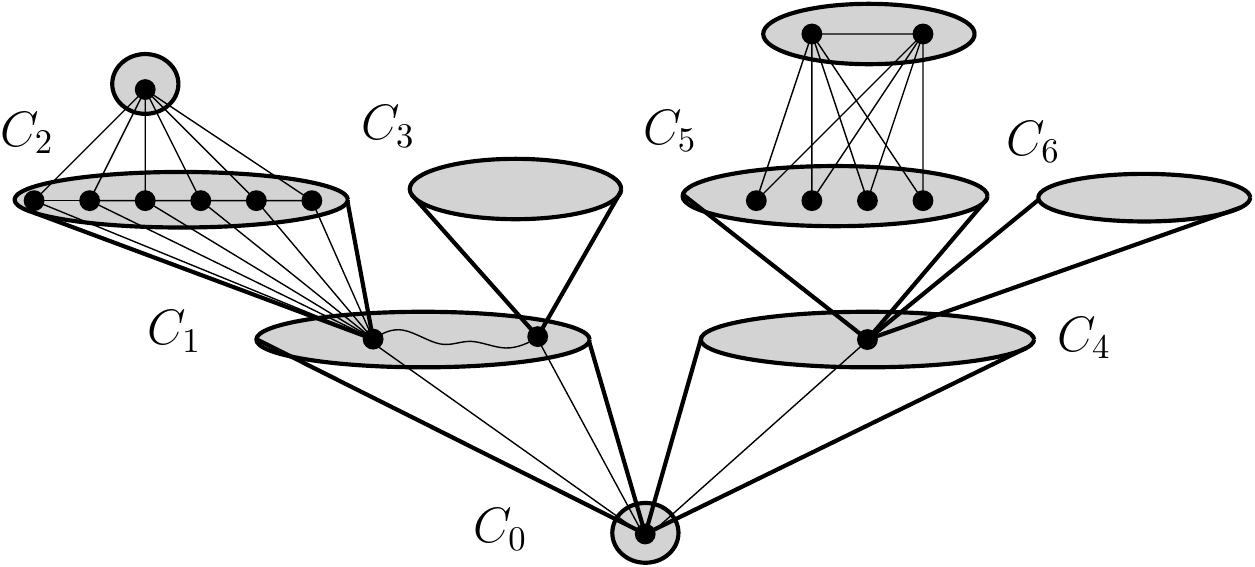}}~
  \subfloat[$Johnson~graph~ J(4,2)$]{\includegraphics[height=3.5cm]{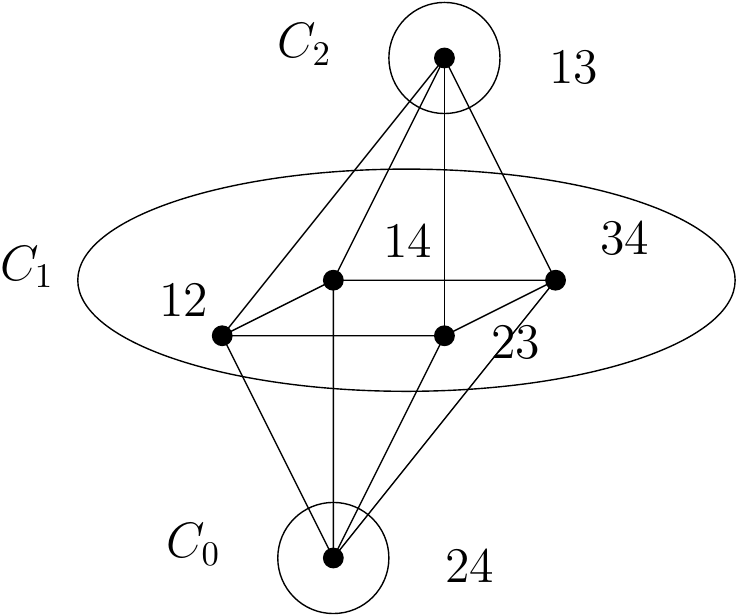}}
\end{figure}

From Corollary \ref{Weetman-tree-clusters}, every cluster $C\subset S^k$ of a Weetman graph $G$
admits a unique ancestor cluster, called $a(C)$, in the graph.
Note that $a(C)\subset S^{k-1}$ by definition and for the cluster $C_0$ containing $v_0$, $pred(C_0)=C_0$.

This particularity of Simply connected graph are the basis of the following exploration algorithm.
As illustrated in the next corollary, such a property eases the exploration and 
enables a ''quasi'' optimal exploration algorithm. 

\begin{cor}\label{prop:convex}
  Every path leading from the homebase $v_0\in V(G)$ of the agent 
  to a connected cluster $C$ goes through $a(C)$.
\end{cor}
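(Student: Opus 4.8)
The plan is to track, along an arbitrary path starting at $v_0$, the sequence of clusters to which its vertices belong, to observe that this sequence is a walk in the tree of clusters joining the root to $C$, and then to invoke the elementary fact that a walk in a tree between two nodes must traverse the unique path between them. Applied to the edge joining $C$ to its parent $a(C)$, this forces the path to meet a vertex of $a(C)$.

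First I would set up the cluster walk. Let $p=(v_0=u_0,u_1,\dots,u_m)$ be any path with $u_m\in C\subset S^k$, and let $C_i$ denote the cluster containing $u_i$. Since $u_iu_{i+1}\in E(G)$, the distances $d(v_0,u_i)$ and $d(v_0,u_{i+1})$ differ by at most one. If $u_i$ and $u_{i+1}$ lie in the same sphere, then the single edge $u_iu_{i+1}$ is a path inside that sphere, so by the definition of the layering partition $C_i=C_{i+1}$; if they lie in consecutive spheres, then $C_i\neq C_{i+1}$ and, by the definition of $\mapCCC(G)$, $C_i$ and $C_{i+1}$ are adjacent in it. Hence $C_0,C_1,\dots,C_m$ is a walk in $\mapCCC(G)$ whose consecutive terms are equal or adjacent, joining the root cluster $C_0$ to $C$.

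Then I would use the tree structure. By Corollary~\ref{Weetman-tree-clusters} the clusters form a tree, in which $a(C)$ is by definition the parent of $C$, so the edge $\{a(C),C\}$ is the last edge of the unique tree-path from $C_0$ to $C$. Deleting it separates the subtree containing $C$ from the component containing $C_0$ and $a(C)$. Since the walk $C_0,\dots,C_m$ starts in the latter component and ends at $C$, it must cross this edge, i.e. $C_i=a(C)$ for some index $i$; the corresponding vertex $u_i$ then lies in $a(C)$, which proves that $p$ goes through $a(C)$.

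The two case distinctions are routine, and the only geometric input is that consecutive path-vertices yield equal or adjacent clusters, which is immediate from the layering-partition definition. The genuine content is Corollary~\ref{Weetman-tree-clusters}: without the tree property a walk could reach $C$ while avoiding $a(C)$. Thus the crux of the statement has already been settled in establishing that the clusters form a tree, and what remains here is only the standard observation about walks in trees, which I expect to be straightforward.
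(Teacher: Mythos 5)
Your proof is correct and takes essentially the same route as the paper, which states this corollary without proof as an immediate consequence of the cluster-tree structure (Theorem~\ref{SC-tree-clusters} and Corollary~\ref{Weetman-tree-clusters}); your cluster-walk argument merely makes that implicit one-line deduction explicit. The only inputs you use are the layering-partition definition and the tree property, exactly as the paper intends, so there is nothing to correct.
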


\section{Weetman-universal Exploration Algorithm}

% Lemma \ref{lem:weetman_tree} and Corollary \ref{prop:convex}
% encourage us to perform a Depth First Search of these
% clusters optimising hence the exploration. % , as proposed in Algorithm \algo proposed in the following
% This is what Algorithm \algo does. Exploring the graph, the agent computes
% a map to locate it within the graph. 
% The exploration is helped by binoculars which permits to
% the agent to computes a map of the parts of the network seen and not yet explored. 

Exploring in anonymous graph is difficult. Because of the lack of ids attached to a
node, it might not be possible to
know whether a node where the agent is located is actually new. 
%\comment{redite}
We introduce the following terminology. A node is \emph{explored} if
the agent has already been to this node. A node is \emph{discovered} if it
has been seen from another node (that is, if it is adjacent to an
explored node). 
In the following, we give a description of Algorithm~\ref{algo:weetman}.
In a nutshell, by a DFS traversal of the tree of clusters, for every
cluster $C$, the agent will explore 
all nodes of $C$, discovering in this way all child %\comment{child cluster pas cohérent avec schema}
clusters of $C$. 
In this way, the agent is able to explore clusters in a DFS fashion. 
\medskip

Algorithm~\ref{algo:weetman} is divided into phases.
Between phases, the agent navigates the tree of clusters in
a DFS fashion using a stack.
In each phase, the agent explores a cluster
and updates local structures (more details below).
% While exploring a cluster $C$ at height
% $k$, the agent, with its binoculars, discover
% all vertices at height $k+1$ which are adjacent to the cluster
% under exploration, that is, it sees the graph induced by $succ(C)$ while
% exploring $C$. 
At the end of the phase, the agent extends its map with the new
identified vertices and corresponding edges.  
From the updated map, the agent computes the new clusters that appeared in $\mapG$.
We now give more details on the computations.

\parag{The map \mapG}
The map computed by the agent is denoted by $\mapG$.
A vertex $v\in V(G)$ is represented in $V(\mapG)$ by an unique integer
$n$ when the agent identify the vertex $v$.
The port numbering of map $\mapG$  is induced by the  port numbering
$\delta$ of the network $G$.

% The agent accesses $\mapG$ as an adjacency list indexed by the port
% numbers:
% for any two adjacent nodes $u$ and $v$ identified by $n$ and $m$
% respectively
% and such that $\delta_u(v)=p$, we write $\mapG[n,p] =m$.
% If there is no such vertex $m$ in $\mapG$, then $\delta_n(m)$ is not defined in $\mapG$.

\parag{Local structures for the identification}

Let $p$ be the path followed by the agent from the
beginning of the exploration starting at $v_0$.
Let $u$ be the current location vertex of the agent in
$G$ corresponding to $n$ in its map $\mapG$ and 
let $n_0$ be the vertex corresponding to the homebase of the agent in $\mapG$.
That is, $\dest{G}{v_0}{p}=u$ and $\dest{\mapG}{n_0}{p}=n$. 

First, remark that the agent exploring the graph knows in
 its map on which vertex $n=\dest{\mapG}{n_0}{p}$ it is located.
% But the agent does not knows exactly which is the vertex $u=\dest{G}{v_0}{p}$ 
% visited by the agent in $G$.

% We denote by $\psi(n)$ the vertex in $\getBino(n)$ corresponding to the vertex in $G$ where the agent is located, that is, 
% =\dest{G}{v_0}{p}$.
% % Let $p$ be the path followed by the agent from the beginning of the exploration, 
% % When the agent calls $\getBino$ on a vertex $u$ in $G$ 
% % such that $\dest{\mapG}{n_0}{p}=n$ in $\mapG$,
% % we denote by $\psi(n)=u\in B_G(u,1)$ to be the vertex
% % where the agent is located in the ball $B_G(u,1)$.
% If the context permits it, we denote by $B_G(\psi(n),1)$ the binoculars labelling 
% obtained by a call of $\getBino()$ on a vertex in $G$ corresponding to $n\in V(\mapG)$.

% Note that $\psi(B_\mapG(n,1))$ is the image of the ball $B_\mapG(n,1)$ inside $B_G(\psi(n),1)$.

During the exploration of a cluster $C$, for every explored vertex
$u\in V(G)$ identified by $n\in V(\mapG)$.
the agent looks through its binoculars, that is, it calls $\getBino()$ 
and access to $B_G(u,1)$, 
the ball of radius $1$ around $u$ (Line \ref{line:getBino}). 

The ball $B_G(u,1)$ obtained is stored into the set $\Balls$ in order to, 
at the end of the phase, update and verify 
the ''local'' correctness of the map $\mapG$.
We denote by $\Balls[n]$ the ball obtained after calling $\getBino()$ 
on a vertex $n\in V(\mapG)$.
Since the agent have to know which vertex in $\Balls[n]$ corresponds 
to its current location, we
introduce $\psi(n)\in \Balls[n]$ 
to denote the vertex corresponding to $n$ in $\Balls[n]$.
 
At the end of the phase, i.e., the end of the exploration
of the cluster, the agent, using $\Balls$, updates two data
structures that are used to identify the nodes and update $\mapG$.

The first structure $\prevert$ encodes the existence of  
new vertices that are not present in $\mapG$ 
as seen by the agent from an explored node $n$ at the
phase $i$.

Since such new vertices are linked to a vertex explored phase $i$, 
we encode a vertical edge by a tuple $(n,p,q)$ where $n$ is the id of
the explored vertex and $(p,q)$ is the labelling of the vertical edge.
We call \emph{pre-vertex} a pair $(n,p)$.
Pre-vertices give us all newly discovered vertices.
Note that $q$ is stored along the pre-vertex $(n,p)$ only to simplify
the computation of \mapG at the end of the phase.

The main idea to correctly map newly discovered vertices is, at the end of a phase,  
to find all vertical edges pointing to this node. 
So, we add a second structure,
denoted by $\Requiv$, encoding the elementary relation between
two pre vertices corresponding to a same vertex in $G$.
So if there is $((n,p),(m,q))\in\Requiv$, then 
there is a couple of pre-vertices
$(n,p),(m,q)\in\prevert$ such that $n$ and $m$ are the 
ids of vertices explored during the phase, 
$nm$ is an edge of $E(\mapG)$ and there is
$uvw\subseteq \Balls[n]$ such that $u=\psi(n)$, 
$\lambda(nm)=\lambda(uv)$ and $\destjc{\mapG}{n}{\delta_u(w)}$
is not defined in $\mapG$.

In order to update $\mapG$ in such a way that 
all edges between discovered nodes are correctly mapped, we have to
distinguish two kind of edges. There are edges between an explored
node and a newly discovered node. 
% These are edges between levels $d(v_0,C)$ and $d(v_0,C)+1$ and
They are called ``vertical'' edges.
But edges between two discovered nodes are also to be correctly
mapped. These edges are called ``horizontal'' edges.
% are at the level $d(v_0,C)+1$ and . 

To gather ''vertical'' edges, $\equiv$ is sufficient as explained below.
To gather ''horizontal'' edges, the set $\nivsup$ is introduced.
Since it is not possible to identify ''horizontal'' edges before 
the end of the exploration of the cluster, we store in $\nivsup$, 
together with the port numbers associated to the horizontal edge, 
only the pre-vertices.
Namely, elements of $\nivsup$ are tuples $(n,p_1,p_2,(r,s))$ such that
there is a triangle $uvw$ in $\Balls[n]$ where $u=\psi(n)$
and there is two pre vertices $(n,p_1),(n,p_2)\in \prevert$ such that
$\delta_{u}(v)=p_1$, $\delta_{u}(w)=p_2$ and $\lambda(vw)=(r,s)$.
Note that by definition, $\destjc{\mapG}{n}{p_1}$ and $\destjc{\mapG}{n}{p_2}$ 
are not defined in $\mapG$ during the exploration of the cluster. 

\parag{Updating the map}

First remark that the transitive and reflexive closure of $\equiv$, 
denoted by  $\cloequiv$, is an equivalence relation 
between pre vertices (it is straightforward that $\equiv$ is reflexive). 
We denote by $[n,p]$, the representative (or the equivalence class) 
of a class of pre-vertices including $(n,p)$ via $\cloequiv$.

% Such an equivalence relation permits us 
% to gathers in a same time, the ''new vertices'' and the ''vertical edges'' 
% stored in $\prevert$ which link the same vertex in $G$.
To identify new nodes ( Line \ref{line:newvertex}), 
we compute the quotient of the relation $\cloequiv$ over pre-vertices
stored in $\prevert$. 
Then, for every equivalence classes $[n,p]$ of pre-vertices that arises, 
we add a new vertex $\newvert{n,p}$ in $\mapG$.
%\comment{A deplacer}
Moreover, every node $n\in V(\mapG)$ is endowed with an additional 
label $\cir(n)\in\N$ to store the identity of the cluster including $n$.

Then, for every pre vertex $(n,p,q)\in\prevert$,
we add a ''vertical'' edge linking $n$ to $\newvert{n,p}$ labelled by $(p,q)$ 
if the edge is not already present in $\mapG$.

To update the ''horizontal'' edges of $\mapG$, 
for every  $(n,p_1,p_2,(r,s))\in \nivsup$, 
we add an edge between $\newvert{n,p_1}$ and $\newvert{n,p_2}$ labelled by $(r,s)$
if the edge is not already present in $\mapG$.

Additionally, if the agent  ends phase $i$, for every vertex
$n\in V(\mapG)$ explored during this phase, 
$\vis(n)$ is set to $i$.

Once $\mapG$ has been updated, we compare the map obtained and
what we saw during the exploration of the cluster.
That is, for every vertex $u$ corresponding to $n$ explored phase $i$,
we compare $B_{\mapG}(n,1)$ and $\Balls[n]$, 
the binoculars labelling obtained from $u$.
If we detect an error in the map, we decide to continue the exploration 
forever in order to respect the exploration specification.
This case will be more discussed later in this document.
If no error is detected, 
the new clusters that appear in $\mapG$ are computed, numbered, and push to the stack $\stac$ at Line \ref{line:newcomp}.

% For every prevertex $(n,p)\in \prevert$, $n$ is endowed with an additional
% label $\vis(n)\in\N$ to store the phase where the agent has explored $n$, 
% and is equal to $\perp$ if $n$ is just ''newly'' discoverded. 

% Then, every ''new'' cluster computed above is numbered
% following the value of a counter $n_C\in\N$ and for every vertex $m\in
% V(\mapG)$ which belongs to this cluster, $\cir(m)$ is set to $n_C$.
% Finally, we push on the top of
% the stack $\stac$  identities of these new clusters before
% the agent starts the next phase. 

% At this step, the agent has explored a cluster $C$ and has
% computed the map induced by new vertices and edges seen.
% Once they have been explored, the agent starts a new phase by pulling from $\stac$ the
% identity $c$ of the next cluster to explore.
% Then the agent locates within $\mapG$ which vertices compose the cluster
% numbered $c$ (i.e., every $n\in V(\mapG)$ such that $\cir(n)=c$).
% Next, it computes a shortest path which, from its current location,
% explores the cluster and starts the exploration of the cluster.

The agent stops its exploration when it remains no cluster to
explore, that is, when
$\stac$ is empty. 

% Remark that the agent go from one to another cluster using shortest paths
% coupluted from $\mapG$.    
% Since $\mapCCC(G)$ is a tree, 
% every path leading to a cluster $C$ pass throught the cluster
% $a(C)$ at level $k-1$, which is already 
% explored and present in $\mapG$. Consequently, $\mapG$ is always
% sufficient to locate the agent in the network and the route of the agent is
% locally optimal.

\begin{rem}
  The only cluster at level $0$ is composed of the homebase of the agent $v_0$.  
\end{rem}

\SetKwComment{tcc}{/* }{ */}
\SetCommentSty{it}  

\begin{algorithm}
  \DontPrintSemicolon
  \LinesNumbered
  \BlankLine
  \SetKwFunction{algori}{algori}\SetKwFunction{proc}{proc}
  \SetKwProg{Proc}{Procedure}{}{}
  \SetKwProg{Algo}{Main Procedure}{}{}
  \caption{Weetman Graphs Exploration \label{algo:weetman}}
  \Algo{}{
    % \textit{Let $\pos:\mapG\to G$ be a morphism between $\mapG$ and $G$,
    % that is, from a vertex $n$ in $\mapG$, give the vertex $v$ in $G$ corresponding to $n$}\;
    
    \BlankLine
    \textit{Add $\n_0$ to $V(\mapG)$\label{line:homebase}}\;
    \textit{$\stac \gets\cir(\n_0)\gets \vis(\n_0)\gets 0$}\tcc*[r]{homebase cluster}\;
    % $\mapG=\perp$\; % évident
    \BlankLine 
    \While{$\stac\neq\emptyset$}{
      \BlankLine      
      \tcc*[r]{Beginning of the phase $i$}
      \textit{$\idcomp\gets$ top element of $\stac$}\;
      \textit{$\cluster\gets \{\n\in V(\mapG)\mid \cir(\n)=\idcomp \}$}\;
      \BlankLine
      \textit{Compute a shortest path $\parc$ which from the current
        location $\m$ of the agent, explores the cluster $\cluster$}\;     
      \BlankLine      
      \ForAll{$\n\in \cluster$      \tcc*[r]{Cluster exploration}}{
        \textit{Go to the vertex \n following $\parc$}\;
        \textit{$\vis(\n)=i$\label{line:id}}\;
        \textit{$\Balls[n]\overset{\cup}{\gets}\getBino(\n)$}\tcc*[r]{Looking thought Binoculars}\label{line:getBino}\;  
      }               
      \ForAll{$\n\in \cluster$\label{line:updateMap}}{
        \textit{Get $\Balls[\n]$ from $\Balls$ and let $u\in V(\Balls[\n])$ corresponding to $\n$}\;
        \tcc*[r]{new pre vertices}
        \ForAll{$uw \in E(\Balls[\n])$ s.t. there is no adjacent
          edge to $\n$ in $\mapG$ labelled by $\lambda(uw)=(\delta_{u}(w), \delta_{w}(u))$}
        {
          $\prevert\overset{\cup}{\gets} (\n,\delta_{u}(w), \delta_{w}(u))$\label{line:newprevertex}\;
        }
        \ForAll{triangle $uvw \subseteq \Balls[\n]$}{
          \tcc*[r]{new pre vertices/ vertical relation}
          \If{there is an edge labelled $\lambda(uv)$ and
            there is no edge labelled $\lambda(uw)$ and
            $\lambda(vw)$ adjacent to $\n$ in $\mapG$\label{line:newrelation}}
          {
            Let $m$ be the vertex in $\mapG^i$ such that
            $nm\in E(\mapG)$ and $\lambda(nm)=\lambda(uv)$\;
            $\equiv \overset{\cup}{\gets}$
            $\Big(\big(\n,\delta_{u}(w), \delta_{w}(u)\big),
            \big(\m,\delta_{v}(w), \delta_{w}(v)\big)\Big)$\;
          }
          
          \tcc*[r]{new horizontal edge relation}
          \If{there is no edge labelled by $\lambda(uv),\lambda(uw)$
            adjacent to $\n$ in $\mapG$\label{line:new-hor-relation}}{
% $vw \in E(B_G(\pos(\m),1)\setminus \pos(B_\mapG(\n,1)))$ \label{line:verticalrelation}}{
            \textit{$\nivsup\overset{\cup}{\gets}
              \big(\n,\delta_{u}(v),\delta_{u}(w),(\delta_v(w), \delta_{w}(v))\big)$}\;
          }
        }
      }
        \BlankLine
          \tcc*[r]{Updating $\mapG$}
        \ForAll{$[\n,\p,\q]\in \quotient$}{
          \textit{Add a new vertex $\newvert{\n,\p}$ to $\mapG$\label{line:newvertex}\;}
          \textit{$\vis(\newvert{\n,\p})=\perp$\label{line:perp}}\;
        }
        \ForAll{$(\n,\p,\q)\in \prevert$}{
          \textit{Add a new edge $\n\newvert{\n,\p}$ labelled $(\p,\q)$ to $\mapG$\label{line:verticaledge}}\;
        }
        \ForAll{$\big(\n,\p,\q,(\p',\q')\big)\in\nivsup$}{
          \textit{Add a new edge $\newvert{\n,\p}\newvert{\n,\q}$ labelled $(\p',\q')$ in $\mapG$\label{line:horizontaledge}}
        }
        \eIf{$\forall \n\in \cluster$, $\Balls[\n]$ is isomorphic to $B_{\mapG}(\n,1)$\label{line:checkIso}}{
          \textit{Push in $\stac$ the new clusters in \mapG\label{line:newcomp}} \;
          \textit{For every new vertex $\newvert{\n,\p}$, update $\cir(\newvert{\n,\p})$} 
        }{
          \textit{Continue forever the execution along an edge}
        }
        \tcc*[r]{End of the phase $i$}
      }
    }
  \end{algorithm}

  % In order to ease notation, we denote the tuple $\mequi{u}{v}$
  % by $(u,v)$. We denote $\cate\sequi{u}{v}$ by $\cate(u,v)$.
  % The tuple $\lequi{u}{w}{v}$ in $\equi$ is denoted 
  % by $\smallequi{u}{w}{v}$.

%  \comment{Proof for an exploration of increasing balls}

  Let $\borderM=\{n\in V(\mapG)\mid \vis(n)=\perp\}$ be the subgraph induced 
  from $G$ composed by the set of vertices in $G$ discovered and
  not yet explored by the agent.

%  \comment{We denote by $\borderM^i$ and $\mapG^i$ the     content of $\borderM$ at the end of the phase $i$}
  Note that at $\mapG^i$ corresponds to the map computed at the end of the phase $i$. 
  % If the context permits it, we denote $\seen^i(G)$ by $\seen^i$.

  First we prove the correctness properties, that is, if no error is detected 
  and if the algorithm halts then the graph is explored.
  Then, the proof of the termination of Algorithm \ref{algo:weetman} on Weetman graphs 
  will be straightforward.

  The core of correctness proof is the following theorem
  \begin{thm}\label{thm:inj-surj-map}
    For every phase $i$ of the algorithm, there is an homomorphism $\varphi:M^{i}\to G$ such that
    \begin{itemize}
    \item  for every $n\in V(\mapG^i)$, 
      $\varphi_{|N_{M^i}(n)}$ is injective
    \item  for every $n\in V(\mapG^i\setminus \borderM^i)$, 
      $\varphi_{|N_{M^i}(n)}$ is surjective
    \end{itemize}
  \end{thm}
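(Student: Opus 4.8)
The plan is to argue by induction on the phase number~$i$, carrying the two conclusions as an invariant of the successive maps $\mapG^i$. The homomorphism $\varphi$ is the natural \emph{realization map}: it sends an identifier $n\in V(\mapG^i)$ to the vertex of $G$ it stands for, $\varphi(n)=\dest{G}{v_0}{p}$ for any path $p$ from $n_0$ to $n$ in the map. Since the map is produced incrementally, I would rather define $\varphi$ directly on the vertices created during each phase: a fresh vertex $\newvert{n,p}$ added at Line~\ref{line:newvertex} is the class of a pre-vertex $(n,p)$, i.e. it is the vertex seen from the already explored $n$ across the port~$p$, so I set $\varphi(\newvert{n,p})=\delta_{\varphi(n)}^{-1}(p)$, the neighbour of $\varphi(n)$ in $G$ reached through port~$p$. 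The base case is immediate: after phase~$0$ the map is $n_0\mapsto v_0$ together with the neighbourhood of $v_0$ read through the binoculars, so $\varphi$ is a local bijection at $v_0$ and locally injective on its border neighbours.

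For the inductive step I would first check that $\varphi$ is \emph{well defined} on the new vertices. A new vertex is a class of pre-vertices for $\cloequiv$, so it suffices that each elementary step of $\Requiv$ relates two pre-vertices with the \emph{same} image in~$G$. This is exactly the triangle bookkeeping of Line~\ref{line:newrelation}: when $\big((n,p),(m,q)\big)$ is added to $\Requiv$ there is a triangle $uvw\subseteq\Balls[n]$ with $u=\psi(n)$, $\lambda(nm)=\lambda(uv)$, $p=\delta_u(w)$ and $q=\delta_v(w)$, so both pre-vertices point to the single vertex $w$. Closing under reflexivity and transitivity preserves this, hence $\varphi$ descends to the quotient. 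That $\varphi$ is a \emph{homomorphism} then follows edge by edge: each vertical edge added at Line~\ref{line:verticaledge} realizes a genuine edge $\varphi(n)\,w$ of $\Balls[n]\subseteq G$, and each horizontal edge added at Line~\ref{line:horizontaledge} realizes the edge $vw$ of a triangle $uvw\subseteq\Balls[n]$, again an edge of~$G$.

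Next I would establish \emph{local injectivity} at every $n\in V(\mapG^i)$. The port numbering of $\mapG^i$ is induced from~$\delta$, and the guards of Lines~\ref{line:newprevertex} and~\ref{line:new-hor-relation} forbid adding an edge carrying a port already incident to~$n$; hence distinct map-neighbours of~$n$ carry distinct ports and, the map $\delta_{\varphi(n)}$ being injective, map to distinct neighbours of $\varphi(n)$. For vertices untouched by phase~$i$ this is the induction hypothesis. \emph{Local surjectivity} on the explored part is where the binocular test of Line~\ref{line:checkIso} does the work: for each~$n$ explored in phase~$i$ the agent has stored $\Balls[n]=B_G(\varphi(n),1)$, and Lines~\ref{line:newprevertex}--\ref{line:horizontaledge} create a map-edge for \emph{every} edge of $\Balls[n]$ not already present, so the accepted isomorphism $B_{\mapG^i}(n,1)\cong\Balls[n]$ certifies that $\varphi_{|N_{\mapG^i}(n)}$ is onto $N_G(\varphi(n))$, i.e. a bijection. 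Vertices that left $\borderM$ in earlier phases retain surjectivity because no edge is ever deleted and, by the vertical/horizontal split, phase~$i$ attaches new neighbours only to the vertices it is currently exploring.

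The main obstacle is the \emph{soundness} of the identification: showing that $\cloequiv$ never merges two pre-vertices with distinct images in~$G$, and, dually, that an accepted local isomorphism at Line~\ref{line:checkIso} forces a local bijection rather than a local folding. The delicate point lies at the border vertices, where one must verify that the vertical edges coming from the whole class $[n,p]$ and the horizontal edges produced from $\nivsup$ receive pairwise distinct ports at $\newvert{n,p}$, so that injectivity survives even though this vertex is not yet explored. I would emphasize that the statement asks only for a locally injective homomorphism that is locally bijective on the explored part; it therefore permits $\mapG^i$ to be a proper simplicial covering of the explored region rather than an isomorphic copy of it. Proving that this covering is in fact trivial, so that the final map coincides with~$G$, is a separate matter settled afterwards through simple connectivity (Proposition~\ref{prop:weetman-SC}) and the tree-of-clusters structure (Corollary~\ref{Weetman-tree-clusters}), and is not needed for this theorem.
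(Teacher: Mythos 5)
Your plan tracks the paper's own proof almost line for line: induction on phases, the realization map $\varphi$ defined on pre-vertex classes, well-definedness of the quotient via the shared triangle apex (this is the paper's Lemma~\ref{lem:equiv-preserved}, and your transitive-chain argument is the same one), the vertical/horizontal split for the homomorphism property (Lemma~\ref{lem:morphism-vertex}), surjectivity at explored vertices from the test at Line~\ref{line:checkIso} (Lemma~\ref{lem:surjective-in-map}), and the closing remark that the theorem only yields a covering of the explored region, isomorphism being settled later by simple connectivity, which is exactly the paper's division of labour (Corollary~\ref{cor:cover} and the theorems that follow it).

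There is, however, one genuine gap: local injectivity at \emph{border} vertices, which the paper treats in a dedicated lemma (Lemma~\ref{lem:injective-on-border}). Your port-counting argument --- ``the guards forbid adding an edge carrying a port already incident to $n$, hence distinct map-neighbours carry distinct ports'' --- is valid only at explored vertices: the guards at Lines~\ref{line:newprevertex} and~\ref{line:new-hor-relation} inspect port labels at the \emph{explored} endpoint $n$, never at the fresh endpoint $\newvert{n,p}$. A border vertex $\newvert{n,p}$ receives one vertical edge per pre-vertex in its $\cloequiv$-class (Line~\ref{line:verticaledge}) plus horizontal edges from $\nivsup$ (Line~\ref{line:horizontaledge}), and nothing in the construction directly rules out two of these carrying the same port at $\newvert{n,p}$ --- which would happen exactly if two distinct explored neighbours $m\neq m'$ satisfied $\varphi(m)=\varphi(m')$. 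You correctly isolate this as ``the delicate point'' and state what must be verified, but you never verify it, so your injectivity claim at unexplored vertices is assumed rather than proved. The paper closes it by a case analysis on the two neighbours $m\neq m'$ of the border vertex: if both are explored, the pre-vertices $(m,\delta_m(n))$ and $(m',\delta_{m'}(n))$ are $\cloequiv$-related, and Lemma~\ref{lem:equiv-preserved} combined with the induction hypothesis gives $\varphi(m)\neq\varphi(m')$; if both are themselves in $\borderM^i$, the tuples stored in $\nivsup$ place them in distinct equivalence classes, and the injective port labelling of $\mapG^i$ then forces distinct images. Supplying this two-case argument (or an equivalent one) is what your write-up still needs to be a complete proof of the theorem.
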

  Theorem \ref{thm:inj-surj-map} is proved by an induction on the 
  phases perform by the agent during  the execution.
  
  The homomorphism $\varphi$ is based on
  the following corollary of Theorem \ref{thm:inj-surj-map},

  % From the surjectivity of coverings,   $\varphi$ is a simplicial covering to obtaat Lineined the correctness.
  % Since a covering map is surjective, we get $|\mapG|>|G|$ and thus, $G$ is explored.
  % This Theorem is proved by induction on the phases
  % of the algorithm. 
  % Remark that every process which 
  % is explored during a phase has been discovered in a previous phase.
  % Thus, there is always a vertex $n\in V(\mapG)$ 
  % corresponding to the vertex currently explored. 
  % That is,  if the agent has followed a path $p$ from the beginning of the execution, 
  % then $\dest{G}{v_0}{p}\in V(G)$ is always defined and $\dest{\mapG}{n_0}{p}=n$.

  % The corollary below is crucial in the next induction and 
  % explains how the homomorphism can be built
  % using $\mapG$.

  \begin{cor}
    For every phase $i$, For every vertex $n\in V(\mapG^i)$, 
    for every path $p$ from $n_0$ to $n$ in $\mapG^i$, 
    $\varphi^i(n)=\dest{G}{\varphi(n_0)}{p}$
  \end{cor}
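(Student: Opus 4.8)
The plan is to argue by induction on the length $|p|$ of the path $p$ in $\mapG^i$, the whole argument resting on one feature of the homomorphism $\varphi^i$ supplied by Theorem~\ref{thm:inj-surj-map}: being a homomorphism of port-labelled graphs (in the sense of Section~\ref{sect:def}), it preserves port numbers, so that for every edge $mn\in E(\mapG^i)$ we have $\port{\varphi^i(m)}{\varphi^i(n)}=\deltaM_m(n)$. This identity is exactly what lets the traversal of $\lambda(p)$ in $G$ track the image of $p$ under $\varphi^i$ step by step.

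For the base case $|p|=0$ the path is empty, $n=n_0$, and $\dest{G}{\varphi^i(n_0)}{p}=\varphi^i(n_0)$ directly. For the inductive step I write $p$ as a path $p'$ from $n_0$ to some $m\in V(\mapG^i)$ extended by a single edge $mn\in E(\mapG^i)$, whose forward port is $\deltaM_m(n)$, the last entry of $\lambda(p)$. The induction hypothesis gives that the traversal of $\lambda(p')$ in $G$ from $\varphi^i(n_0)$ is well defined and ends at $\dest{G}{\varphi^i(n_0)}{p'}=\varphi^i(m)$. Since $\varphi^i$ is a homomorphism, $\varphi^i(m)\varphi^i(n)\in E(G)$, and by port preservation the next label $\deltaM_m(n)$ equals $\port{\varphi^i(m)}{\varphi^i(n)}$; hence following it from $\varphi^i(m)$ keeps the traversal defined and lands exactly on $\varphi^i(n)$, by definition of the port numbering of $G$. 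Therefore $\dest{G}{\varphi^i(n_0)}{p}=\varphi^i(n)$, which closes the induction.

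The one point deserving care — and the only real obstacle — is the justification that $\varphi^i$ preserves port numbers, since this is not literally among the two bullet conditions of Theorem~\ref{thm:inj-surj-map}; I would extract it from the construction of $\mapG^i$, where every edge added at Lines~\ref{line:verticaledge} and \ref{line:horizontaledge} is labelled by the very port numbers read in $G$ through the binoculars, so that $\varphi^i$ is a \emph{labelled} homomorphism by construction. This same fact guarantees that the traversal $\dest{G}{\varphi^i(n_0)}{p}$ never meets an undefined port, so that the right-hand side is always well defined. Finally, it is worth recording that the corollary also encodes a path-independence statement: since its right-hand side equals $\varphi^i(n)$ regardless of which path $p$ from $n_0$ to $n$ is chosen, all such paths reach the same vertex of $G$ — and this is precisely the well-definedness that legitimises the vertex identification carried out by the algorithm.
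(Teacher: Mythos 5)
Your proof is correct and matches the paper's intent: the paper disposes of this corollary with the single line ``Straightforward from Theorem~\ref{thm:inj-surj-map}'', and your induction on $|p|$, stepping along the path via the homomorphism, is precisely the straightforward argument being alluded to. You are also right---and more careful than the paper---to flag that port preservation is not literally among the two bullet conditions of Theorem~\ref{thm:inj-surj-map} and must be read off from the construction of $\mapG^i$ (the labelled edges added at Lines~\ref{line:verticaledge} and~\ref{line:horizontaledge}), combined with the injectivity of the port numbering of $G$ to pin down the landing vertex at each step of the traversal.
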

  \begin{proof}
    Straightforward from Theorem 2
  \end{proof}

  To ease the notation, an homomorphism $\varphi:\mapG^i\to G$ is denoted by $\varphi^i$.

  \subsection*{Proof of theorem \ref{thm:inj-surj-map}}
%   First, let us define the morphism $\varphi$ from $\mapG$ to $G$.

%   \begin{mydef}
%     Let  $\varphi^i:\mapG^i\to G$ define as follows:
%     \begin{itemize}
%     \item If $m\in V(\mapG^{i-1})$, $\varphi^i(m)=\varphi^{i-1}(m)$
%     \item If $m\in V(\mapG^{i}\setminus \mapG^{i-1})$, 
%       \begin{itemize}
%       \item Let $(n,p)\in\prevert$ such that $m=\newvert{n,p}\in V(\mapG^i)$.
%       \item Let $\varphi^i(m)=\destjc{G}{\varphi^{i}(n)}{p}$
% %w\in B_G(\varphi^{i-1}(n),1)$ 
% %        such that $\delta_{\varphi^{i-1}(n)}(w)=p$. 
%       \end{itemize}
%     \end{itemize}    
%   \end{mydef}
%   % The following Lemma prove the initial case of the induction. 
%   % In this Lemma, we prove that $\mapG^1$ (at the end of the first
%   % phase) is an isomorphic copy of $B_G(v_0,1)$.

We prove in the next Lemma the initial case of the induction.

  \begin{lem}[Initial Case]\label{goodmap:initial}
    For any execution of Algorithm \ref{algo:weetman} on a graph $G$
    endowed with a binoculars labelling,  
%    at the end of the phase $1$,
    $\mapG^1$ is isomorphic to $B_G(v_0,1)$.  
  \end{lem}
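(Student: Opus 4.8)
The plan is to unfold Algorithm~\ref{algo:weetman} over its first phase and read off $\mapG^1$ directly, then exhibit the obvious isomorphism to $B_G(v_0,1)$. The crucial simplification is that the level-$0$ cluster is $\{v_0\}$, so in phase~$1$ we have $\cluster=\{\n_0\}$: the agent stays on the homebase, calls $\getBino()$, and records $\Balls[\n_0]=B_G(v_0,1)$ with centre $u:=\pos(\n_0)=v_0$. Throughout the update loop $\mapG$ still consists only of the isolated vertex $\n_0$, so every guard of the form ``there is no adjacent edge to $\n$ in $\mapG$'' is vacuously satisfied, which lets me compute each local structure exactly.

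First I would determine the three structures. For each edge $v_0w$ of the ball incident to the centre we obtain one pre-vertex $(\n_0,\port{v_0}{w},\port{w}{v_0})\in\prevert$; since $\port{v_0}{\cdot}$ is injective these are pairwise distinct, giving exactly one pre-vertex per neighbour $w\in N_G(v_0)$. The relation $\equiv$ remains empty, because its guard at Line~\ref{line:newrelation} asks for an edge labelled $\lambda(uv)$ already in $\mapG$, which never holds while $\mapG$ is edgeless; hence $\cloequiv$ is the identity and each class in $\quotient$ is a single pre-vertex. Finally, for every triangle $v_0vw\subseteq\Balls[\n_0]$, i.e.\ every pair of adjacent neighbours $v,w$ of $v_0$, the guard at Line~\ref{line:new-hor-relation} holds and we insert $(\n_0,\port{v_0}{v},\port{v_0}{w},(\port{v}{w},\port{w}{v}))$ into $\nivsup$.

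Reading off the update section, $\mapG^1$ then has the vertex $\n_0$ together with one fresh vertex $n_w:=\newvert{\n_0,\port{v_0}{w}}$ for each neighbour $w$; a vertical edge $\n_0 n_w$ labelled $(\port{v_0}{w},\port{w}{v_0})$ for each neighbour; and a horizontal edge $n_v n_w$ labelled $(\port{v}{w},\port{w}{v})$ for each triangle $v_0vw$. I would then set $\varphi(\n_0)=v_0$ and $\varphi(n_w)=w$ and verify it is an isomorphism of port-labelled graphs. It is a vertex bijection by the previous paragraph; the vertical edges match exactly the edges of $B_G(v_0,1)$ incident to $v_0$, with equal port labels, and the horizontal edges match exactly the edges between two neighbours of $v_0$. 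Conversely every edge of $B_G(v_0,1)$ either meets $v_0$, and is thus a vertical edge, or joins two neighbours $v,w$, in which case $\{v_0,v,w\}$ is a triangle and $n_vn_w$ was created; so $\varphi$ is edge-surjective with no spurious edges. As the port numbering of $\mapG$ is induced from $G$ and the labels are copied from $\Balls[\n_0]=B_G(v_0,1)$, all vertex and binoculars labels agree, so $\varphi$ is an isomorphism.

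I expect the only genuine bookkeeping obstacle to lie in the horizontal (triangle) case: one must check that the tuples of $\nivsup$ produce precisely the edges between adjacent neighbours of $v_0$, each exactly once, and never an edge between non-adjacent neighbours. This follows because $\nivsup$ ranges over triangles of the ball through the centre $v_0$, which are in bijection with edges $vw$ of adjacent neighbours; the two orderings of such a triangle attempt to insert the same undirected labelled edge $n_vn_w$, and the ``if the edge is not already present'' clause of Line~\ref{line:horizontaledge} collapses them to one. Together with the triviality of $\cloequiv$, this makes $\varphi$ well defined and bijective, finishing the proof.
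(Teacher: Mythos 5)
Your proposal is correct and takes essentially the same route as the paper's own proof: unfold phase~1 (where the cluster is the singleton homebase), compute $\prevert$, $\equiv$ and $\nivsup$ explicitly, and read off the vertex, vertical-edge and horizontal-edge bijections with $B_G(v_0,1)$. If anything, you are slightly more careful than the paper, which merely asserts that each $\cloequiv$-class contains exactly one pre-vertex where you justify it by noting the guard of Line~\ref{line:newrelation} cannot fire while $\mapG$ is edgeless, and which leaves implicit your observation that the two orderings of each triangle through $v_0$ are collapsed into a single edge at Line~\ref{line:horizontaledge}.
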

  \begin{proof}
    Since for every vertex $v\neq v_0$, $d(v_0,v)>0$, during the first phase,
    the agent have to explore the first cluster composed by only one vertex,  the homebase $v_0$.
    Since the agent is initially located on $v_0$, 
    the agent only maps the neighbourhood of the homebase
    $v_0$ during this phase.
    
    Initially, a first vertex $\n_0$ is inserted at
    Line \ref{line:homebase} into $\mapG^0$. 
    This vertex identifies the home base $v_0$.
    Let $\varphi(\n_0)=v_0$. 
    At Line \ref{line:getBino}, $B_G(v_0,1)$ is gathered into $\Balls[\n_0]$.
    Then, the agent updates its map (Line \ref{line:updateMap}). 
    \begin{itemize}
    \item First, since $G$ has an injective port numbering,
      there is a unique pre-vertex $(\n_0,\p,\q)$ 
      inserted into $\prevert$, for every edge
      $v_0w$ in $E(B_G(v_0,1))$ labelled $(\p,\q)$.
    \item Remark that there is exactly one pre-vertex per
      equivalence classes of pre-vertices.
      Thus, we get that $\quotient \simeq \prevert$.
    \item Consequently, since one vertex is added
      into $\mapG^1$ for each equivalence class, 
      there is a bijection between $V(\mapG^1)$ and $V(B_G(v_0,1))$. 
    \item Moreover, since there is also one 
      vertical edge inserted at Line \ref{line:verticaledge} 
      for each equivalence class, 
      there is a bijection between vertical edges in $V(\mapG^1)$ 
      and vertical edges in $V(B_G(v_0,1))$. 
    \item It remains to prove that there is also a 
      bijection between horizontal edge of $E(\mapG^1)$ and
      horizontal edge of $E(B_G(v_0,1))$.
    \item For every ''horizontal'' edge $ww'\in E(B_G(v_0,1))$ 
      i.e., $w'\neq v_0\neq w$, there are
      $n_0\newvert{n_0,\delta_{v_0}(w)},n_0\newvert{n_0,\delta_{v_0}(w')}\in 
      E(\mapG)$ from the previous case.
      Moreover, the couple
      $\lnivsup{n_0,\delta_{v_0}(w),\delta_{v_0}(w')}{\delta_{w}(w'),\delta_{w'}(w)}$ 
      is inserted into $\nivsup$ at Line \ref{line:new-hor-relation}.
    \item Thus, an edge
      $\newvert{n_0,\delta_{v_0}(w)}\newvert{n_0,\delta_{v_0}(w')}$ 
      labelled by $(\delta_{w}(w'),\delta_{w'}(w))$ is inserted
      into $E(\mapG)$ at Line \ref{line:horizontaledge} 
      if and only if there is an edge $ww'$ in $G$ labelled 
      by $(\delta_{w}(w'),\delta_{w'}(w))$.
      % We get a bijection between $E(B_G(v_0,1))$ to $E(\mapG^1)$ 
      % and thus between $B_G(v_0,1)$ to $\mapG^1$ .
    \end{itemize}
  \end{proof}

  Consequently, let us define the homomorphism $\varphi^1:\mapG^1\to G$ such that 
  $\varphi^1(n)=\destjc{G}{v_0}{\delta_{n_0}(n)}$, for every $n\in V(\mapG^1)$.
 Since $\varphi^1$ is an isomorphism (Lemma \ref{goodmap:initial}), we get that at phase $1$,
 Theorem \ref{thm:inj-surj-map} is proved.
Moreover, we get the following Corollary, 
\begin{cor}
  At phase $1$, for every vertex $n\in V(\mapG^{1})$, for every path $p$ from $n_0$ to $n$ in $\mapG^{1}$, 
  $\varphi^{1}(n)=\dest{G}{\varphi(n_0)}{p}$
\end{cor}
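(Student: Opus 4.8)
The plan is to reduce the statement to the single fact, already established in Lemma~\ref{goodmap:initial}, that $\varphi^1$ is an isomorphism between $\mapG^1$ and $B_G(v_0,1)$, together with the construction $\varphi^1(n_0)=v_0$. The only property I actually need to extract from "isomorphism of port-numbered graphs'' is \emph{port-preservation}: by the third condition of Proposition~\ref{covering}, for every edge $mn\in E(\mapG^1)$ the port in the map agrees with the port in $G$, i.e. $\port{m}{n}=\port{\varphi^1(m)}{\varphi^1(n)}$ (this is consistent with the paper's convention that the port numbering of $\mapG$ is induced by $\delta$). First I would isolate exactly this identity, since everything else is bookkeeping.

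I would then prove the corollary by induction on the length of the path $p$ from $n_0$ to $n$ in $\mapG^1$. For the base case $|p|=0$ we have $n=n_0$ and $\lambda(p)$ is the empty label sequence, so $\dest{G}{\varphi(n_0)}{p}=v_0=\varphi^1(n_0)$. For the inductive step, write $p=(n_0,\dots,n_{k-1},n_k)$ with $n_k=n$ and let $p'=(n_0,\dots,n_{k-1})$ be its prefix; by the induction hypothesis, following $\lambda(p')$ from $v_0$ in $G$ reaches $\varphi^1(n_{k-1})$. The last label of $p$ is $\port{n_{k-1}}{n_k}$, which by port-preservation equals $\port{\varphi^1(n_{k-1})}{\varphi^1(n_k)}$; since the port numbering $\delta$ is injective at each vertex, following this single port from $\varphi^1(n_{k-1})$ in $G$ lands exactly on $\varphi^1(n_k)=\varphi^1(n)$. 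Hence $\dest{G}{\varphi(n_0)}{p}=\varphi^1(n)$, closing the induction.

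The only point that needs care, rather than a genuine obstacle, is that $\destc_G$ is evaluated in the whole graph $G$ while $p$ lives in the subgraph $\mapG^1\cong B_G(v_0,1)$. This is harmless precisely because the induction peels off one edge at a time, and at each step the matching port guaranteed by the isomorphism uniquely determines the next vertex in $G$, so the computed walk never escapes the image $B_G(v_0,1)$. I would also remark that the argument treats the vertical and horizontal edges of $\mapG^1$ uniformly: both are edges of the isomorphism and therefore enjoy the same port-preservation, so a path routed through the horizontal edges joining two neighbours of $v_0$ is handled by the identical inductive step. The statement is thus essentially immediate once port-preservation of $\varphi^1$ is made explicit, which is why it appears as a corollary of the initial case rather than as a separate theorem.
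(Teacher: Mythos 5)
Your proof is correct and follows essentially the same route as the paper, which derives this corollary directly from the isomorphism $\varphi^1:\mapG^1\to B_G(v_0,1)$ of Lemma~\ref{goodmap:initial} and treats the rest as immediate (the paper offers no further argument beyond ``straightforward''). Your explicit induction on $|p|$ using port-preservation, and your remark that injectivity of $\delta$ at each vertex of $G$ forces the walk to stay in the image of $\varphi^1$, simply spells out the bookkeeping the paper leaves implicit.
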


\subsection*{Phase i-1:  (Induction hypothesis)}
Suppose that the agent ends phase $i-1>0$ and the agent
has computed $\mapG^{i-1}$ such that no error is detected.
Moreover, suppose that there is an homomorphism $\varphi^{i-1}:\mapG^{i-1}\to G$ 
define as follows:
    \begin{itemize}
    \item If $m\in V(\mapG^{i-2})$, $\varphi^{i-1}(m)=\varphi^{i-2}(m)$
    \item If $m\in V(\mapG^{i-1}\setminus \mapG^{i-2})$, 
      \begin{itemize}
      \item Let $(n,p)\in\prevert$ such that $\newvert{n,p}=m\in V(\mapG^{i-1})$.
      \item Let $\varphi^{i-1}(m)=\destjc{G}{\varphi^{i-1}(n)}{\delta_n(m)}$
      \end{itemize}
    \end{itemize}    
    Note that by induction and since $\mapG^{i-2}\subseteq \mapG^{i-1}\subseteq \mapG^{i}$, 
    for every $n\in V(\mapG^{i-2})$, $\varphi^{i-1}(n)$ is well defined in $\mapG^{i}$.
    The following corollary explains that the image in $G$ of 
    a vertex in $\mapG^{i-1}$ via $\varphi^{i-1}$ is independent of
    the path followed by the agent.
    \begin{cor}
      At phase $i-1$, for every vertex $n\in V(\mapG^{i-1})$, 
      there is a vertex $w\in V(G)$ such that for every path $p$ 
      from $n_0$ to $n$ in $\mapG^{i-1}$, $\varphi^{i-1}(n)=\dest{G}{\varphi^{i-1}(n_0)}{p}$
    \end{cor}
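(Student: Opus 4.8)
The plan is to reduce the statement to the single structural fact that $\varphi^{i-1}$ preserves port numbers along every edge of $\mapG^{i-1}$, in the sense of the third bullet of Proposition~\ref{covering}, and then to run a trivial induction on the length of $p$. Concretely, I would first isolate the \emph{port-preservation} property: for every edge $nm\in E(\mapG^{i-1})$ one has $\port{\varphi^{i-1}(n)}{\varphi^{i-1}(m)}=\port{n}{m}$, equivalently $\destjc{G}{\varphi^{i-1}(n)}{\port{n}{m}}=\varphi^{i-1}(m)$, i.e.\ reading the port $\port{n}{m}$ at $\varphi^{i-1}(n)$ in $G$ lands exactly on $\varphi^{i-1}(m)$.

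Granting port-preservation, the corollary follows by induction on $|p|$ with the stronger claim that $\dest{G}{\varphi^{i-1}(n_0)}{p}=\varphi^{i-1}(n)$ for \emph{any} path $p$ from $n_0$ to any $n$. For the trivial path ($n=n_0$) both sides equal $\varphi^{i-1}(n_0)$ by definition of $\destc$. For the step, write $p=p'\,(n_{k-1}n_k)$ with $n_k=n$; the induction hypothesis on $p'$ brings us to $\varphi^{i-1}(n_{k-1})$, and reading the last port $\port{n_{k-1}}{n_k}$ there lands on $\varphi^{i-1}(n_k)=\varphi^{i-1}(n)$ by port-preservation of the final edge. Since the resulting vertex is always $\varphi^{i-1}(n)$, it is independent of $p$, and taking $w=\varphi^{i-1}(n)$ yields the statement.

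So the only genuine work is establishing port-preservation on all of $E(\mapG^{i-1})$, which I would split by how each edge was created. Edges already in $\mapG^{i-2}$ inherit it from the induction on phases, using $\varphi^{i-1}_{\mid V(\mapG^{i-2})}=\varphi^{i-2}$ and the phase-$1$ base corollary. Every vertical edge $n\newvert{n,p}$ added at Line~\ref{line:verticaledge} satisfies it \emph{by construction}, since $\varphi^{i-1}$ was defined precisely so that $\varphi^{i-1}(\newvert{n,p})=\destjc{G}{\varphi^{i-1}(n)}{\port{n}{\newvert{n,p}}}$. The two delicate points are: first, this value must be \emph{well defined}, i.e.\ independent of the chosen representative $(n,p)$ of the class $[n,p]$ modulo $\cloequiv$; and second, the horizontal edges inserted at Line~\ref{line:horizontaledge} must also be port-preserving.

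The hard part will therefore be these two consistency checks, and both are controlled by the absence of error at Line~\ref{line:checkIso}. For well-definedness I would show that an elementary relation $(n,p)\equiv(m,q)$ forces $\destjc{G}{\varphi^{i-1}(n)}{p}=\destjc{G}{\varphi^{i-1}(m)}{q}$: the triangle $uvw\subseteq\Balls[n]$ witnessing the relation has $u=\varphi^{i-1}(n)$ and, by port-preservation of the already-treated edge $nm$, $v=\varphi^{i-1}(m)$, so both ports lead to the same common neighbour $w$ in $G$; closing under $\cloequiv$ then gives the whole class a single image. For the horizontal edges I would use that the phase ended without error, hence $\Balls[n]\simeq B_{\mapG}(n,1)$ for every explored $n$; this isomorphism matches each horizontal edge $\newvert{n,p}\newvert{n,q}$ of $\mapG$ with a genuine edge of $G$ carrying identical port labels, which is exactly port-preservation. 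Once both points are secured, port-preservation holds on all of $E(\mapG^{i-1})$ and the induction above closes the argument.
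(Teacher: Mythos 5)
Your proposal is correct and follows essentially the paper's own route: your ``port-preservation'' property is exactly the labelled-graph homomorphism condition of Proposition~\ref{covering} that the phase-$(i-1)$ instance of Theorem~\ref{thm:inj-surj-map} supplies (established in the paper via Lemmas~\ref{lem:equiv-preserved} and~\ref{lem:morphism-vertex}), after which the paper dismisses the remaining induction on $|p|$ as ``Straightforward from Theorem 2''. The only difference is presentational: you inline the well-definedness of $\varphi^{i-1}$ on classes of $\cloequiv$ and the horizontal-edge check instead of invoking the induction hypothesis as a black box, which does not change the substance.
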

We suppose that Theorem \ref{thm:inj-surj-map} is proved
at phase $i-1$, that is,  $\varphi^{i-1}$ is locally injective
from $\mapG^{i-1}$ and locally surjective from $\mapG^{i-1}\setminus \borderM^{i-1}$.

Remark that for every vertex $n\in \mapG^{i-1}\setminus \borderM^{i-1}$
($n$ already explored at phase $i-1$),
$B_{\mapG^{i-1}}(n,1)\simeq B_G(\varphi^{i-1}(n),1)\simeq \getBino(n)$.

  \subsection*{Phase i:}

  We prove that when the agent ends phase $i>1$ and has computed $\mapG^{i}$,
  there is an homomorphism $\varphi^i:M^{i}\to G$ such that
  \begin{itemize}
  \item  for every $n\in V(\mapG^{i})$, 
    $\varphi^i_{|N_{M^{i}}(v)}$ is injective
  \item  for every $n\in V(\mapG^{i}\setminus \borderM^{i})$, 
    $\varphi^i_{|N_{M^{i}}(v)}$ is surjective
  \end{itemize}
  % Hence, we will prove that for every
  % vertex $n\in V(\mapG^i\setminus \mapG^{i-1})$, $\varphi^{i}(n)$ 
  % is the vertex in $G$ where the agent is located 
  % while the current vertex is $n$ in its map. 

  Next Lemma prove that the relation $\equiv$ gathers in 
  a same equivalence class the maximum set of ''vertical'' 
  edges linking a same vertex in $G$. 
  Moreover, it ensure that the image of a vertex
  via $\varphi^i$ as defined above is independent
  of the choice of the representative $[n,p]$.
  % we correctly map the newly discovered 
  % vertices of $G$ from pre-vertices and the relation $\equiv$. 
  \begin{rem}
    Note that in some graphs which are not Weetman, some vertices
    can be duplicated in $\mapG$. 
  \end{rem}

  \begin{lem}\label{lem:equiv-preserved}
  For every phase $i$ of Algorithm \ref{algo:weetman}, 
  for every pre-vertices $(n,p),(m,q)\in \prevert^i$, 
    if $(n,p)\cloequiv (m,q)$ then there is a vertex
    $w\in B_G(\varphi^{i-1}(n),1)\cap B_G(\varphi^{i-1}(m),1)$ 
    such that $\varphi^{i-1}(n)w, \varphi^{i-1}(m)w \in E(G)$ 
    and $\delta_{\varphi^{i-1}(n)}(w)=p$ and $\delta_{\varphi^{i-1}(m)}(w)=q$.  
  \end{lem}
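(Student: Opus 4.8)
The plan is to prove the lemma by induction on the number of elementary steps of $\equiv$ that witness $(n,p)\cloequiv(m,q)$, reducing the whole statement to a single step of $\equiv$ and then propagating the witnessing vertex along the chain by a gluing argument that uses only the injectivity of the port numbering $\delta$.

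First I would settle the \emph{elementary case}, i.e.\ when $(n,p)\equiv(m,q)$ is added directly at Line~\ref{line:newrelation}. By construction this happens because of a triangle $uvw\subseteq\Balls[n]$ with $u=\psi(n)$, $p=\delta_u(w)$ and $q=\delta_v(w)$, where $v$ is the neighbour of $u$ with $\lambda(uv)=\lambda(nm)$ and $nm\in E(\mapG)$. Since $n$ is explored during phase $i$, the agent stands on $\varphi^{i-1}(n)$ when it calls $\getBino()$, so $\Balls[n]$ is isomorphic to the true ball $B_G(\varphi^{i-1}(n),1)$, with $\psi(n)$ corresponding to $\varphi^{i-1}(n)$. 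Transporting the triangle through this isomorphism gives a genuine triangle of $G$ on $\varphi^{i-1}(n)$, its neighbour $w:=\destjc{G}{\varphi^{i-1}(n)}{p}$, and a third vertex that I must identify with $\varphi^{i-1}(m)$. This identification follows because $\varphi^{i-1}$ is a port-preserving homomorphism: the edge $nm$ with $\lambda(nm)=\lambda(uv)$ forces $\delta_{\varphi^{i-1}(n)}(\varphi^{i-1}(m))=\delta_u(v)$, so $\varphi^{i-1}(m)$ is exactly the image of $v$. Reading the remaining labels off the isomorphism then yields $\delta_{\varphi^{i-1}(n)}(w)=p$ and $\delta_{\varphi^{i-1}(m)}(w)=q$, with $w$ adjacent to both, which is the claim for one step.

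For the inductive step I would take a chain $(n,p)=(n_0,p_0)\equiv\dots\equiv(n_k,p_k)=(m,q)$ and apply the elementary case to each consecutive pair, obtaining for every $j$ a vertex $w_j\in V(G)$ adjacent to $\varphi^{i-1}(n_j)$ and $\varphi^{i-1}(n_{j+1})$ with $\delta_{\varphi^{i-1}(n_j)}(w_j)=p_j$ and $\delta_{\varphi^{i-1}(n_{j+1})}(w_j)=p_{j+1}$. The key observation is that $w_j$ and $w_{j+1}$ are both reached from $\varphi^{i-1}(n_{j+1})$ through the same port $p_{j+1}$; since $\delta_{\varphi^{i-1}(n_{j+1})}$ is injective, $w_j=w_{j+1}$, so all the $w_j$ collapse to a single vertex $w$, which witnesses the statement for the whole chain. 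Because $\cloequiv$ is an equivalence relation, each elementary step may be traversed in either direction; the port-injectivity argument is insensitive to orientation, and the reflexive case is trivial with $w=\destjc{G}{\varphi^{i-1}(n)}{p}$.

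I expect the main obstacle to be the elementary case, and specifically the bookkeeping that matches the combinatorial triangle $uvw$ seen through the binoculars with an actual triangle of $G$ and identifies $v$ with $\varphi^{i-1}(m)$ through the labelled edge $nm$. This is exactly where the induction hypothesis on $\varphi^{i-1}$ (that it is a locally injective, port-preserving homomorphism and that $\Balls[n]$ faithfully records $B_G(\varphi^{i-1}(n),1)$) is needed. Once this local dictionary between $\Balls[n]$ and $B_G(\varphi^{i-1}(n),1)$ is in place, the propagation along the chain is immediate and rests entirely on the injectivity of the port numbering.
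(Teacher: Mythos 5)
Your proposal is correct and follows essentially the same route as the paper's proof: you first settle a single $\equiv$-step by transporting the triangle recorded at Line~\ref{line:newrelation} through the isomorphism $\Balls[n]\simeq B_G(\varphi^{i-1}(n),1)$ and the port-preserving homomorphism $\varphi^{i-1}$, then collapse the chain of witnesses $w_j$ for $\cloequiv$ into a single vertex $w$ using the injectivity of the port numbering at each $\varphi^{i-1}(n_{j+1})$. The only cosmetic difference is that the paper phrases the elementary case as the agent having seen the same triangle twice (once from $n$ and once from $m$), whereas you read it off from $\Balls[n]$ alone and identify the third corner with $\varphi^{i-1}(m)$ via the labelled edge $nm$; the substance is identical.
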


  \begin{proof} 
    Suppose that there are two pre-vertices $(n,p),(m,q)\in \prevert^i$
    such that $(n,p)\equiv (m,q)$.
    \begin{itemize}
    \item From the algorithm, $n$ and $m$ are visited during phase $i$ .
    \item Moreover, there is a vertex $\newvert{n,p}=\ell\in V(\mapG^i)$ 
      linked to $n$ and $m$ such that $\delta_n(\ell)=p$ and $\delta_m(\ell)=q$.
    \item By induction, there is $u,v\in V(G)$ such that $u=\varphi^{i-1}(n)$, $v=\varphi^{i-1}(m)$,
      and $\varphi^{i-1}(nm)=\varphi^{i-1}(n)\varphi^{i-1}(m)$ is an edge in $E(G)$.
    \item Moreover, let $p_n:n_0\to n$ (resp $p_n:n_0\to m$) denotes
      the path follows by the agent from the beginning of the execution
      when it visits $n$ (resp. $m$) for the first time .
    \item By induction,  $\varphi^{i-1}(n)=\dest{G}{v_0}{p_n})=u$ 
      and $\varphi^{i-1}(m)=\dest{G}{v_0}{p_m})=v$, that is, 
      $u$ and $v$ are vertices where the agent 
      is located corresponding to $n$ and $m$ in its map.
    \item Since $(n,p),(m,q)\in \prevert^i$ and from from homomorphism definition, 
      the agent has seen two times the triangle $uvw$. Once inside $B_G(\varphi^{i-1}(n),1))$ 
      when it visits $n$ such that $\psi(n)=u$ and once inside
      $B_G(\varphi^{i-1}(m),1))$ 
      when it visits $m$ such that $\psi(m)=v$
    \item Consequently, $w\in B_G(\varphi(n),1)\cap B_G(\varphi(m),1)$ and 
      there is a vertex $w\in V(G)$ 
      such that $\varphi^{i}(u)w\varphi^{i}(m)$ is a triangle in $E(G)$, 
      we get the first case of this Lemma.
     \end{itemize}
    
    We now prove the case $(n,p)\cloequiv (m,q)$.
    \begin{itemize} 
    \item By definition, $(n,p)\cloequiv (m,q)$ implies that
      there is a sequence of pre vertices
      $(n_1,p_1),...,(n_k,p_k)\in \prevert$ such that 
      \begin{itemize}
      \item $(n,p)=(n_1,p_1)$ and $(m,q)=(n_k,p_k)$
      \item $\forall 1\leq h<k$, $(n_h,p_h)\equiv (n_{h+1},p_{h+1})$
      \end{itemize}
    \item Moreover, by induction, $\varphi^{i-1}(n_1,...,n_k)= \varphi^{i-1}(n_1)...\varphi^{i-1}(n_k)$ is a path in $G$.
    
    \item From the previous case, for every 
      $(n_h,p_h)\equiv (n_{h+1},p_{h+1})$, there is $w_h\in V(G)$ such that
      
\begin{itemize}
      \item  $\varphi^{i-1}(n_h)\varphi^{i-1}(n_{h+1})w_h \subset G$ 
        is a triangle 

      \item  $\delta_{\varphi^{i-1}(n_h)}(w_h)=p_h$ and $\delta_{\varphi^{i-1}(n_{h+1})}(w_h)=p_{h+1}$. 
      \end{itemize}
      
    \item From the transitivity of $\equiv$ and the injective
      port numbering function of $G$, we get that 
      $\delta_{\varphi^{i-1}(n_{h+1})}(w_h)=p_{h+1}=\delta_{\varphi^{i-1}(n_{h+1})}(w_{h+1})$.
    \item Consequently, we get that $w_h=w_{h+1}$ for every $1\leq h<k$.
    \item We prove that there is a unique vertex $w\in V(G)$ such that
      $\delta_{\varphi^{i-1}(n)}(w)=p $ and $\delta_{\varphi^{i-1}(m)}(w)= q $.
    \end{itemize}
  \end{proof}

% we prove the following Lemma.
% \begin{lem}
%   At phase $i$, for every $n\in V(\mapG^{i})$, $\psi(n)=\varphi^{i}(n)$. 
% \end{lem}

% \begin{proof}
%   \begin{itemize}
%   \item For every vertex explored $n$ phase $i$, $n$ belongs to $\mapG^{i-1}$.
%   \item So, by induction, $\varphi^{i-1}(n)$ is well defined and $\psi(n)=\varphi^{i-1}(n)$.
%   \end{itemize}

% \end{proof}

  The above Lemma permits us to define the homomorphism $\varphi^i: V(\mapG^i) \to V(G)$ 
  such that for every $m\in V(\mapG^{i-1})$, 
  $\varphi^i(m)=\varphi^{i-1}(m)$ and  for every $m\in V(\mapG^{i}\setminus \mapG^{i-1})$, 
  $\varphi^i(m)=\destjc{G}{n}{p}$ such that $m=\newvert{n,p}$.

 It is straight forward to prove that $\varphi^i$ is well defined for vertices.
 So, we prove in the next lemma that $\varphi^i:\mapG^i\to G$ is an homomorphism, 
 that is, the image of an edge is an edge. 

  \begin{lem}\label{lem:morphism-vertex}
    At phase $i$, for every edge $nm\in E(\mapG^i)$,
    $\varphi^i(nm)\in E(G)$
  \end{lem}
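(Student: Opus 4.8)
The plan is to argue by a case analysis on how the edge $nm\in E(\mapG^i)$ was created, since Algorithm~\ref{algo:weetman} produces edges of $\mapG^i$ in exactly three ways: either $nm$ already belongs to $\mapG^{i-1}$, or $nm$ is a \emph{vertical} edge added at Line~\ref{line:verticaledge}, or $nm$ is a \emph{horizontal} edge added at Line~\ref{line:horizontaledge}. Throughout I will use that $\varphi^i$ restricts to $\varphi^{i-1}$ on $V(\mapG^{i-1})$, that on each new vertex $\varphi^i(\newvert{n,p})=\destjc{G}{\varphi^{i-1}(n)}{p}$ is well defined by Lemma~\ref{lem:equiv-preserved}, and the key fact that, since the agent is physically located at $\varphi^{i-1}(n)$ when it explores $n$ during phase~$i$, one has $\Balls[n]\simeq B_G(\varphi^{i-1}(n),1)$ with $\psi(n)$ corresponding to $\varphi^{i-1}(n)$.

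If $nm\in E(\mapG^{i-1})$, the claim is immediate from the induction hypothesis: $\varphi^{i-1}$ is a homomorphism on $\mapG^{i-1}$, so $\varphi^{i-1}(n)\varphi^{i-1}(m)\in E(G)$, and these images are unchanged under $\varphi^i$. If $nm$ is vertical, then $n\in\cluster\subseteq V(\mapG^{i-1})$ is explored during phase~$i$ and $m=\newvert{n,p}$ for some $(n,p,q)\in\prevert$. Writing $u=\varphi^{i-1}(n)$, the pre-vertex encodes an edge of $\Balls[n]$ incident to $\psi(n)$ labelled $(p,q)$; transporting it through $\Balls[n]\simeq B_G(u,1)$ yields a genuine edge $uw\in E(G)$ with $\port{u}{w}=p$, and by definition $\varphi^i(m)=\destjc{G}{u}{p}=w$, so that $\varphi^i(nm)=uw\in E(G)$.

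For a horizontal edge $\newvert{n,p}\,\newvert{n,q}$ arising from a tuple $(n,p,q,(r,s))\in\nivsup$, the defining property of $\nivsup$ supplies a triangle $uvw\subseteq\Balls[n]$ with $u=\psi(n)$, $\port{u}{v}=p$, $\port{u}{w}=q$ and $\lambda(vw)=(r,s)$. Using again $\Balls[n]\simeq B_G(\varphi^{i-1}(n),1)$, this triangle transports to a triangle in $G$ with apex $\varphi^{i-1}(n)$ whose two remaining vertices are $\destjc{G}{\varphi^{i-1}(n)}{p}=\varphi^i(\newvert{n,p})$ and $\destjc{G}{\varphi^{i-1}(n)}{q}=\varphi^i(\newvert{n,q})$; the side of this triangle opposite the apex is exactly $\varphi^i(\newvert{n,p})\varphi^i(\newvert{n,q})$, which is therefore an edge of $G$.

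The step I expect to be the main obstacle is the horizontal case, where one must check that the endpoints $\newvert{n,p}$ and $\newvert{n,q}$ are mapped coherently even though each is a $\cloequiv$-equivalence class that may have merged several pre-vertices; this is precisely what Lemma~\ref{lem:equiv-preserved} secures, so that the two triangle vertices above are independent of the chosen representative. The remaining verifications, namely that the port labels recorded in $\prevert$ and $\nivsup$ agree with those of the corresponding edges of $G$ under the ball isomorphism, are routine.
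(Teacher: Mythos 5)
Your proposal is correct and follows essentially the same route as the paper's proof: a case split into vertical edges (handled via the pre-vertices in $\prevert$ and the definition $\varphi^i(\newvert{n,p})=\destjc{G}{\varphi^{i-1}(n)}{p}$) and horizontal edges (handled via the triangles recorded in $\nivsup$, transported through $\Balls[n]\simeq B_G(\varphi^{i-1}(n),1)$ since the agent sits at $\varphi^{i-1}(n)$), with well-definedness across $\cloequiv$-classes secured by Lemma~\ref{lem:equiv-preserved}, exactly as the paper does. Your explicit first case for edges already in $E(\mapG^{i-1})$ is left implicit in the paper (it treats only $E(\mapG^i\setminus\mapG^{i-1})$ and appeals to the induction hypothesis for the rest), but the argument is the same.
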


  \begin{proof}
    % First, we prove that the image of a vertex is a vertex in $G$ 
    % and then we prove the Lemma for edges.
    % \parag{Vertices}

    % From Lemma \ref{lem:equiv-preserved}, for every pre-vertices
    % $(n,p),(m,q)$
    % such that $[(n,p)]=[(m,q)]$, there is a unique vertex $w\in V(G)$ 
    % linking $\varphi^{i}(n)$ and $\varphi^{i}(m)$ such that
    % $\delta_{\varphi^{i-1}(n)}(w)=p$ and  $\delta_{\varphi^{i-1}(m)}(w)=q$.
    % So, $\varphi^i(\newvert{(n,p)})=w$,  
    % is well defined and is independent of the choice 
    % of the pre-vertex $(n,p)$. 
    
    % It remains to prove that the image of an edge is an edge.

  \begin{figure}[t]
    \centering
    \includegraphics[height=3cm]{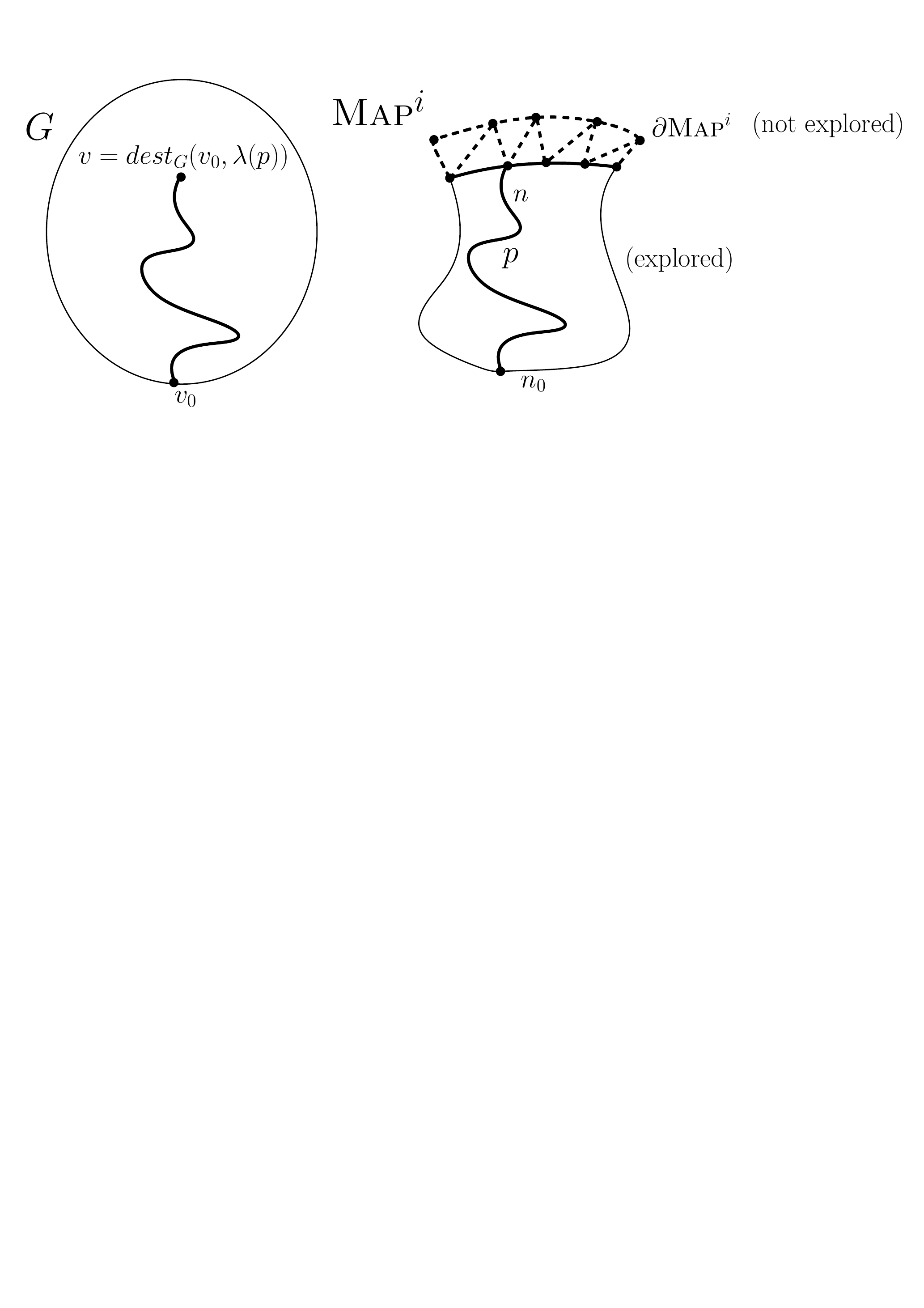}
    \label{img:map}
  \end{figure} 
    First, we prove the lemma for vertical edges
    and then, for horizontal edges.
    
    \parag{Vertical edges}
    \begin{itemize}
    \item We know that for every edge $nm\in E(\mapG^{i})$ such 
      that $n\notin V(\borderM)^i$ and $m\in V(\borderM)^i$ (vertical 'edge'), 
      there is a pre-vertex $[n,p]\in\prevert^i$ such that
      $\newvert{n,p}=m$.
    \item Moreover,  there is two vertices $u,v\in V(G)$ such that
      $\varphi^{i}(m)=v\neq \varphi^{i}(n)=u$ and $\delta_u(v)=p$.
    \item we get that $\varphi^{i}(nm)=\varphi^i(n)\varphi^i(m)=uv\in E(G)$
      is well defined for every ''vertical'' edge $nm\in E(\mapG^i)$ 
    %   there is a ''vertical'' edge $uv\in E(B_G(\varphi^i(n),1))$ 
    %   labelled by $(p,q)$ such that $\varphi^{i-1}(n)=u$.
    % \item From the previous case, $\varphi^i(\newvert{(n,p)})=v\neq u$.
    % \item Consequently, $\varphi^i(n\newvert{n,p})=\varphi^i(n)\varphi(\newvert{n,p})=uv\in E(G)$ is well defined for every vertical edge in $\mapG^i$. 
    \end{itemize}

    \parag{Horizontal edges}
    Now, we prove that  $\varphi^i$ correctly maps ''horizontal'' edges in $\mapG$ to $G$.
    \begin{itemize}
    \item By construction, every edge $\newvert{n,p}\newvert{m,q}\in E(\mapG^i\setminus\mapG^{i-1})$
      implies that there is $(n,p,p',(r,s))\in \nivsup$ such that 
      $(n,p),(n,p')\in \prevert$ are two pre-vertices and
      $(n,p)\not\equiv (n,p')\equiv (m,q)$ (injective port numbering function of $G$) .
    \item Moreover, the agent located on $u=\varphi^{i-1}(n)$ has seen a
      triangle $uww'\in B_G(\varphi^{i-1}(n),1)$
      such that 
      \begin{itemize}
      \item the edge $ww'$ is labelled $(r,s)$
      \item there is no $m,m'\in V(B_{\mapG^{i-1}}(n,1))$ such that $\delta_n(m)=p$ and $\delta_n(m')=p'$
      \end{itemize}
    \item From the previous case, $w=\varphi^i(\newvert{n,p})$ and
      $w'=\varphi^i(\newvert{m,q})$.
    \item Moreover, since $p\neq p'$, $w\neq w'$ and thus, $\varphi^i(n\newvert{n,p})=vw\neq \varphi^i(n\newvert{n,p'})=vw'$.
    \item Thus, $\varphi^i(\newvert{n,q}\newvert{m,q})=\varphi^i(\newvert{n,q})\varphi^i(\newvert{m,q})=ww'\in V(G)$ is well defined for every horizontal edge $\newvert{n,q}\newvert{m,q}$ in $E(\mapG^i)$
    \end{itemize}
  \end{proof}

  By induction, Theorem is already proved for every vertex
  included in $\mapG^{i-1}$ which are not explored phase $i$.
  Consequently, we only prove Theorem \ref{thm:inj-surj-map} for 
  \begin{itemize}
  \item vertices explored phase $i$ (Lemma \ref{lem:injective-in-map} and \ref{lem:surjective-in-map} ), i.e., that belongs to $\borderM^{i-1}\setminus \borderM^i$
  \item and newly discovered vertices (Lemma \ref{lem:injective-on-border}), i.e., that belongs to $\mapG^i\setminus \mapG^{i-1}$

  \end{itemize}

  \begin{lem}\label{lem:injective-in-map}
%    for every $n\in V(\mapG^i\setminus \borderM^{i})$, $\varphi_{|N_{\mapG^i}(n)}$ injective .
    for every $n\in V(\mapG^i\setminus \borderM^{i})$, for every edges $m_1m_1', m_2m_2'\in E(B_{\mapG^i}(n,1))$, if $m_1m_1'\neq m_2m_2'$ then $\varphi(m_1m_1')\neq \varphi(m_2m_2')$.
  \end{lem}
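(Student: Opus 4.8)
The plan is to reduce the edge statement to the injectivity of $\varphi^i$ on the closed neighbourhood $\{n\}\cup N_{\mapG^i}(n)$. Indeed, since $G$ and $\mapG^i$ are simple, every edge of $B_{\mapG^i}(n,1)$ has both endpoints in $\{n\}\cup N_{\mapG^i}(n)$ and is determined by the unordered pair of its endpoints. Hence, if I can show that $\varphi^i$ restricted to $\{n\}\cup N_{\mapG^i}(n)$ is injective, the lemma follows: were $m_1m_1'\neq m_2m_2'$ two edges of the ball with $\varphi^i(m_1m_1')=\varphi^i(m_2m_2')$, then $\{\varphi^i(m_1),\varphi^i(m_1')\}=\{\varphi^i(m_2),\varphi^i(m_2')\}$, and vertex injectivity would force $\{m_1,m_1'\}=\{m_2,m_2'\}$, contradicting $m_1m_1'\neq m_2m_2'$.

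First I would record that $\varphi^i$ preserves the port number on every edge incident to $n$. For a neighbour $m$ of $n$ this is exactly the way $\varphi^i$ is built: following the label $\lambda(nm)$ from $\varphi^i(n)$ reaches $\varphi^i(m)$ (by the corollary writing $\varphi^i(m)=\dest{G}{\varphi^i(n_0)}{p}$ for a path $p$ ending with the edge $nm$, or directly from $\varphi^i(\newvert{n,p})=\destjc{G}{\varphi^i(n)}{p}$ together with Lemma~\ref{lem:morphism-vertex}). Thus $\delta_{\varphi^i(n)}(\varphi^i(m))=\delta_n(m)$. Now take two distinct neighbours $m_1,m_2$ of $n$: the injective port numbering of $\mapG^i$ gives $\delta_n(m_1)\neq\delta_n(m_2)$, so by the port preservation above $\delta_{\varphi^i(n)}(\varphi^i(m_1))\neq\delta_{\varphi^i(n)}(\varphi^i(m_2))$, and since the port numbering of $G$ at $\varphi^i(n)$ is injective we conclude $\varphi^i(m_1)\neq\varphi^i(m_2)$. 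Moreover $\varphi^i(m)\neq\varphi^i(n)$ for every neighbour $m$, because $\varphi^i(m)$ sits at port $\delta_n(m)$ from $\varphi^i(n)$, hence is a genuine neighbour of $\varphi^i(n)$ in $G$ and $G$ has no loop. Therefore $\varphi^i$ is injective on $\{n\}\cup N_{\mapG^i}(n)$, and the edge statement follows from the reduction above.

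The only delicate point is the port-preservation claim for the edges incident to $n$, where the two situations hidden in ``$n\in V(\mapG^i\setminus\borderM^i)$'' must be separated. If $n$ was already explored at an earlier phase, the property holds by the induction hypothesis on $\varphi^{i-1}$ (the image and the ports of its incident edges are unchanged, as $\mapG^{i-1}\subseteq\mapG^i$). If $n$ is explored during phase $i$, it holds by the definition of $\varphi^i$ on the pre-vertices discovered from $n$ and Lemma~\ref{lem:morphism-vertex}. In both cases the passage of the isomorphism test at Line~\ref{line:checkIso} is what guarantees that the local picture $B_{\mapG^i}(n,1)$ we reason about really coincides, through the port-preserving identification realised by $\varphi^i$, with the true ball $B_G(\varphi^i(n),1)=\getBino(n)$, so that no edge between two neighbours of $n$ is missing from the analysis. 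This last point is the main obstacle and the place where being at an explored (non-border) vertex is essential.
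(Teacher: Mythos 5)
Your proposal is correct and takes essentially the same route as the paper: the paper's three-case analysis ($m_1=m_2=n$; $m_1=n$, $m_1'=m_2$; neither edge incident to $n$) amounts exactly to your reduction to injectivity of $\varphi^i$ on the closed neighbourhood of $n$, which both arguments obtain from the injective port numberings of $\mapG^i$ and $G$ together with port preservation by $\varphi^i$ (via Lemma~\ref{lem:morphism-vertex} and the induction hypothesis). Your uniform packaging of the cases is cleaner than the paper's, but the key ingredients coincide.
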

  \begin{proof}

  \begin{figure}[h]
    \centering
    \includegraphics[height=3.5cm]{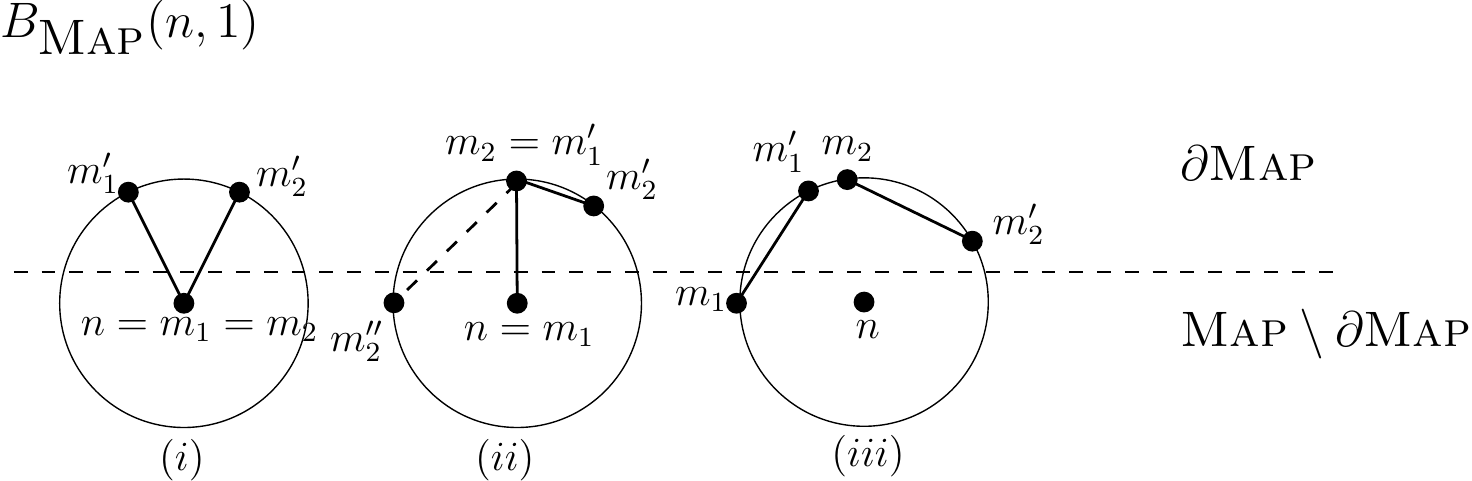}
    \caption{Different cases of Lemma 5.9}
    \label{img:lemma59}
  \end{figure} 
    Let $n\in V(\mapG^i\setminus \borderM^{i})$ be a vertex explored phase $i$ and let $m_1m'_1,m_2m'_2\in E(B_{\mapG^i}(n,1))$ such that $m_1m'_1\neq m_2m'_2$.
    Three cases appear, 
    \begin{enumerate}[i)]
    \item  $m_1=m_2=n$ and  $m_1'\neq m'_2$.
      \begin{itemize}
      \item Since $n\in V(\mapG^i\setminus \borderM^{i})$, by induction, $\Balls[n]\simeq B_G(\varphi^i(n),1)$
      \item Since $G$ and $\mapG^i$ have injective ports labellings, 
        there is two different vertices $w\neq w'\in V(G)$ such that
        $w=\varphi^i(m_1')\neq w'=\varphi^i(m_2')$
      \item Thus, since $\varphi^i$ is an homomorphism, $\varphi^i(w_1w_1')\neq \varphi^i(w_2w_2')$
      \end{itemize}
    \item $m_1=n$ and $m_1'= m_2$
      \begin{itemize}
        \item First, remark that by definition, for every vertex $m\in V(B_{\mapG^i}(n,1))$, there is an edge $nm\in E(\mapG^i)$.
        \item So there are $nm_1', nm_2'\in E(\mapG^i)$, 
        \item From the previous case we know that  $\varphi^i(nm_1')\neq \varphi^i(nm'_2)$.
        \item So $\varphi^i(m_1')\neq \varphi^i(m_2')$ and $\varphi^i(m_1m_1')\neq \varphi^i(m_2m_2')$
      \end{itemize}
    \item $n \neq m_1\neq m_2 \neq n$
      \begin{itemize}
      \item From the previous case $\varphi^i(w_1)\neq \varphi^i(w_2)$ and $\varphi^i(w_1')\neq \varphi^i(w_2')$ in $G$.
      \item Since $\varphi^i$ is an homomorphism, 
        $\varphi^i(w_1w_1')\neq \varphi^i(w_2w_2')$
      \end{itemize}
    \end{enumerate}
  %   \begin{itemize}
  %   \item From Lemma \ref{lem:morphism-vertex}, for every 
  %     $m_1m_1'\neq m_2m_2'\in E(B_{\mapG^i}(n,1))$, 
  %     \begin{itemize}
  %     \item $\exists \varphi(m_1),\varphi(m_1'),\varphi(m_2),\varphi(m_2')\in V(B_G(\varphi^i(n),1)$ such that  $\varphi(m_1)\neq \varphi(m_1')$ and $\varphi(m_2)\neq \varphi(m_2')$.
  %     \item $\varphi(m_1m_1')=\varphi(m_1)\varphi(m_1')\in E(B_G(\varphi(n),1)$ and $\varphi(m_2m_2')=\varphi(m_2)\varphi(m_2')\in E(B_G(\varphi(n),1)$
  %     \item Moreover, w.l.o.g, $m_1m_1'\neq m_2m_2'$ implies that at least $m_1\neq m_2$.
  %   \item Since $\varphi(m_1)\neq\varphi(m_1')$ and $\varphi(m_1
  %     )\neq\varphi(m_2)$, we get that $\varphi(m_1m_1')\neq\varphi(m_2m_2')$
  %   \end{itemize}
  % \end{itemize}

    % $B_{\mapG^i}(n,1)\simeq B_G(\psi(n),1)$ is not sufficient:
    
    % For every $n\in V(\borderM^{i-1}\setminus\borderM^{i})$, for every $nm\neq nm'\in E(\mapG^{i})$, 
    % \begin{itemize}
    % \item If $m,m'\in\mapG^{i-1}$, $\varphi(nm)\neq \varphi(nm')$  (induction hypothesis)
    % \item If $m\in\borderM^i$ and $m'\in\mapG^{i-1}$, from Lemma \ref{lem:morphism-vertex}, 
    %   $\varphi(nm)\neq \varphi(nm')$.
    % \item If $m,m'\in\borderM^i$, from Lemmas \ref{lem:morphism-vertex},
    %   $\varphi(m)\neq \varphi(m')$ since $[(n,\delta_n(m))]\neq [(n,\delta_n(m'))]$.
    % \end{itemize}
  \end{proof}

  \begin{lem}\label{lem:surjective-in-map}
    For every $n\in V(\mapG^i\setminus \borderM^{i})$, $\varphi_{|N_{\mapG^i}}(n)$ surjective
  \end{lem}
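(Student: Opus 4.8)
The plan is to reduce the statement to the vertices explored \emph{during} phase $i$ and then to exploit the isomorphism test performed at Line~\ref{line:checkIso}. First I would observe that every edge created during phase $i$ is either a vertical edge $n\newvert{n,p}$ (added at Line~\ref{line:verticaledge}), whose explored endpoint $n$ lies in the cluster $\cluster$ currently explored, or a horizontal edge $\newvert{n,p}\newvert{n,q}$ (added at Line~\ref{line:horizontaledge}), both of whose endpoints are newly discovered. Hence a vertex of $V(\mapG^{i-1}\setminus\borderM^{i-1})$, already explored before phase $i$ and therefore lying neither in $\cluster$ nor among the new vertices, gains no incident edge at phase $i$. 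Its neighbourhood is thus unchanged, $N_{\mapG^i}(n)=N_{\mapG^{i-1}}(n)$ with $\varphi^i(n)=\varphi^{i-1}(n)$, and the claim follows from the induction hypothesis. It remains to treat $n\in\borderM^{i-1}\setminus\borderM^i$, i.e. the vertices explored during phase $i$.

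For such an $n$, when the agent visits it it calls $\getBino(n)$ (Line~\ref{line:getBino}) and, being located at $\varphi^i(n)$, stores $B_G(\varphi^i(n),1)$ in $\Balls[n]$. Since by hypothesis the agent ends phase $i$ without detecting an error, the test of Line~\ref{line:checkIso} has succeeded, so $\Balls[n]\simeq B_{\mapG^i}(n,1)$ as port-labelled graphs; together these give an isomorphism $\theta\colon B_{\mapG^i}(n,1)\to B_G(\varphi^i(n),1)$. The key point I would then make is that the restriction of $\theta$ to $N_{\mapG^i}(n)$ coincides with $\varphi^i$: indeed $\theta$ sends the centre $n$ to $\varphi^i(n)$ (the agent knows which vertex of the ball is the explored one) and preserves port numbers, so a neighbour $m$ with $\delta_n(m)=p$ is sent to the neighbour of $\varphi^i(n)$ reached through port $p$, which is exactly $\destjc{G}{\varphi^i(n)}{p}=\varphi^i(m)$ by the definition of $\varphi^i$. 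As $\theta$ is a bijection between $N_{\mapG^i}(n)$ and $N_G(\varphi^i(n))$, the map $\varphi^i_{\mid N_{\mapG^i}(n)}$ is in particular surjective, which is the claim.

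An equivalent route, avoiding the explicit identification of $\theta$ with $\varphi^i$ on the neighbourhood, is a counting argument: the isomorphism of balls yields $\card{N_{\mapG^i}(n)}=\card{N_G(\varphi^i(n))}$, while $\varphi^i$ is a homomorphism (Lemma~\ref{lem:morphism-vertex}) mapping $N_{\mapG^i}(n)$ into $N_G(\varphi^i(n))$ and is injective there by Lemma~\ref{lem:injective-in-map}; an injection between finite sets of equal cardinality is onto.

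I expect the only delicate point to be justifying that $\theta$ genuinely fixes the explored centre and respects ports, so that $\theta$ and $\varphi^i$ agree on $N_{\mapG^i}(n)$. This rests on $\Balls[n]=\getBino(n)=B_G(\varphi^i(n),1)$ being the true port-labelled ball around the agent's current location $\varphi^i(n)$, and on the agent knowing which vertex of the ball it occupies. The reduction also quietly uses that each cluster is explored exactly once along the DFS of the tree of clusters (Corollary~\ref{Weetman-tree-clusters}), which guarantees that a vertex explored before phase $i$ cannot reappear in $\cluster$.
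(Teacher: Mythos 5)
Your proposal is correct and follows essentially the same route as the paper: the paper's proof is exactly your counting argument, combining the fact that $\varphi^i$ is a homomorphism, local injectivity from Lemma~\ref{lem:injective-in-map}, and the isomorphism $\Balls[n]\simeq B_{\mapG^i}(n,1)\simeq B_G(\varphi^i(n),1)$ guaranteed by the test at Line~\ref{line:checkIso}, to conclude that the injection onto a set of equal size is onto. Your first route via the explicit port-preserving isomorphism $\theta$, and the preliminary reduction to vertices explored during phase $i$, are just more detailed renditions of the same argument (the paper performs that reduction once for the whole theorem, before the lemmas).
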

  \begin{proof}
    \begin{itemize}
    \item Since $\varphi^i$ is an homomorphism,
      every adjacent vertex of $n$ has an image via $\varphi^i$.
    \item Moreover, from Lemma \ref{lem:injective-in-map},  $\varphi^i$ is locally injective from $B_{\mapG^i}(n,1)$ to $B_{G}(\varphi^i(n),1)$.
    \item Consequently, every adjacent vertex of $n$ has a unique image via $\varphi^i$.
    \item Finally, $\Balls[\n]\simeq B_{\mapG^i}(n,1)$ ensures that we map in $\mapG$, every vertex present in $B_{G}(\varphi^i(n),1)$ which is isomorphic by induction to $\Balls[n]$. 
    \item So, $\varphi^i$ is locally surjective for every $n\in V(\mapG^i\setminus \borderM^{i})$ explored phase $i$.

    \end{itemize}
%     Suppose that the agent ends a phase $i$ without error.
%     Let $n\in  V(\borderM^{i-1}\setminus\borderM^{i})$ a vertex explored phase $i$.
%     \begin{itemize}
%     \item By induction, there is a vertex $u\in V(G)$ such that $\varphi^{i}(n)=\varphi^{i-1}(n)=u$.
%     \item Since no error is detected, Line \ref{line:checkIso} ensures
%       that $B_{\mapG^i}(n,1)\simeq B_G(\psi(n),1)$.
%     \item Moreover, by induction, $\varphi(B_{\mapG^{i-1}}(n,1))\subseteq B_G(\varphi^i(n),1)$
% \comment{pas sur}
%     \item Since $B_G(\varphi^i(n),1)\simeq B_G(\psi(n),1)$, for every $uv\in B_G(\varphi^i(n),1)$ there is $w\in B_G(\psi(n),1)$ 
      
%     \item So for every edge $uv\in E(G)$, 
%       from  \ref{lem:morphism-vertex}
%       there is an edge $nm\in B_{\mapG^i}(n,1)$  such that
%       $\delta_n(m)=\delta_u(w)$ (resp. $\delta_n(m')=\delta_u(w')$).
%     \item Moreover, from Lemma \ref{lem:injective-in-map}, 
%       $\varphi(nm)=\varphi(w)\varphi(w')$
% $\varphi^i(m)=w$ and $\varphi(m')=w'$ since $\varphi(ww')=\varphi(w)\varphi(w')$
%     \item From Lemma \ref{lem:injective-in-map}, since $\varphi^i$ is loca
%   \item lly injective on $V(\borderM^{i-1}\setminus\borderM^{i})$, $\delta_u(v)\neq \delta_u(w)$.
%   \item Consequently, for every $vw\in E(B_G(\psi(n),1)$, 
%     there is $m_1m_2\in E(B_\mapG(n,1))$ such that $\varphi^i(m_1m_2)=vw$.
%   \end{itemize}
\end{proof}

  \begin{lem}\label{lem:injective-on-border}
    for every $n\in V(\borderM^{i})$, $\varphi_{|N_\mapG^i}(n)$ injective
  \end{lem}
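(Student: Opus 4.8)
The plan is to reduce local injectivity at a border vertex $n$ to a single structural property of the computed map, namely that the port numbering of $\mapG^i$ is injective at $n$. Granting that $\varphi^i$ preserves the port labels carried by the edges incident to $n$ (established in the next paragraph), write $v=\varphi^i(n)$: every neighbour $x$ of $n$ then satisfies $\varphi^i(x)=\destjc{G}{v}{\port{n}{x}}$. Since the port numbering $\delta_v$ of $G$ is injective, the assignment $x\mapsto\varphi^i(x)$ on $N_{\mapG^i}(n)$ factors as $x\mapsto\port{n}{x}$ followed by the injective map $\destjc{G}{v}{\cdot}$; hence $\varphi^i_{\mid N_{\mapG^i}(n)}$ is injective exactly when the ports $\port{n}{x}$ are pairwise distinct over the neighbours $x$ of $n$. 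So it suffices to prove that no two distinct edges incident to $n$ carry the same port at $n$.

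To prove port-preservation I would inspect the two ways an edge can become incident to the border vertex $n$ during phase $i$. If $xn$ is a vertical edge added at Line \ref{line:verticaledge} from a pre-vertex with $\newvert{x,p}=n$, then Lemma \ref{lem:equiv-preserved} exhibits the unique vertex $w=\varphi^i(n)$ of $G$ with $\port{\varphi^i(x)}{w}=p$, so the label $(p,q)$ of the edge is the genuine port label of $\varphi^i(x)w$ in $G$ and in particular $\port{n}{x}=q=\port{w}{\varphi^i(x)}$. If instead $xn$ is a horizontal edge added at Line \ref{line:horizontaledge} from a tuple of $\nivsup^i$, the triangle $uvw\subseteq\Balls[m]$ that generated it at Line \ref{line:new-hor-relation} has $v=\varphi^i(n)$, $w=\varphi^i(x)$ and carries $\lambda(vw)$ as its label, whence again $\port{n}{x}=\port{\varphi^i(n)}{\varphi^i(x)}$. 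Thus along every incident edge the port read in $\mapG^i$ coincides with the corresponding port in $G$, which is exactly what Lemma \ref{lem:morphism-vertex} almost gives and which I need here in its sharpened port-aware form.

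The delicate point, which I expect to be the main obstacle, is the injectivity of the port numbering at $n$: I must show that two distinct neighbours $x\neq y$ of $n$ never satisfy $\varphi^i(x)=\varphi^i(y)$. By port-preservation such a coincidence would make $x$ and $y$ leave $n$ by the same port, so the whole difficulty is to verify that $\cloequiv$ merges into the single border vertex $n$ exactly the pre-vertices that denote one vertex of $G$, and keeps apart those denoting different vertices. When $x$ and $y$ are discovered from a common explored vertex $m$ this is immediate, since $\delta_{\varphi^i(m)}$ is injective: $\varphi^i(x)=\varphi^i(y)$ forces the two edges to leave $m$ by the same port, so $x$ and $y$ arise from the same pre-vertex and $x=y$. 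When they are discovered from distinct explored vertices the identifications recorded in $\equiv$ form, by Lemma \ref{lem:equiv-preserved}, a chain of triangles joining consecutive vertices of the cluster explored at phase $i$ to the common neighbour $w=\varphi^i(n)$; the induction hypothesis that every explored vertex $m$ has $\Balls[m]\simeq B_{\mapG^i}(m,1)\simeq B_G(\varphi^i(m),1)$, guaranteed because the isomorphism test of Line \ref{line:checkIso} passed, makes this chain a lift of a closed walk around $w$ in $G$. Here the simple connectivity of the ambient graph (Proposition \ref{prop:weetman-SC} in the Weetman case) is what forces such a closed walk to lift to a closed one, so the two pre-vertices share the same base and $x=y$. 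Combining this with port-preservation yields that distinct neighbours of $n$ carry distinct ports, hence distinct images, which proves that $\varphi^i_{\mid N_{\mapG^i}(n)}$ is injective and closes the last case of the induction for Theorem \ref{thm:inj-surj-map}.
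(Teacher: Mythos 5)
Your first two paragraphs match the closing step of the paper's own proof: the paper also concludes by combining the injective port labelling of $\mapG^i$ (giving $\port{n}{m}\neq\port{n}{m'}$ for distinct neighbours $m,m'$ of $n$) with port preservation along the edges of $\mapG^i$ (``otherwise $\delta_n(m)\neq\delta_{\varphi(n)}(\varphi(m))$''), and your port-preservation argument for vertical edges via Lemma~\ref{lem:equiv-preserved} and for horizontal edges via the $\nivsup$ tuples is exactly the port-aware reading of Lemma~\ref{lem:morphism-vertex}. You have also correctly located the delicate point, namely that $\cloequiv$ must merge pre-vertices pointing to one vertex of $G$ and keep apart those pointing to different ones; the paper's own treatment of this point is admittedly terse.

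The gap is in how you resolve that delicate point: you invoke the simple connectivity of $G$ (Proposition~\ref{prop:weetman-SC}) to force the chain of triangles to ``lift to a closed walk''. This is not admissible here. Lemma~\ref{lem:injective-on-border} is a step in the proof of Theorem~\ref{thm:inj-surj-map}, which is stated for an \emph{arbitrary} graph $G$ with binoculars labelling and is precisely what underpins the safety statement that ``the agent cannot halt before exploring every vertex of \emph{any} graph'': on non-Weetman inputs the algorithm must either explore correctly or run forever, so no topological hypothesis on $G$ is available when proving the map invariants. Assuming simple connectivity at this point makes the safety argument circular — the test of Line~\ref{line:checkIso} exists exactly because, on graphs that are not simply connected, the identifications recorded in $\cloequiv$ can be wrong (see the paper's remarks on duplicated vertices in $\mapG$). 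Worse, the property you appeal to, that closed walks lift to closed walks, is a covering-space property of $\varphi^i$, i.e.\ essentially the content of Theorem~\ref{thm:inj-surj-map} and Corollary~\ref{cor:cover} themselves, so even restricted to Weetman graphs your argument presupposes what is being proven. The paper instead stays purely local: the chain argument inside Lemma~\ref{lem:equiv-preserved} (the step $w_h=w_{h+1}$, which uses only the injectivity of the port numbering of $G$, not its topology) shows that an entire $\cloequiv$-class determines a \emph{single} apex $w\in V(G)$ with consistent ports; distinctness of classes, as recorded through $\prevert$ and $\nivsup$, then combines with the injective port labelling of $\mapG^i$ and port preservation to give injectivity of $\varphi^i$ on $N_{\mapG^i}(n)$, with no global assumption on $G$. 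To repair your proof you should replace the lifting argument in the distinct-source case by this unique-apex consequence of Lemma~\ref{lem:equiv-preserved}.
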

  \begin{proof}
    For every phase $i$, for every $n\in \borderM^i$, two case appears:

    If  $n\in V(\borderM^{i-1})$, then $n$ is not explored phase $i$ 
    and by induction, Lemma is proved.

    If $n$ is inserted into $\mapG$ at phase $i$, that is,
    $n\in V( \borderM^i\setminus  \borderM^{i-1})$, then 
    for every  neighbours
    $m\neq m'$ of $n$ in $V(\mapG^{i})$, two cases appear:
    \begin{itemize}
    \item Either, $m$ and $m'$ in $V(\mapG^i\setminus \borderM^{i})$, and in this case,  
      $\Big((m,\delta_m(n)),(m',\delta_{m'}(n))\Big)\in\cloequiv$ .
      %($[(m,\delta_m(n))]\neq [(m',\delta_{m'}(n))]$.  
      By induction,  $\varphi^{i}(m)\neq \varphi^{i}(m')$.
    \item Either, $m$ and $m'$ belong to $\borderM^i$.
      \begin{itemize}
      \item In such a case, since there is always a vertex $\ell$ 
        (resp.  $\ell'$) such that 
        $(\ell,\delta_\ell(n),\delta_\ell(m),\lambda(nm))\in \nivsup$ 
        (resp. $(\ell',\delta_{\ell'}(n),\delta_{\ell'}(m'),\lambda(nm'))\in\nivsup$)
      \item we get that $ [\ell,\delta_{\ell}(m)] \neq [\ell,\delta_{\ell}(n)]=
        [\ell',\delta_{\ell'}(n)]\neq [\ell',\delta_{\ell'}(m')]$.
      \item From previous case,  $\varphi^{i}(\ell)\neq \varphi^{i}(\ell')$.
      \item Since $\mapG^i$ has an injective port labelling, $\delta_n(m)\neq \delta_n(m')$.
      \item So $\varphi^i(m)\neq \varphi^i(m')$ since otherwise, W.l.o.g. 
        $\delta_n(m)\neq \delta_{\varphi(n)}(\varphi(m))$.

      % \item So, for every vertex $n\in \borderM^i$, for every edges $mn,m'n$ 
      % attached to $n$ in $\mapG^i$, $\varphi^{i}(mn)=vu\neq \varphi^{i}(m'n)=wu$, 
      % $\varphi^i$ is locally injective.
    \end{itemize}
  \end{itemize}
  
      % \item For every edge edge $nm$ in $E(\mapG)$ such that $m\in \borderM^i$,
      %   W.l.o.g, there is an entry $(\ell,q,q',(r,s))\in \nivsup$
      %   where $(\ell,q),(\ell,q')\in \prevert^i$ are pre vertices.
      % \item Such an entry implies that the agent has seen an ''horizontal'' 
      %   edge from its current location vertex in $G$ corresponding to the 
      %   vertex $\ell$ in $\mapG^i$ at phase $i$. 
      %   That is, by induction, is equal to $\varphi^{i-1}(\ell)$ .
      % \item So $\varphi^i(nm)$ is unique and well defined in $G$, for every $nm\in E(\borderM^i)$.
  
\end{proof}

  We proved the Theorem \ref{thm:inj-surj-map}.

  % To finish this study, we have to prove that the agent always explore graphs if it halts.
  It remains to prove that $\varphi^i$ preserves triangles in order to prove the correctness of the map construction.
  \begin{lem}
    for every vertex $n\in V(\mapG^{i})$ explored phase $i$, for every triangle $v_1v_2v_3\subseteq B_G(\varphi^i(n),1)$, there is a triangle $m_1m_2m_3\subseteq B_{\mapG^i}(n,1)$ such that $\varphi(m_h)=v_h$ for every $1\leq h \leq 3$ and $\varphi(m_1m_2m_3)=v_1v_2v_3$ 
  \end{lem}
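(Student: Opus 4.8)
The plan is to leverage the explicit isomorphism test performed at Line~\ref{line:checkIso} together with the local injectivity and surjectivity of $\varphi^i$ already established for explored vertices, and to upgrade the resulting vertex bijection into a genuine graph isomorphism on the ball around $n$. The starting point is that, since $n$ is explored during phase $i$, the agent queried its binoculars while located on $\varphi^i(n)$ at Line~\ref{line:getBino}, so $\Balls[n] \simeq B_G(\varphi^i(n),1)$; and because no error was detected, the \textbf{then} branch of Line~\ref{line:checkIso} was taken, which certifies $\Balls[n] \simeq B_{\mapG^i}(n,1)$. Hence $B_{\mapG^i}(n,1)$ and $B_G(\varphi^i(n),1)$ are isomorphic graphs, in particular having the same number of vertices and the same number of edges.

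First I would verify that $\varphi^i$ restricted to $B_{\mapG^i}(n,1)$ is an isomorphism onto $B_G(\varphi^i(n),1)$, and not merely a vertex bijection. On vertices this follows from Theorem~\ref{thm:inj-surj-map}: $\varphi^i_{|N_{\mapG^i}(n)}$ is injective and surjective, and since the graph is simple $\varphi^i(n) \notin N_G(\varphi^i(n))$, so $\varphi^i$ maps $\{n\} \cup N_{\mapG^i}(n)$ bijectively onto $\{\varphi^i(n)\} \cup N_G(\varphi^i(n))$. On edges, Lemma~\ref{lem:injective-in-map} states exactly that $\varphi^i$ is injective on $E(B_{\mapG^i}(n,1))$, giving $|E(B_{\mapG^i}(n,1))| \le |E(B_G(\varphi^i(n),1))|$; combined with the equality of these cardinalities provided by the isomorphism above, the homomorphism $\varphi^i$ must send $E(B_{\mapG^i}(n,1))$ bijectively onto $E(B_G(\varphi^i(n),1))$. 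Consequently $\varphi^i$ \emph{reflects} edges on this ball: whenever $\varphi^i(m)\varphi^i(m')$ is an edge of $B_G(\varphi^i(n),1)$, the pair $mm'$ is an edge of $B_{\mapG^i}(n,1)$.

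Then the conclusion is immediate. Given a triangle $v_1v_2v_3 \subseteq B_G(\varphi^i(n),1)$, let $m_h$ be the unique vertex of $B_{\mapG^i}(n,1)$ with $\varphi^i(m_h)=v_h$, which exists and is unique by the vertex bijection; the $m_h$ are pairwise distinct because the $v_h$ are. Each edge $v_hv_{h'}$ belongs to $B_G(\varphi^i(n),1)$, so by edge reflection each $m_hm_{h'}$ is an edge of $B_{\mapG^i}(n,1)$, and therefore $m_1m_2m_3$ is a triangle with $\varphi^i(m_1m_2m_3)=v_1v_2v_3$, as required.

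The main obstacle is precisely the edge-reflection step: a homomorphism that is bijective on vertices is not automatically an isomorphism, so one cannot pull back the edges of the triangle by invoking local surjectivity alone. The argument must combine the edge-cardinality equality coming from the isomorphism test of Line~\ref{line:checkIso} with the edge-injectivity of Lemma~\ref{lem:injective-in-map}. This is also where the hypotheses are implicitly used, since in a graph whose map carries a local error this test fails, the \textbf{else} branch is taken, and edge reflection --- hence the statement --- can genuinely break down.
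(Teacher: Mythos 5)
Your proof is correct and takes essentially the same route as the paper's, which likewise combines the isomorphism test of Line~\ref{line:checkIso} (giving $B_{\mapG^i}(n,1)\simeq \Balls[n]\simeq B_G(\varphi^i(n),1)$) with Lemma~\ref{lem:injective-in-map} to pull the triangle back into the map. If anything, you are more careful than the paper: your edge-counting step, upgrading edge-injectivity plus equal edge cardinality into edge reflection by $\varphi^i$ itself, makes explicit a point the paper's two-line proof glosses over when it passes from an abstract isomorphism of the two balls to the claim that $m_1m_2m_3$ is a triangle iff its $\varphi^i$-image is.
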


  \begin{proof}

    for every $n\in V(\borderM^{i-1}\setminus\borderM^{i})$ and 
    for every triangle $v_1v_2v_3\subseteq B_G(\varphi^i(n),1)$, 
    \begin{itemize}
    \item From Lemma \ref{lem:injective-in-map}, there is  $m_1m_2m_3\in V(B_{\mapG^{i}}(n,1))$ that $\varphi^i(m_1)\neq \varphi^i(m_2) \neq \varphi^i(m_3)$ and $\varphi(m_1)\varphi(m_2) \neq \varphi(m_2)\varphi(m_3)\neq \varphi(m_3)\varphi(m_1)$
    \item Since we ensure at line \ref{line:checkIso} that $B_{\mapG^i}(n,1)\simeq B_G(\varphi(n),1)$, we get that $m_1m_2m_3$ is a triangle in $\mapG$ if and only if $\varphi^i(m_1)\varphi^i(m_2)\varphi^i(m_3)$ is a triangle in $G$.
    \end{itemize}
  \end{proof}

  Thus, from Lemma above and from Theorem \ref{thm:inj-surj-map}, we get the next corollary, 
  \begin{cor}\label{cor:cover}
    If, at the phase $i$, the agent halts its exploration (without error), then $\mapG^i$ is a simplicial covering of $G$
  \end{cor}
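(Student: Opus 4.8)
The plan is to read off a simplicial covering directly from the homomorphism $\varphi^i$ furnished by Theorem~\ref{thm:inj-surj-map}, using the halting condition to kill the border and the error-free test to certify the binoculars labelling. Recall that a simplicial covering is a graph covering $\varphi$ (a local bijection on neighbourhoods, in the sense of Proposition~\ref{covering}) that additionally satisfies $\nu(n)=\nu(\varphi(n))$ for every vertex.

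First I would observe that the agent halts exactly when $\stac=\emptyset$, and argue that this forces $\borderM^i=\emptyset$. Indeed, every discovered-but-unexplored vertex lies in a cluster that is pushed onto $\stac$ (Line~\ref{line:newcomp}); since the loop only terminates with an empty stack, each such cluster has been popped and entirely explored, setting $\vis$ to its phase number (Line~\ref{line:id}) and thereby removing its vertices from the border. Hence $V(\mapG^i\setminus\borderM^i)=V(\mapG^i)$.

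With an empty border, Theorem~\ref{thm:inj-surj-map} gives that $\varphi^i$ is \emph{both} locally injective and locally surjective at every vertex $n$, so $\varphi^i$ restricts to a bijection $N_{\mapG^i}(n)\to N_G(\varphi^i(n))$. Combining this with Lemma~\ref{lem:morphism-vertex} (that $\varphi^i$ is a homomorphism) and the fact that $\mapG$ carries the port numbering induced from $G$ — so that $\delta_n(m)=\delta_{\varphi^i(n)}(\varphi^i(m))$ along every edge — Proposition~\ref{covering} certifies that $\varphi^i$ is a graph covering. It remains to promote this to a \emph{simplicial} covering, i.e. to verify $B_{\mapG^i}(n,1)\simeq B_G(\varphi^i(n),1)$ for every $n$. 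This is exactly what the error-free halt guarantees: the isomorphism test at Line~\ref{line:checkIso} succeeded in every phase, so at the end of the phase in which $n$ was explored one had $\Balls[n]\simeq B_{\mapG}(n,1)$, where the stored ball $\Balls[n]=B_G(\varphi^i(n),1)$ is precisely the binoculars view at the real vertex $\varphi^i(n)$. The triangle-preservation lemma just proved confirms that this isomorphism is compatible with $\varphi^i$ (it sends the triangles of $B_G(\varphi^i(n),1)$ back to triangles of $B_{\mapG^i}(n,1)$), whence $\nu(n)=\nu(\varphi^i(n))$ and $\varphi^i$ is a simplicial covering.

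The subtle point — and the step I expect to be the main obstacle — is that the per-phase ball checks must collectively certify the \emph{final} ball isomorphism at phase $i$: a vertex verified in an earlier phase must still carry a correct ball once later phases enlarge $\mapG$. This hinges on the layering discipline: an already-explored vertex $n$ sits at some level $h$, all of its neighbours (spread over levels $h-1,h,h+1$) together with the horizontal edges among them recorded in $\nivsup$ are fixed the moment $n$ is explored, and subsequent phases only attach vertices at strictly deeper levels. Hence $B_{\mapG^j}(n,1)=B_{\mapG^i}(n,1)$ for every later $i\ge j$, and the verified isomorphism persists to phase $i$, completing the argument.
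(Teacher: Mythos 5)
Your proof is correct and follows essentially the same route as the paper, which derives Corollary~\ref{cor:cover} in one line from Theorem~\ref{thm:inj-surj-map} (local injectivity everywhere, local surjectivity off the border) together with the triangle-preservation lemma: halting empties $\stac$, hence $\borderM^i=\emptyset$, so $\varphi^i$ is a local bijection on neighbourhoods, and the error-free checks at Line~\ref{line:checkIso} supply the binoculars condition $\nu(n)=\nu(\varphi^i(n))$. Your two added details --- that an empty stack forces an empty border, and that a ball verified in the phase where $n$ is explored persists because later phases only add edges incident to currently explored or newly created vertices --- are precisely the points the paper leaves implicit, and both are argued soundly.
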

So we finish this proof by the following Theorem,
  \begin{thm}
    If the agent halts its exploration, then the graph is explored
  \end{thm}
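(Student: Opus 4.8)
The plan is to combine the structural result of Corollary~\ref{cor:cover} with the topological characterization of simply connected graphs from Proposition~\ref{prop:weetman-SC} (via the fundamental result Lyn77) and the covering machinery established earlier. Suppose the agent halts at some phase $i$; by the halting condition in Algorithm~\ref{algo:weetman} the stack $\stac$ is empty, which means every cluster that appeared in $\mapG$ has been explored and, crucially, the isomorphism check at Line~\ref{line:checkIso} succeeded at every phase (otherwise the agent would ``continue forever'' and never halt). Hence Corollary~\ref{cor:cover} applies and $\mapG^i$ is a simplicial covering $\varphi^i:\mapG^i\to G$.

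First I would argue that $\mapG^i$ is simply connected. Since $\mapG^i$ is constructed incrementally by adding, phase by phase, exactly the vertices and edges seen through the binoculars of explored nodes, and since at each halting phase $B_{\mapG^i}(n,1)\simeq B_G(\varphi^i(n),1)$ for every explored $n$, the map contains no ``extra'' identifications beyond those forced by $\cloequiv$ and $\nivsup$. By construction $\mapG^i$ is built as a quotient of a tree-like exploration structure, so every simple cycle in $\mapG^i$ is contractible; by Proposition~\ref{prop:cycle_to_loop} this yields that $\mapG^i$ is simply connected. Then by the Lyn77 proposition, $\mapG^i$ is isomorphic to its own simplicial universal cover $\widehat{\mapG^i}$.

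Next I would invoke the structure of coverings. Since $G$ is a Weetman graph, it is simply connected by Proposition~\ref{prop:weetman-SC}, so by Lyn77 again $G\simeq\Gu$ is its own universal simplicial cover. A covering $\varphi^i:\mapG^i\to G$ between two simply connected (hence universal) graphs is necessarily an isomorphism: the universal cover property forces the covering map to be injective on vertices, and local bijectivity together with connectedness then forces surjectivity. Therefore $\varphi^i$ is a bijection, so $\border M^i=\emptyset$, i.e.\ no discovered vertex remains unexplored. Because the agent visited (explored) every vertex of $\mapG^i$ during the phases, and $\varphi^i$ biject\-ively identifies $V(\mapG^i)$ with $V(G)$, every vertex of $G$ has been visited, which is exactly the Exploration specification.

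The main obstacle I anticipate is the step establishing that $\varphi^i$ is injective (equivalently, that $\mapG^i$ is truly simply connected rather than merely locally correct). Local injectivity of $\varphi^i$ is given by Theorem~\ref{thm:inj-surj-map}, but a locally injective covering need not be globally injective in general graphs; one genuinely needs the simple connectivity of \emph{both} $\mapG^i$ and $G$ so that the universal-cover uniqueness of Lyn77 collapses the covering to an isomorphism. The delicate point is verifying that the quotient construction via $\cloequiv$ never merges vertices that should stay distinct nor leaves a nontrivial cycle uncontracted; this relies on the Weetman Triangle and Interval Conditions guaranteeing (through Corollary~\ref{prop:convex} and the tree-of-clusters structure of Corollary~\ref{Weetman-tree-clusters}) that the DFS exploration of clusters encounters each vertex along a canonical predecessor path, so the homotopy type of $\mapG^i$ matches that of $G$.
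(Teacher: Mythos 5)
There is a genuine gap, and it stems from a misreading of what the theorem asserts. This theorem is the \emph{correctness} half of the universal exploration specification: it must hold for \emph{every} input graph $G$ on which the agent happens to halt, not only for Weetman graphs (termination on Weetman graphs is the object of the \emph{next} theorem). Your proof invokes Proposition~\ref{prop:weetman-SC} to get simple connectivity of $G$, i.e.\ you assume $G$ is Weetman — a hypothesis the statement does not give you. Worse, your key intermediate claim that $\mapG^i$ is simply connected is asserted, not proved (``built as a quotient of a tree-like exploration structure, so every simple cycle is contractible'' corresponds to no lemma in the paper), and its consequence — that $\varphi^i$ is a bijection and $\borderM^i=\emptyset$ follows — is actually \emph{false} in general: the paper explicitly remarks that on non-Weetman graphs vertices of $G$ can be duplicated in $\mapG$, so the covering $\varphi^i$ need not be injective. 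A proof of this theorem that concludes $\varphi^i$ is an isomorphism cannot be right as stated.

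The fix is also a simplification: injectivity of $\varphi^i$ is irrelevant here. The paper's own argument is much shorter. If the agent halts at phase $i$, the check at Line~\ref{line:checkIso} never failed and the stack is empty, so every vertex of the map was explored, i.e.\ $\borderM^i=\emptyset$; then Theorem~\ref{thm:inj-surj-map} (together with Corollary~\ref{cor:cover}) makes $\varphi^i$ locally bijective everywhere, hence a (simplicial) covering of $G$. A covering onto a \emph{connected} graph is surjective, so every vertex of $G$ has at least one preimage in $V(\mapG^i)$; since the agent physically visited, in $G$, the $\varphi^i$-image of every map vertex it explored, surjectivity alone yields that every vertex of $G$ was visited — even when $|V(\mapG^i)| > |V(G)|$. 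Your first paragraph (halting $\Rightarrow$ no detected error $\Rightarrow$ Corollary~\ref{cor:cover} applies) is exactly right; you then detour through universal covers and the Weetman hypothesis where the single observation ``coverings of connected graphs are surjective'' suffices and keeps the theorem valid for arbitrary graphs, as the universality of the algorithm requires.
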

  \begin{proof}
    \begin{itemize}
    \item   First, if the agent halts its exploration then it does not find any mistake in its map (Line \ref{line:checkIso}).
    \item Moreover, it explores all vertices in its map.
    \item Suppose the agent halts phase $i$. we get that $\borderM^i=\emptyset$ (all vertices explored) and thus, $\mapG^i$ is locally bijective (injective + surjective).
    \item[$\Rightarrow$] $\varphi$ is a covering
    \item[$\Rightarrow$] (surjective covering) $|\mapG^i|>|G|$
    \item[$\Rightarrow$] $G$ is explored
    \end{itemize}
  \end{proof}
  This Theorem means that the agent cannot halts before exploring every vertex of any graph.
  To conclude this part, we have to prove that the agent always halts on Weetman graphs. 
  \begin{thm}
    For every graph $G\in Weetman$, the algorithm explores $G$ and halts
  \end{thm}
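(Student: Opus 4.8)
The plan is to establish termination by showing that the DFS traversal of the tree of clusters, which underlies Algorithm~\ref{algo:weetman}, is finite on Weetman graphs. The key structural facts are already in hand: by Corollary~\ref{Weetman-tree-clusters} the clusters of a Weetman graph form a \emph{finite} tree, and by Theorem~\ref{thm:inj-surj-map} together with Corollary~\ref{cor:cover} the map $\mapG^i$ that the agent maintains is always a correct partial covering that (when the agent halts without error) is a simplicial covering of $G$. I would first argue that on a Weetman graph the isomorphism check at Line~\ref{line:checkIso} never fails, so the agent never enters the ``continue forever'' branch. This is exactly where simple connectivity is used: since $G$ is simply connected (Proposition~\ref{prop:weetman-SC}), its simplicial universal cover $\Gu$ is isomorphic to $G$ itself, so no vertex of $G$ can be duplicated in $\mapG$; the remark that duplication occurs only in non-Weetman graphs is the informal statement of this. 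Concretely, for every explored vertex $n$, the induction of Theorem~\ref{thm:inj-surj-map} gives $B_{\mapG^i}(n,1)\simeq B_G(\varphi^i(n),1)$, which is precisely the condition tested at Line~\ref{line:checkIso}, so the test passes and the new clusters are pushed onto $\stac$.

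Next I would bound the number of phases. Each phase pops one cluster identifier from $\stac$ and explores the corresponding cluster; the clusters pushed during a phase are exactly the successor clusters discovered in $\mapG$. Using Corollary~\ref{Weetman-tree-clusters}, the clusters the agent enumerates correspond bijectively (via $\varphi^i$, which is locally bijective on explored vertices by Theorem~\ref{thm:inj-surj-map}) to the clusters of the finite tree of $G$. Hence each cluster of $G$ is pushed exactly once and popped exactly once, so the number of phases equals the number of clusters of $G$, which is finite and bounded by $|V(G)|$. Since $\stac$ strictly decreases (modulo the finitely many pushes per phase) and the tree structure forbids revisiting a cluster as a new one, the \texttt{while} loop at the outer level terminates when $\stac$ becomes empty.

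Finally I would combine these two points: the loop terminates (finitely many phases, each finite since a cluster is finite and the shortest path $\parc$ exploring it is finite), and at termination $\borderM^i=\emptyset$, so by the preceding Theorem the agent has explored every vertex of $G$. The exploration specification is therefore met with halt. I would close by noting the move complexity is linear: each vertex is explored in exactly one phase, and the DFS navigation between clusters along $a(C)$ (Corollary~\ref{prop:convex}) traverses each tree edge a constant number of times, giving $O(n)$ moves overall.

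The main obstacle, and the step deserving the most care, is proving that the isomorphism test at Line~\ref{line:checkIso} \emph{never} fails on a Weetman graph. Everything else is bookkeeping on the finite cluster tree, but ruling out vertex duplication requires genuinely invoking simple connectivity: one must show that the locally-surjective, locally-injective map $\varphi^i$ built by the algorithm cannot be a nontrivial covering of $G$ while $\borderM^i$ is still nonempty, and that the Triangle and Interval Conditions guarantee the binoculars data $\Balls[n]$ is always consistent with the reconstructed ball $B_{\mapG^i}(n,1)$. I would handle this by appealing to Proposition~\ref{prop:weetman-SC} and the universal-cover characterization (Proposition~\cite{Lyn77}): since $G\simeq\Gu$, any simplicial covering $\varphi^i\colon\mapG^i\to G$ restricted to an explored neighbourhood must be an isomorphism onto its image, which forces $\Balls[n]\simeq B_{\mapG^i}(n,1)$ and hence passes the test.
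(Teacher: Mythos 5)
Your overall skeleton matches the paper's: first show that the error branch at Line~\ref{line:checkIso} is never taken on a Weetman graph, then conclude termination from the DFS over the finite tree of clusters (Corollary~\ref{Weetman-tree-clusters}) and exploration from the preceding theorem. But the step you yourself single out as the main obstacle is where your argument fails. You rule out vertex duplication by appealing only to simple connectivity of $G$ (Proposition~\ref{prop:weetman-SC} plus the universal-cover characterization), arguing that ``any simplicial covering $\varphi^i:\mapG^i\to G$ restricted to an explored neighbourhood must be an isomorphism onto its image.'' This is inapplicable: while $\borderM^i\neq\emptyset$, the map $\varphi^i$ is \emph{not} a simplicial covering --- Theorem~\ref{thm:inj-surj-map} gives only local injectivity everywhere and local surjectivity at explored vertices, and Corollary~\ref{cor:cover} applies only at a halting phase. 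The fact that $G\simeq\Gu$ forbids proper coverings of $G$; it says nothing about the intermediate partial maps the algorithm builds. Duplication is a phase-local phenomenon: a newly discovered vertex $w$ gets one copy in $\mapG$ per $\cloequiv$-class of pre-vertices, and two pre-vertices are merged only through triangles seen \emph{during the same phase}, i.e.\ through a chain of predecessors of $w$ inside the cluster currently being explored. Simple connectivity does not guarantee such chains exist; that is exactly what the Interval Condition supplies (the predecessors of $w$ induce a connected subgraph, hence the required chain of triangles within one cluster), while the Triangle Condition is what guarantees horizontal edges are witnessed by triangles and recorded in $\nivsup$, so that the reconstructed ball $B_{\mapG^i}(n,1)$ agrees with $\Balls[n]$. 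The paper's proof invokes precisely these two conditions --- Interval for uniqueness of vertex images, Triangle for uniqueness of edge images --- to conclude that no error can occur; the remark following the theorem makes the point concrete: in a graph failing the Interval Condition the top vertex $w$ \emph{is} duplicated in $\mapG$ because the agent lacks the needed $\equiv$-relations at the end of the phase, and your argument gives no reason why such a graph could not be simply connected.

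The rest of your argument (each cluster pushed and popped once, finitely many finite phases, $\borderM^i=\emptyset$ at the final phase, hence exploration by the preceding theorem) is sound and agrees with the paper; the complexity aside is not needed for this statement. To repair the proof you should replace the universal-cover appeal by a direct use of the two Weetman conditions, as the paper does, and let simple connectivity enter only where it is actually used there: via Corollary~\ref{cor:cover}, to show that \emph{if} the agent halts without error then $\mapG$ is a covering of $G$ and hence isomorphic to $G$.
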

  \begin{proof} First, note that if the agent halts its execution on a Weetman graph $G$,
    from Lemma \ref{prop:weetman-SC} and Lemma \ref{cor:cover}, $\mapG$ 
    is isomorphic to $G$.
    Moreover, since $G$ satisfied the Interval Condition, every vertex $v$
    of $V(G)$ has a unique image $n$ in $V(\mapG)$.
    Since $G$ satisfied the Triangle Condition, every edge $\varphi(n)\varphi(m)$ in $E(G)$
    has a a unique corresponding image $nm$ in $V(\mapG)$.
    Consequently, no error can be found in any execution of $\A$ on a Weetman graph $G$  
    % during the exploration of a cluster, the agent gather \underline{every} 
    % edges of the child clusters. So every cluster of $G$ are correctly map by 
    % the agent in $\mapG$ and no error can be found.
    % Note that Errors are mostly due to
    % edges which are discovered in the same phase that their own cluster.   
    Finally, since the clusters of $G$ form a tree and 
    since the agent performs a DFS over clusters, 
    the agent will reach a phase where all of the nodes in its map are explored.
  \end{proof}

  \begin{rem}
    Let $u,v,w$ be a triple of vertices in $G$ and let $n=\varphi^{i-1}(u), m=\varphi^{i-1}(v)$ and $l=\varphi^{i-1}(w)$ be one pre image of $u,v,w$ in $\mapG$ and $d(v_0,w)>d(v_0,u)=d(v_0,v)$.
    If $u,v,w$ does not respect the Interval condition from $v_0$ in $G$, then the ''top'' vertex $w$ will be duplicated in $\mapG$. In fact, 
    since there is no sequence of adjacent triangles $v_1v_{2}w, ..., v_kv_{k+1}w$ in $G$ explored in the same phase, the agent has no enough pre-vertices relation ($\equiv$) at the end of phase to 
    gather $(n,\delta_u(w)), (m,\delta_u(w))$ in a same equivalence class.
    Figure \ref{img:notIC} illustrates such an error in $\mapG$.
  \end{rem}
  \begin{rem}
    So, for every phase $i$, if two vertices in $\mapG$ are linked to a same vertex in $\borderM^i$, then we know that there corresponding vertex in $G$ are also linked together with a third vertex which is newly discovered.
    From the previous remark, we know that we can duplicate a vertex $w$ of $G$ in $\mapG$.
    But in this case, every 
    vertical edge linking $w$ in $G$ are partitioned and distributed over every ''copies'' of $w$ in $\mapG$ (not duplicated).
  \end{rem}

  \begin{figure}[t]
    \centering
    \includegraphics[height=4cm]{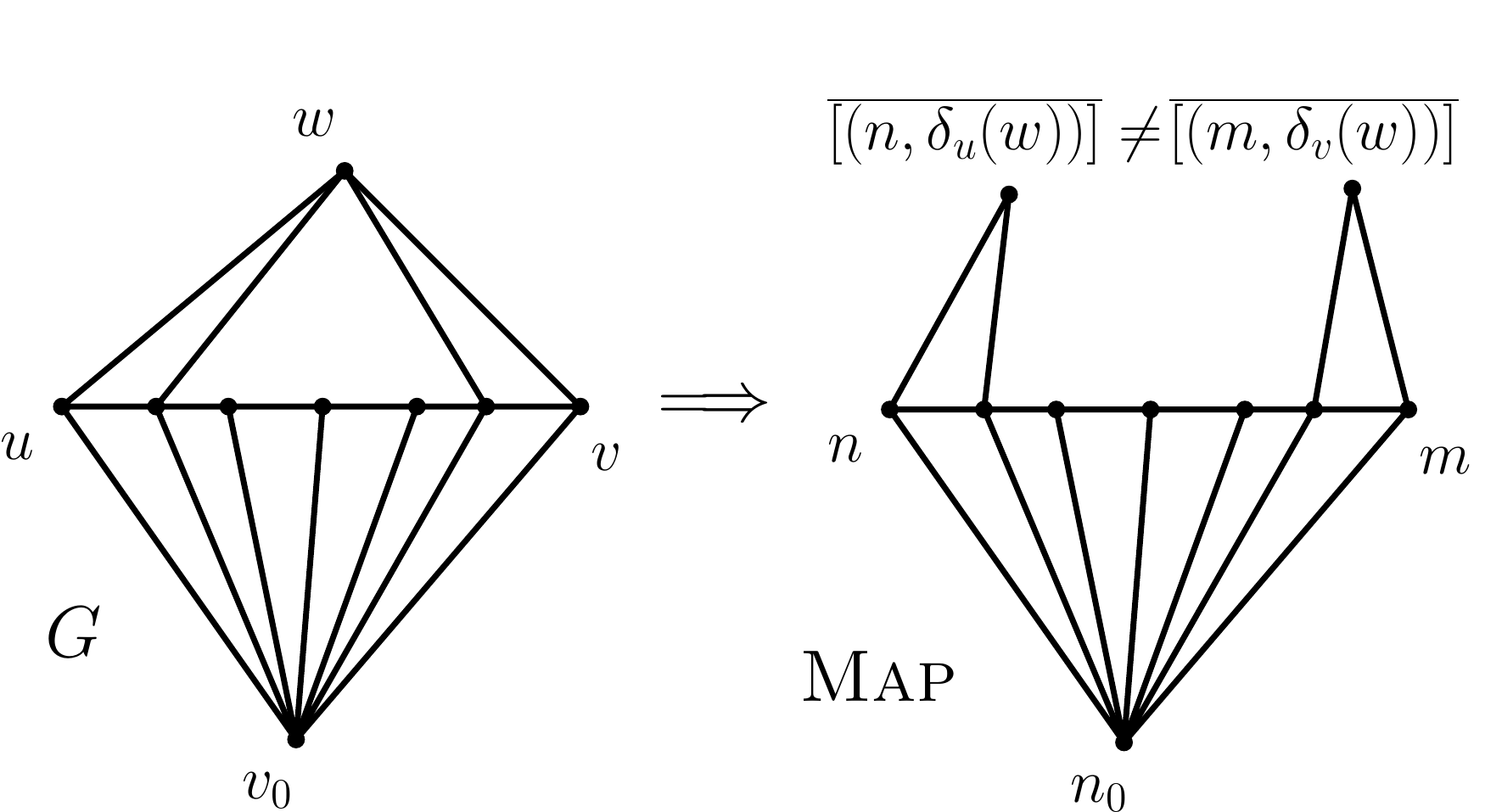}
    \caption{$IC(v_0)$ not satisfied on $u,v,w$}
    \label{img:notIC}
  \end{figure}  
  \subsection{Complexity}
%  \comment{R2: Expliquer que l'on empile les composantes dans un bonne ordre}
  Let $G$ be a Weetman graph and let $v$ be the homebase of the agent.
  Remark that the total number of clusters $|\CCC(G)|$ is bounded
  by $|V|$, $\sum_{C\in\CCC_G(v)} |V(C)|=|V(G)|$ .

  Note that since in a tree $T$, $E(T)=V(T)-1$, 
  exploring a tree takes at most $2|E(C)|\leq 2|V(C)|$ steps
  even if the agent has to come back to the roots. 
  Moreover, exploring a spanning tree in a graph is a 
  worst case since a cycle in the course of the agent implies 
  that at least one backtracking of the agent is avoided 
  (in front of the spanning tree exploration), which decreases the number of moves.

  Since every edge in a cluster is a horizontal edge, the agent 
  crosses at most $\sum_{C\in \CCC(G)} 2|V(C)|\leq 2|V|$ horizontal edges to explore every cluster.

  Horizontal edges in $C$ can be cross two times more. 
  % (see Fig. \ref{fig:worstcase})
  Once the agent goes to the vertex in $C$ which permits 
  it to reach the next cluster $C'$ to visit.
  Once when the agent has finished the exploration of the
  ''branch'' starting at $C'$ and backtracks to $C$ to go
  to the next cluster $C''$ to visit.
  Note that clusters have to be ordered in a way that
  the agent can go from one to another in a $\mathcal{O}(|V(G)|)$ moves, that is, 
  in a linear number of moves.
  Moreover, a DFS ordering ensures that once a ''branch'' is explored, 
  the agent never returns in this branch. 
  Consequently, we prove that every horizontal edges is crossed by the agent 
  a linear number of times in an execution.
%at most four times in an execution.
  % and thus, number of edges crossed to explores clusters is bounded by $\sum_{C\in \CCC(G)} 4|V(C)|=4|V|$ since it explores a spanning tree of the clusters.

  Since $\mapCCC(G)$ is a tree and the agent performs a DFS on the clusters,
  the agent crosses at most two vertical edges per clusters. 
  We get that the agent crosses a linear number of times 
% at most $2|\CCC_G|$
  vertical edges to go from one to another cluster in an execution. 

  Since the agent explores a spanning tree of $G$, 
  we get that the number of edges crossed is bounded 
  by the number of vertices explores.

  Consequently, since we prove that every edge crossed by the agent a linear number of times, 
% is crossed at most four times and the total number of edges crossed is bounded by $|V|$,
  the complexity of Algorithm~\ref{algo:weetman} is achieved in $\mathcal{O}(|V(G)|)$ moves.
%bounded by $4|V|$.

  % \parag{Complexity 3|V|}
  % First prove that complexity is bounded by  $2|\mapCCC|+2|V|$
  % Let $C_{max}$ (resp.  $C_{min}$) be the cluster $C$ maximising (resp. minimising) $|C|$ among clusters in $G$.
  % We get a first inequality, 
  % $$ \frac{|V|}{|C_{max}|}|\leq  |\mapCCC(G)|\leq  \frac{|V|}{|C_{min}|}|$$

  % Suppose that $|V|+|\mapCCC(G)|\leq \frac{3}{2}|V|$, we get that
  % $|\mapCCC(G)|\leq\frac{|V|}{2}$.

  % Since $|\mapCCC(G)|\geq\frac{|V|}{|C_{max}|}$, we get that,
  % $$\frac{|V|}{|C_{max}|}\leq \frac{|V|}{2},~\text{ and thus, ~ } |C_{max}|\geq 2$$

  % Consequently, if $|C_{max}|\geq 2$, we get that  $|V|+|\mapCCC(G)|\leq \frac{3}{2}|V|$ and thus,
  % $2 (|V|+|\mapCCC(G)|)\leq 3|V|$. The exploration takes at most $3|V|$ steps in this case.

  % If $|C_{max}|=1$, then $G$ is a tree, and the exploration is done 
  % in at most $2|V|$ since each node is visited two times. 
  % Fist time when it is explored and second when the agent 
  % performs a backtrack to reach another cluster.

  We get our final theorem,
  \begin{thm}
    Algorithm~\ref{algo:weetman} explores and computes a map of
    Weetman graphs with a number of moves that is linear
    in the size of the graph.
  \end{thm}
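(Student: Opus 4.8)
The plan is to assemble this final statement from the two threads already developed in the paper: correctness-with-termination on the one hand, and the move count on the other. For the correctness-and-termination part I would invoke nothing new: the theorem establishing that the agent halts and explores every Weetman graph already gives termination and full coverage, and together with Proposition~\ref{prop:weetman-SC} and Corollary~\ref{cor:cover} it gives that the returned $\mapG$ is a simplicial covering of $G$, in fact an isomorphic copy once both the Interval and Triangle conditions forbid any vertex or edge duplication. Hence the map is correct and the agent stops. The only genuinely new content to establish here is therefore the linear bound on the number of moves.

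For the complexity I would exploit the tree structure of clusters from Corollary~\ref{Weetman-tree-clusters}. The key structural facts are: (i) the clusters partition $V(G)$, so $\sum_{C} |V(C)| = |V(G)|$; (ii) the agent visits the tree of clusters in DFS order, one cluster per phase; and (iii) within a single phase the agent explores one cluster $C$ by a spanning-tree walk, which, since exploring a tree costs at most $2|V(C)|$ steps even with full backtracking, gives an intra-cluster cost bounded by $2|V(C)|$. Summing this over all clusters immediately yields $O(|V(G)|)$ moves spent \emph{inside} clusters, and I would note that the presence of extra (horizontal) edges inside a cluster only removes backtracking and so cannot increase this count.

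The step I expect to be the main obstacle is bounding the \emph{inter-cluster} navigation: between phases the agent must travel from the cluster it has just finished to the next one to be explored, possibly backtracking up the cluster tree, and these moves traverse already-explored territory. Here I would rely on Corollary~\ref{prop:convex}, which forces every path from the homebase $v_0$ to a cluster $C$ to pass through its unique ancestor $a(C)$; this convexity is precisely what makes the DFS navigation amenable to amortization, since it lets the agent descend into and return from each subtree through a controlled ``gateway'' rather than wandering. Concretely, I would charge each inter-cluster transition to the at most two vertical edges joining a cluster to its parent and argue that, in a DFS over the cluster tree, each such edge is crossed only a constant number of times (once on descent, once on return). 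The total navigation cost is then proportional to the number of cluster-tree edges, which is at most $|\CCC(G)| - 1 \le |V(G)|$. Combining the intra-cluster and inter-cluster bounds gives the claimed $O(|V(G)|)$ total number of moves, which together with the correctness and termination established above completes the proof.
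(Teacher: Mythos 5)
Your correctness-and-termination part is fine and follows the paper exactly: you invoke the halting theorem, Proposition~\ref{prop:weetman-SC} and Corollary~\ref{cor:cover}, which is precisely how the paper disposes of that half. Your intra-cluster accounting also coincides with the paper's: a spanning-tree walk of each cluster $C$ costs at most $2|V(C)|$ moves (extra horizontal edges only spare backtracking), and summing over the partition of $V(G)$ into clusters gives $O(|V(G)|)$ moves spent exploring. Likewise, your observation that each vertical (cluster-tree) edge is crossed at most twice in the DFS matches the paper.

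The gap is in your inter-cluster bound. A transition between phases is not a constant number of moves chargeable to the vertical edges alone: after finishing the subtree rooted at a child $C'$ of a cluster $C$, the agent re-enters $C$ at one gateway vertex and must then walk \emph{across} $C$, along horizontal edges of this already-explored cluster, to reach the gateway leading to the next child $C''$. These horizontal transit moves are invisible in your charging scheme, which counts only crossings of cluster-tree edges and hence bounds the number of vertical crossings, not the number of moves; ``total navigation cost proportional to the number of cluster-tree edges'' does not follow. If $C$ has $k$ children and the walks between consecutive gateways are not amortized, transit inside $C$ alone can cost up to $k\cdot|V(C)|$ moves, which is superlinear in general. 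The paper closes exactly this hole: it charges the transit to the horizontal edges of each cluster, arguing that each such edge is crossed only a constant number of \emph{additional} times --- once when the agent goes to the vertex of $C$ giving access to the next cluster to visit, and once when it backtracks through $C$ after the branch starting at $C'$ is finished --- which implicitly requires visiting the children of $C$ in an order consistent with a traversal of $C$ (so the gateways are swept rather than revisited), together with the DFS guarantee that a finished branch is never re-entered. Adding this amortization of horizontal transit (for instance, ordering children by the position of their gateways along the spanning-tree walk of $C$) is the missing step your proposal needs; Corollary~\ref{prop:convex}, which you cite, justifies that all access to a subtree goes through its ancestor cluster, but by itself it does not bound the length of the walks inside that ancestor.
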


  % <!-- Local IspellDict: english -->

  %%% Local Variables:
  %%% mode: latex
  %%% TeX-master: "weetman"
  %%% End:

\section{Conclusion}

% \begin{figure}
%   \centering
%   \includegraphics[height=7cm]{images/hierarchie}
%   \caption{A hierarchy of families of simply connected graphs.\label{hierarchy}} 
% \end{figure}

We have presented an algorithm that explores all chordal graphs in a
linear number of moves by a mobile agent using binoculars to see the
edges between nodes adjacent to its location. The main contribution is
that the exploration is fast and, contrary to previous works, the agent 
does not need to know the size or the diameter (or bounds)
to halt the exploration.
Using binoculars permits to not visit every edge while still
being assured of having seen all nodes, which is usually not possible
at all without binoculars without additional information about some
graph parameters.

We have actually used properties of chordal graphs that are verified
in a larger class of graphs : the Weetman graphs. This class of graphs
belongs to families of graphs that are defined by local metric
properties and whose clique complexes are simply connected, which is a
necessary condition for linear exploration without knowledge (see \cite{CGN15}). 

Known such families are the family of bridged graphs, or the family of
dismantlable/cop-win graphs that have found numerous application in
distributed computing. 
Chordal and bridged graphs are Weetman and
Algorithm~\ref{algo:weetman} also efficiently explores these graphs.
Dismantlable graphs are not necessarily Weetman, but
given that cop-win graphs can also be defined by an elimination order,
a very interesting open question would be to prove, or disprove, that
there is a linear Exploration algorithm for cop-win graphs.

%\vspace{-0.5cm}
\footnotesize
\bibliographystyle{alpha}
\bibliography{bib_bino}
%\end{document}

\newpage
\appendix
\newtheorem{lemannexe}{Lemma }
\labelformat{lemannexe}{Lemma #1}

\section{Appendix Section \label{sec:weetmanProp}}

% \parag{Simple Connectivity}
% A \emph{simply connected} complex is a complex whose paths with same
% endpoints are all homotopy equivalent. i.e., every loop can be
% reduced to a vertex by a finite sequence of elementary homotopies.  
% These complexes have lots of interesting combinatorial and
% topological properties. In the following, we rely on the fundamental result
% % \begin{prop}[\cite{Lyn77}]
% %  Let $K$ be a connected complex, then $K$ is isomorphic to its
% %  universal cover $\Ku$ if and only if it is simply connected.  
% % \end{prop}
% From the definition of Simply connecticity, We get the following
% proposition,  

% \begin{prop}
%  Let $K$ be a simply connected complex, then $K$ has only
%  contractible loops.
% \end{prop}

%  Note that we can prove by technical combinatorial tools that a
%  complex having contractible simple cycles has contractible loops and then, is simply connected.  
% Consequently, in the following we prove that a complex has
% only contractible cycles to prove that it is simply connected (Prop.
% \ref{lem:weetman_sc} for instance).
% \begin{prop}
%  Let $K$ be a simply connected complex, then $K$ has only
%  contractible simple cycles.
% \end{prop}

From the homotopy relation on cycles, we get the following Proposition.
\begin{prop}\label{notcontractible}
  Given a not contractible cycle $c$ in a graph $G$ and a cycle $c'$
  that is homotopic to $c$, then $c'$ is not contractible.
\end{prop}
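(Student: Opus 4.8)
The plan is to argue by contraposition, reducing the statement to two facts: that homotopy equivalence of loops is a genuine equivalence relation, and that contractibility of a loop is exactly homotopy equivalence to a trivial (single-vertex) loop. Writing $c \sim c'$ for ``$c$ and $c'$ are homotopic equivalent'', I would show that if $c'$ were contractible, then $c$ would be contractible as well, contradicting the hypothesis.

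First I would record that the relation ``related by an elementary homotopy'' is symmetric. Each of the three elementary moves (Contracting, Backtracking, Pushing across a 2-cell) can be performed in reverse, as already noted right after their definition (``we can either increase or decrease the length of the loop''). Together with its reflexivity, this makes the elementary relation reflexive and symmetric, so its transitive closure, which is precisely the homotopy equivalence used in the definition, is an equivalence relation on loops. In particular homotopy equivalence is symmetric and transitive, so from $c \sim c'$ we may freely use $c' \sim c$ and chain homotopies end to end.

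Next I would observe that, by the definitions given, a loop is contractible exactly when it is homotopy equivalent to a single vertex: a $k$-contractible loop is one joined to a trivial loop by a length-$k$ sequence of elementary homotopies, and ``contractible'' merely asserts that some finite such sequence exists. Hence ``$c'$ is contractible'' unfolds to the existence of a sequence of elementary homotopies $c' = d_1, d_2, \ldots, d_m = (v)$ ending at a one-vertex loop $(v)$.

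Finally I would concatenate the two sequences. Since $c \sim c'$, there is a sequence $c = c_1, \ldots, c_k = c'$ of loops each related to the next by an elementary homotopy; appending the contraction $c' = d_1, \ldots, d_m = (v)$ produces a single finite sequence of elementary homotopies from $c$ to $(v)$, exhibiting $c$ as contractible and contradicting the hypothesis. Therefore $c'$ cannot be contractible. The only step requiring genuine care is the symmetry of the elementary homotopy relation, since it is exactly what legitimizes chaining the homotopy $c \sim c'$ with a contraction of $c'$; the remainder is bookkeeping on the concatenation of finite sequences.
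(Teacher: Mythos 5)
Your proof is correct and follows exactly the argument the paper treats as immediate: the paper states this proposition without an explicit proof, as a direct consequence of the homotopy relation being an equivalence relation, which is precisely what you verify (symmetry of elementary moves, contractibility as homotopy equivalence to a one-vertex loop, and concatenation of finite sequences of elementary homotopies). Nothing is missing, so no further comparison is needed.
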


\begin{thm}\label{SC-tree-clusters}
  Clusters of a simply connected graph form a tree 
\end{thm}
\begin{proof}
  By contradiction, there is two vertices $x,y\in V(G)$ 
  such that $x,y\in S^k$ and there is no path $p'\subset S^k$ from $x$ to $y$.
  Moreover, to get a cycle of clusters, there
  is a path $q\subset G\setminus B^{k-1}$ from $y$ to $x$.
  Note that, W.l.o.g, there is also a path
  $p:x\to y \subset V(G)$ such that $p\setminus \{xy\}\subset B^{k-1}$.

  We denote by $C(x,y)$ a pair of paths $p:x\to y$ and $q:y\to x$ such that
  $p\setminus \{xy\}\subset B^{k-1}$, $q\subset B^{k+1}$ 
  and $\area(C(x,y))$ is minimal.

   Among every pair of vertices $x',y'$ not relied by a path $p'\subset S^k$,
   let $x,y$ be the couple minimising $\area(C(x,y))$.
   
   Let $a\in p\cap S^{k-1}$ be the vertex such that $ax\in E(p)$ and
   let $b\in q$ such that $bx\in E(q)$. Remark that since 
   $d(v_0,x)=k$, we get that $k\leq d(v_0,b) \leq k+1$

   Since $G$ is simply connected, there is a minimal disk diagram $(D,f)$ for the cycle $C(x,y)=pq$.
   By definition,  $f(\partial D)=C(x,y)$ and we denote by $\tx$ the pre images of $x$ in $D$. W.l.o.g, $f(\tx)=x\in V(G)$.
 
   Since $D$ is a planar triangulation, there is a path $\tts=\tv_1\tv_2...\tv_{\ell-1}\tv_\ell\subset V(D)$ such that for every $1\leq i< \ell$, $f(\tv_{1})=a, f(\tv_{\ell})=b$ and $\tx\tv_i\tv_{i+1}$ is a triangle in $D$.

   Since $d(v_0,a)=d(v_0,x)-1=k-1$ and $k\leq d(v_0,b) \leq k+1$, 
   for every $1\leq i\leq\ell$, $k-1\leq d(v_0,v_i) \leq k+1$.

   Consequently, among every $\tv_j\in \tts$, there is a unique 
   vertex $\tv_i\in \tts$ such that $d(v_0,v_i)=k$ and 
   for every $1\leq h< i$, $d(v_0,v_h)=k-1$.

   Remark that there is a couple of 
   paths $p':v_i\to y$ and  $q':y\to v_i$ such that
   $p'=p\setminus\{ax\}\cup\{av_2...v_i\}$, 
   $q'=q\cup \{xv_i\}$.
   Moreover, since $x$ and $y$ are not relied by a path in $S^k$ and 
   since $v_ix\in E(S^k)$, $v_i$ and $y$ are also not relied
   by a path in $S^k$.

   Since $p'\setminus \{v_iy\} \subset B^{k-1}$, and $q'\subset G\setminus B^{k-1}$, it remains to prove that $\area(p'q')$ is smaller than $\area(pq)$.

   Since $D'=D\setminus \{uv_h,\forall 1\leq h < i\}$ is a disk diagram for $p'q'$, it is easy to see that $\area(D')<\area(D)$ since for every $1\leq h < i$, the triangle $\tu\tv_h\tv_{h+1}$ does not appear in $D'$.

   We get a contradiciton on the choice of $x,y$.
   Consequently, there is no couple of vertices inside $S^k$, for every $k$, which are not relied by a path inside $S^k$. 
   We prove that there is no cycle of clusters in simply connected graph and thus, clusters form a tree.
\begin{figure}[t]
  \centering
  \includegraphics[height=3cm]{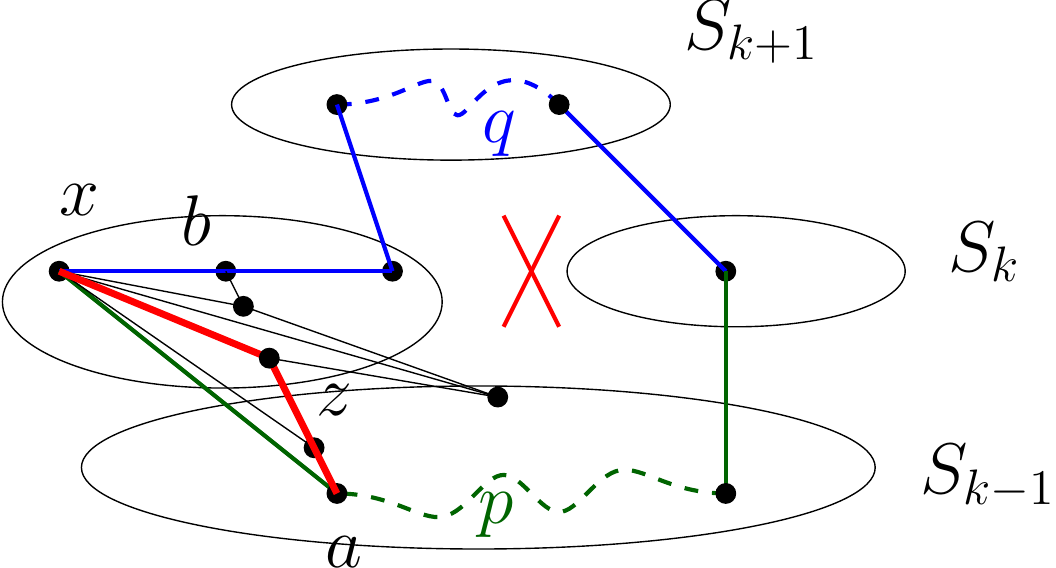}
\end{figure}

\end{proof}

% <!-- Local IspellDict: english -->
\end{document}